\def\ARXIV{1}

\newif\ifcameraready
\ifdefined\ARXIV
  \camerareadyfalse
\else
  \camerareadytrue
\fi

\ifcameraready
  \documentclass[sigconf,acmthm=false]{acmart}
\else
  \documentclass[sigconf,nonacm,acmthm=false]{acmart}
\fi

\ifcameraready
  \copyrightyear{2026}
  \acmYear{2026}
  \setcopyright{cc}
  \setcctype{by-nc-nd}
  \acmConference[CCS '26]{Proceedings of the 2026 ACM SIGSAC Conference
    on Computer and Communications Security}{November 15--19, 2026}
    {The Hague, Netherlands}
  \acmBooktitle{Proceedings of the 2026 ACM SIGSAC Conference on Computer
    and Communications Security (CCS '26), November 15--19, 2026,
    The Hague, Netherlands}
  \acmDOI{10.1145/3830454.3832749}
  \acmISBN{979-8-4007-2871-6/2026/11}
\else
  \renewcommand\footnotetextcopyrightpermission[1]{}
\fi

\usepackage{graphicx}
\graphicspath{ {images/} }
\usepackage[font=small,skip=0pt]{caption}
\usepackage[font=small,skip=0pt]{subcaption}

\usepackage{algorithm}
\usepackage[noend]{algpseudocode}
\usepackage{mdwmath,mathtools}
\usepackage{mdwtab}
\usepackage[utf8]{inputenc}
\usepackage[english]{babel}
\usepackage{breqn}
\usepackage{tabto}
\usepackage{stmaryrd}
\usepackage{hyperref}

\usepackage{amsmath,amsfonts}
\usepackage{paralist}
\usepackage[normalem]{ulem}
\usepackage[T1]{fontenc}
\usepackage{booktabs}
\usepackage{multirow}
\usepackage{threeparttable}
\usepackage{multicol}
\usepackage[symbol]{footmisc}
\usepackage{makecell}

\usepackage{color}
\usepackage{url}
\usepackage{bbm}
\usepackage{bm}
\usepackage{graphics}
\usepackage[inline]{enumitem}
\usepackage{tikz-cd}
\usepackage{pgfplots}
\usepgfplotslibrary{groupplots}

\usepackage{tabularx}
\usepackage{xparse}
\usepackage[capitalise]{cleveref}
\usepackage{aliascnt}

\NewDocumentCommand{\statcirc}{ O{#2} m }{%
    \begin{tikzpicture}
    \fill[#2] (0,0) circle (1.0ex); 
    \fill[#1] (0,0) -- (180:1ex) arc (180:0:1ex) -- cycle; 
    \end{tikzpicture}
}

\usepackage{pifont}

\algdef{SE}[EVENT]{Upon}{EndUpon}[1]{\textbf{upon}\ #1\ \algorithmicdo}{\algorithmicend\ \textbf{event}}%
\algtext*{EndUpon}

\usepackage[framemethod=tikz]{mdframed} 
\usepackage{tablefootnote} 
\makeatletter 
\AfterEndEnvironment{mdframed}{%
 \tfn@tablefootnoteprintout%
 \gdef\tfn@fnt{0}%
}
\makeatother 

\usepackage{pgfplots}
\usetikzlibrary{patterns}
\pgfplotsset{compat=1.18}

\usepackage{xspace}

\usepackage{tikz}
\usetikzlibrary{arrows.meta,calc,positioning}

\tikzset{
    timeline/.style={gray!55, line width=0.4pt},
    timegrid/.style={gray!55, densely dotted, line width=0.35pt},
    msgone/.style={-Stealth, line width=0.9pt, blue},
    msgtwo/.style={-Stealth, line width=0.9pt, dashed, green!55!black},
    msgthr/.style={-Stealth, line width=0.9pt, dotted, red!55!black},
    msglabel/.style={midway, sloped, above, font=\scriptsize, inner sep=1pt},
    decision/.style={circle, draw=black, fill=white, line width=1.1pt,
                    minimum size=7.5pt, inner sep=0pt},
}

\newcommand{\lbfigsize}{0.31\linewidth}

\newcommand{\ExecDiagram}[1]{%
    \begin{tikzpicture}[x=0.2\linewidth,y=0.86cm]
        \def\yA{3}
        \def\yB{2}
        \def\yC{1}
        \def\yD{0}
        \def\xmax{3.25}
        \def\xname{-1.2em}
        
        \foreach \name/\val/\y in {A/$\vone$/\yA,B/$\vzero$/\yB,C/$\vone$/\yC,D/$\vzero$/\yD}{
            \node[anchor=center,font=\small] at (\xname,\y) {\name(\val)};
            \draw[timeline] (0,\y) -- (\xmax,\y);
        }
        
        \draw[-Stealth, line width=0.9pt] (0,-0.35) -- (\xmax,-0.35)
        node[anchor=west,font=\small] {time};
        
        \foreach \x/\lab in {1/{$\dhat$},2/{$2\dhat$}}{
            \draw[timegrid] (\x,-0.35) -- (\x,3.35);
            \node[anchor=north,font=\scriptsize] at (\x,-1em) {\lab};
        }
    
        \newcommand{\MarkByz}[1]{%
            \draw[red!90!black, line width=0.1pt] (0,##1) -- (\xmax,##1);
            \node[
                draw=red!90!black,
                circle,
                line width=0.7pt,
                inner sep=0.8em,
            ] at (\xname,##1) {};
        }
        #1
    \end{tikzpicture}%
}

\AtEndPreamble{%

  \theoremstyle{acmplain}
  \newtheorem{theorem}{Theorem}[section]
  \crefname{theorem}{theorem}{theorems}
  \Crefname{theorem}{Theorem}{Theorems}

  \newaliascnt{conjecture}{theorem}
  
  \aliascntresetthe{conjecture}
  \crefname{conjecture}{conjecture}{conjectures}
  \Crefname{conjecture}{Conjecture}{Conjectures}

  \newaliascnt{proposition}{theorem}
  
  \aliascntresetthe{proposition}
  \crefname{proposition}{proposition}{propositions}
  \Crefname{proposition}{Proposition}{Propositions}

  \newaliascnt{lemma}{theorem}
  \newtheorem{lemma}[lemma]{Lemma}
  \aliascntresetthe{lemma}
  \crefname{lemma}{lemma}{lemmas}
  \Crefname{lemma}{Lemma}{Lemmas}

  \newaliascnt{corollary}{theorem}
  \newtheorem{corollary}[corollary]{Corollary}
  \aliascntresetthe{corollary}
  \crefname{corollary}{corollary}{corollaries}
  \Crefname{corollary}{Corollary}{Corollaries}

  \newaliascnt{observation}{theorem}
  \newtheorem{observation}[observation]{Observation}
  \aliascntresetthe{observation}
  \crefname{observation}{observation}{observations}
  \Crefname{observation}{Observation}{Observations}

  \theoremstyle{acmdefinition}

  \newaliascnt{definition}{theorem}
  \newtheorem{definition}[definition]{Definition}
  \aliascntresetthe{definition}
  \crefname{definition}{definition}{definitions}
  \Crefname{definition}{Definition}{Definitions}

  \newaliascnt{example}{theorem}
  
  \aliascntresetthe{example}
  \crefname{example}{example}{examples}
  \Crefname{example}{Example}{Examples}
}

\newcommand{\cH}{\mathcal{H}}

\newcommand{\cF}{\mathcal{F}}

\newcommand{\mcp}[1]{{\sf mcp}(#1)}
\newcommand{\mce}[1]{{\sf mce}(#1)}
\newcommand{\tpref}[2]{{\sf tpref}_{#1}(#2)}
\newcommand{\mcptag}{\sf p}
\newcommand{\mcetag}{\sf e}

\newcommand{\pc}{Prefix Consensus\xspace}
\newcommand{\pccon}{Consistent \pc}
\newcommand{\pcver}{Verifiable \pc}
\newcommand{\pcacc}{Accountable \pc}
\newcommand{\pcstr}{Strong \pc}

\newcommand{\fullversionref}[1]{%
  \ifcameraready
    the full version~\cite{xiang2026prefix}%
  \else
    \Cref{#1}%
  \fi
}
\newcommand{\pcverstr}{Verifiable \pcstr}
\newcommand{\pcaccstr}{Accountable \pcstr}

\newcommand{\smr}{Multi-slot Consensus\xspace}

\newcommand{\bc}{Binary Consensus\xspace}

\newcommand{\qc}{\ensuremath{QC}\xspace}
\newcommand{\qcone}{\ensuremath{QC_1}\xspace}
\newcommand{\qctwo}{\ensuremath{QC_2}\xspace}
\newcommand{\qcthree}{\ensuremath{QC_3}\xspace}

\newcommand{\voteone}{\text{\sf vote-1}\xspace}
\newcommand{\votetwo}{\text{\sf vote-2}\xspace}
\newcommand{\votethree}{\text{\sf vote-3}\xspace}

\newcommand{\x}{\ensuremath{x}\xspace}
\newcommand{\xmcp}{\ensuremath{\x_{\mcptag}}\xspace}
\newcommand{\xmce}{\ensuremath{\x_{\mcetag}}\xspace}
\newcommand{\xmcpmcp}{\ensuremath{\x_{\mcptag\mcptag}}\xspace}
\newcommand{\xmcpmce}{\ensuremath{\x_{\mcptag\mcetag}}\xspace}

\newcommand{\y}{\ensuremath{y}\xspace}

\newcommand{\prefix}{\ensuremath{P}\xspace}
\newcommand{\sig}{\ensuremath{sig}\xspace}
\newcommand{\ver}{\ensuremath{ver}\xspace}

\newcommand{\hash}{\ensuremath{H}\xspace}

\newcommand{\sigsize}{\ensuremath{\kappa_{s}}\xspace}
\newcommand{\hashsize}{\ensuremath{\kappa_{h}}\xspace}
\newcommand{\const}{\ensuremath{c}\xspace}

\newcommand{\newview}{\text{\sf new-view}\xspace}
\newcommand{\proposal}{\text{\sf proposal}\xspace}
\newcommand{\timer}{\text{\sf timer}\xspace}

\newcommand{\cert}{\ensuremath{\mathsf{cert}}\xspace}

\newcommand{\newcommit}{\text{\sf new-commit}\xspace}

\newcommand{\emptyview}{\text{\sf empty-view}\xspace}

\newcommand{\rank}{\ensuremath{\mathsf{rank}}\xspace}

\newcommand{\id}{\ensuremath{\mathsf{id}}\xspace}
\newcommand{\sid}{\ensuremath{\mathsf{sid}}\xspace}

\newcommand{\abs}[1]{\lvert #1 \rvert}

\begin{document}

\title{Prefix Consensus For Censorship Resistant BFT}


\author{Zhuolun Xiang}
\affiliation{%
  \institution{Aptos Labs}
  \city{Palo Alto}
  \country{United States}
}
\email{daniel@aptoslabs.com}

\author{Andrei Tonkikh}
\affiliation{%
  \institution{Aptos Labs}
  \city{Palo Alto}
  \country{United States}
}
\email{andrei.tonkikh@aptoslabs.com}

\author{Alexander Spiegelman}
\affiliation{%
  \institution{Aptos Labs}
  \city{Palo Alto}
  \country{United States}
}
\email{sasha@aptoslabs.com}

\renewcommand{\shortauthors}{Xiang et al.}


\begin{abstract}
Despite broad adoption of BFT consensus in blockchains, censorship resistance remains weak: existing designs offer limited inclusion guarantees and allow leaders to exclude transactions.

We address this with a new abstraction and protocol stack. We define \emph{Prefix Consensus}, where parties input vectors and output two consistent vectors $(v^{\sf low},v^{\sf high})$ that extend the maximum common prefix of honest inputs and satisfy $v_i^{\sf low}\preceq v_j^{\sf high}$ for all honest parties $i,j$. We show that Prefix Consensus is solvable asynchronously and establish tight round-complexity bounds.

We then define \emph{Strong Prefix Consensus}, which additionally requires agreement on the high output, and give a leaderless partially synchronous protocol. Using its accountable variant, we build a leaderless, multi-proposer, censorship-resistant BFT SMR protocol with amortized four-round commit latency under synchronized starts, while guaranteeing that after GST at most $f$ slots can be censored.

Finally, we connect Prefix Consensus to graded consensus, obtaining a matching lower bound and a 3-round protocol, and derive leaderless Binary Consensus with improved worst-case complexity.
\end{abstract}

\ifcameraready
\begin{CCSXML}
<ccs2012>
  <concept>
    <concept_id>10002978.10003006.10003013</concept_id>
    <concept_desc>Security and privacy~Distributed systems security</concept_desc>
    <concept_significance>500</concept_significance>
  </concept>
</ccs2012>
\end{CCSXML}
\ccsdesc[500]{Security and privacy~Distributed systems security}
\keywords{Byzantine fault tolerance, prefix consensus, censorship resistance,
  state machine replication, blockchain}
\fi


\maketitle

\ifcameraready\else
  \pagestyle{plain}
\fi

\section{Introduction}
\label{sec:intro}

Over the past decade, the rapid growth of blockchain systems has brought Byzantine Fault Tolerant (BFT) consensus to the forefront of both theoretical and practical research. On the theoretical side, the classical problem of
$n$ parties, each with an input, agreeing on a single common output has been extensively studied across a wide range of communication and adversarial models, yielding tight upper and lower bounds on round complexity, message complexity, and fault tolerance~\cite{lamport1982byzantine,dolev1985bounds,castro1999practical,kuznetsov2021revisiting}. On the practical side, modern BFT systems have demonstrated scalable throughput while achieving near optimal theoretical latency with optimal failure resilience~\cite{danezis2022narwhal,arun2024shoal++,giridharan2024autobahn}.     

Yet, despite the rapid growth of trading and other decentralized finance (DeFi) applications, censorship resistance has received comparatively little attention in the consensus literature. As a result, existing blockchain protocols provide no meaningful inclusion guarantees: a designated leader retains substantial discretion over which transactions are included in a block and which are excluded.

\paragraph{Leaderless vs multi-leader consensus}

Several recent works have explored leaderless~\cite{antoniadis2021leaderless,crain2018dbft} and multi-leader~\cite{stathakopoulou2019mir,keidar2021all,danezis2022narwhal,spiegelman2022bullshark,arun2024shoal++,babel2023mysticeti,giridharan2024autobahn,tonkikh2025raptr} consensus protocols in an effort to reduce or eliminate reliance on designated leaders. Informally, a leaderless protocol guarantees liveness even if an adversary can suspend an arbitrary single party at any round, implying that progress does not depend on the availability of a designated leader.
Multi-leader (sometimes referred to as multi-proposer) protocols, while not formally defined in the theoretical literature, are commonly used in practice to mitigate the performance bottlenecks of single-leader designs by allowing multiple parties to concurrently disseminate data.
Despite their names, however, to the best of our knowledge, neither leaderless nor multi-leader protocols inherently provide censorship resistance.

\subsection{Prefix Consensus}

We introduce \emph{Prefix Consensus}, a new consensus primitive designed 
to capture the minimal structure needed to reason about inclusion in adversarial settings,
in order to enable censorship-resistant BFT, which we believe is of independent interest. In Prefix Consensus, parties propose vectors of values and output low and high vectors such that every honest low extends the maximum common prefix of honest inputs and prefixes every honest high. Unlike classical consensus, parties need not agree on a single output.
More specifically, each honest party outputs two vectors, denoted $v^{\mathsf{low}}$ and $v^{\mathsf{high}}$, such that for any pair of honest parties $i,j$, the output $v^{\mathsf{high}}_j$ extends $v^{\mathsf{low}}_i$. Intuitively, $v^{\mathsf{low}}$ represents values that are safe to commit, while $v^{\mathsf{high}}$ represents values that are safe to extend in multi-slot consensus protocols.

Interestingly, and similarly to connected consensus~\cite{connected-cons-2024,connected-cons-2026}, Prefix Consensus is solvable in a purely asynchronous environment. This is possible because the primitive allows parties to output different values, as long as these values are mutually consistent and extend the common prefix of honest inputs. 

We formally define Prefix Consensus and study it in its own right, establishing tight upper and lower bounds on its round complexity under optimal resilience. In particular, we prove the following.

\begin{theorem}[Informal] \label{thm:main-informal}
Three communication rounds are necessary and sufficient to solve Prefix Consensus in an asynchronous Byzantine setting under optimal resilience.
\end{theorem}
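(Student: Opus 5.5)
The claim has two halves, and I will prove them separately: a 3-round asynchronous protocol tolerating $f<n/3$ Byzantine parties, and a lower bound showing that no 2-round protocol exists.

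\emph{Upper bound.} In each round, parties collect $n-f$ signed messages and form a certificate; rounds 2 and 3 forward the certificates. The operators are $\mcp{\cdot}$, the maximum common prefix, and $\mce{\cdot}$, the minimum common extension.

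In round 1, each party broadcasts its input vector. Upon receiving $n-f$ inputs, a party computes the longest vector $x$ that is a prefix of at least $f+1$ of the received inputs, and broadcasts $x$ together with the certificate. This $x$ extends the maximum common prefix of the honest inputs among those received, since some honest input supports every prefix counted.

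In round 2, upon receiving $n-f$ round-1 certificates, a party computes the certified prefix $\xmcp=\mcp{\cdot}$ of the received $x$ values and broadcasts it. Quorum intersection will show that all certified round-1 values are mutually consistent. The reason is that each is supported by $f+1$ parties and therefore by at least one honest party. With $n-f$ out of $n$ parties reporting, any two parties' quorums share an honest member, so their sets of candidates overlap.

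In round 3, upon receiving $n-f$ round-2 values, a party outputs $v^{\sf low}=\mcp{\cdot}$ and $v^{\sf high}=\mce{\cdot}$ of those values. The round-2 values are pairwise consistent, so the $\mce{}$ exists.

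The key step is to verify $v^{\sf low}_i\preceq v^{\sf high}_j$ for any two honest parties $i$ and $j$. The quorums of $i$ and $j$ in round 3 intersect in an honest party $k$. Party $k$'s round-2 value $\xmcp$ is then a member of both sets, so $v^{\sf low}_i\preceq \xmcp\preceq v^{\sf high}_j$.

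Validity (the outputs extend the maximum common prefix of honest inputs) then follows by induction over the rounds. The main care needed is that Byzantine vectors cannot shorten anything below the honest maximum common prefix, because each certificate contains at least $n-2f\ge f+1$ honest entries. I expect the consistency argument in round 2 — that all certified $x$ values are pairwise prefix-comparable — to be the subtle point. If it fails, I will strengthen round 1 so that $x$ is the longest prefix supported by $n-2f$ parties, and invoke the standard fact that two sets of size $n-f$ overlap in at least $f+1$ parties.

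\emph{Lower bound.} Suppose toward contradiction that some protocol solves Prefix Consensus in two rounds. I will reduce to a known impossibility: 2-round graded agreement, equivalently 2-round crusader agreement with external validity, is impossible asynchronously with $f\ge1$ and $n\le 3f$ with $n$ near $3f+1$ optimal. Take binary-style inputs: each party proposes either the vector $(0)$ or the vector $(1)$. Map the outputs as follows.
\begin{itemize}
\item If $v^{\sf low}$ is nonempty, output grade 1 with the value $v^{\sf low}[1]$.
\item Else, if $v^{\sf high}$ is nonempty, output grade 0 with the value $v^{\sf high}[1]$.
\item Otherwise output $\bot$.
\end{itemize}
The requirement $v^{\sf low}_i\preceq v^{\sf high}_j$ gives graded consistency, and the upper-bound property gives validity. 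The main obstacle is this reduction: I must check that the Prefix Consensus properties really yield the graded-consensus properties for which a 2-round impossibility is known. If that reduction is not available, I will instead give a direct indistinguishability argument with four parties A, B, C, D and $f=1$, using execution chains in which messages arrive with delay $\hat d$ or $2\hat d$ (matching the paper's diagram setup). Flipping one party's input at a time, the chain would force $v^{\sf low}$ to be nonempty in one world but incompatible with a $v^{\sf high}$ in an adjacent world that the party cannot distinguish.
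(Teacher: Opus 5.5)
Your upper bound has the same architecture as the paper's \Cref{alg:pc}. At $n=3f+1$ your round-1 threshold $f+1$ even coincides with the paper's $\lfloor (n-f)/2\rfloor+1$; you count support for whole prefixes rather than per coordinate, and Validity holds either way because every quorum contains $n-2f\ge f+1$ honest votes. However, the step you single out as the crux is false: certified round-1 values need not be consistent. Take $n=4$, $f=1$, and all parties honest with inputs $[0],[0],[1],[1]$. The quorum of the first three certifies $[0]$, and the quorum of the last three certifies $[1]$. Your fallback does not help, since $n-2f=f+1$ at $n=3f+1$. Two support sets of size $n-2f$ drawn from different quorums are forced to share an honest party only when $n-4f\ge f+1$, i.e.\ $n\ge 5f+1$, which is the regime of the paper's 2-round protocol. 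You never need this claim, though. Round-2 values are pairwise consistent directly, because two round-2 quorums share an honest party whose single $x$ upper-bounds both maximum common prefixes (\Cref{lem:pc:1}). With that substitution the upper bound goes through.

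The lower bound is where the real gap is. The result you want to cite is not available. Impossibility for $n\le 3f$ is just the resilience bound and says nothing about rounds. The paper also states that round lower bounds for graded consensus were previously unknown, and it derives them as a corollary of \Cref{thm:lower-bound}, so you cannot lean on them here.

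The reduction also fails as stated. Prefix Consensus does not require different honest highs to be consistent, so two honest parties may output $([\,],[0])$ and $([\,],[1])$. Your map sends these to two distinct non-$\bot$ values, violating graded agreement; the paper adds a separate Consistency property precisely for its graded-consensus reduction. Nor is crusader agreement equivalent. The lows alone give crusader agreement, and binary crusader agreement with signatures is solvable in two rounds at $n=3f+1$: vote a value with $f+1$ signed supporters among $n-f$ inputs, then output $b$ if all $n-f$ received certified votes are $b$ and $\bot$ otherwise. So any lower bound must exploit the cross-party coupling $v_i^{\sf low}\preceq v_j^{\sf high}$.

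Your fallback is only a sketch. Because the problem is solvable in three rounds, the argument must use the two-round limit in an essential way, and flipping inputs does not show how. What is missing is the concrete construction. The paper splits the parties into $A,B,C,D$ with $|A|=|B|=|C|=f$ and $|D|=n-3f\le f$, and proceeds in four steps:
\begin{enumerate}
\item[(i)] It proves a validity lemma (\Cref{lem:lb:validity}): a group facing at most $f$ dissenters, whose Byzantine parties are silent, must output its own input as low, since the silent parties might be honest but slow.
\item[(ii)] It obtains $v^{\sf low}_B=v_0$ in $\rho_0$, where $C$ is silent.
\item[(iii)] It obtains $v^{\sf high}_C=v_1$ in $\rho_1$, where $B$ talks only to $D$ and $C\to D$ is stretched to $1.2\hat\delta$. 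This is done indirectly, through the Prefix property in an execution where a Byzantine $D$ hides $B$ from $A$, forcing $A$'s low to be $v_1$.
\item[(iv)] It glues these in $\rho^*$, with a Byzantine $A$ equivocating between the two worlds.
\end{enumerate}
The $1.2\hat\delta$ stretch is legal because latency is measured against $\delta(\rho)$. It keeps $D$'s relay of $C$'s messages from reaching $B$ before $2\hat\delta$, while leaving $C$ until $2.4\hat\delta$ to output.
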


Intriguingly, our results also imply (previously unknown) tight lower and upper bounds on the latency of Graded Consensus~\cite{connected-cons-2024,connected-cons-2026} and, to the best of our knowledge, improve the state-of-the-art algorithms by more than a factor of 2 in latency. We discuss it in detail in \fullversionref{sec:discussion:graded}.

\paragraph{\pcstr}
To use \pc as a consensus building block, we introduce \emph{Strong Prefix Consensus}, which strengthens \pc by requiring agreement on the high output. Our \pcstr construction is leaderless under partial synchrony and builds modularly on a verifiable variant of \pc: view~1 runs \pc on the instance input to prevent censorship, while later views run \pc on certificate vectors until some view commits a non-empty value that fixes a unique view-1 high output.
We also define accountable variants of Prefix Consensus and Strong Prefix Consensus, whose verifiable truncated outputs identify the responsible Byzantine proposer.

\subsection{Censorship Resistance}

Recent work~\cite{abraham2025latency} formally defines a strong notion of censorship resistance for \emph{single-shot} BFT consensus. At a high level, a single-shot protocol is censorship resistant if the agreed output contains the inputs of all honest parties. Abraham et al.~\cite{abraham2025latency} further prove that achieving this strong notion of censorship resistance requires at least five communication rounds (even assuming synchronized local clocks).

In this paper, we generalize and slightly relax this notion to the \emph{multi-slot} setting in order to obtain more efficient protocols. Informally, we say that a multi-slot consensus protocol achieves \emph{$c$-censorship resistance} if, after the network becomes synchronous (GST), at most $c$ slots are censored—i.e., at most $c$ slots fail to include the inputs of some honest parties.

Our multi-slot BFT protocol achieves \emph{$f$-censorship resistance}, where $f$ is the upper bound on the number of Byzantine parties. Intuitively, each Byzantine party can censor honest parties at most once before being detected and demoted. Since $f$ is fixed while the execution is long-lived, this bounded censorship is negligible in blockchain settings. 
This relaxation yields amortized four-round commit latency under synchronized starts.
In contrast to recent works~\cite{garimidi2025multiple,abraham2025latency} which still rely on designated leaders, our protocol is leaderless~\cite{antoniadis2021leaderless}.

\subsection{Technical Overview}

\paragraph{Prefix Consensus.}
Our Prefix Consensus protocol is a deterministic three-round asynchronous protocol under optimal resilience.
Parties exchange signed information to certify a large set of mutually consistent prefixes, and output a pair $(v^{\sf low},v^{\sf high})$ where $v^{\sf low}$ is safe to commit and $v^{\sf high}$ is safe to extend.
We also show this is tight: three rounds are necessary to solve Prefix Consensus in the asynchronous Byzantine setting for $n=3f+1$.

\paragraph{From Prefix Consensus to Strong Prefix Consensus.}
To obtain a consensus-like primitive, we strengthen Prefix Consensus by requiring agreement on the \emph{high} output.
Our construction is leaderless under partial synchrony and builds modularly on verifiable Prefix Consensus instances.
View~1 runs Prefix Consensus on the actual input and immediately commits the resulting low value.
Later views run Prefix Consensus on vectors of certificates derived from prior views; any committed \emph{non-empty} certificate vector induces a parent pointer chain back to view~1, which fixes a unique view-1 high value.

\paragraph{From Strong Prefix Consensus to Multi-slot Consensus.}
The Multi-slot Consensus protocol runs one Strong Prefix Consensus instance per slot, sequentially.
In each slot, parties broadcast their proposals, order received proposals into a length-$n$ hash vector using the current ranking (where ranking is some deterministic total ordering of all parties' indexes), and invoke Strong Prefix Consensus.
The low output is committed immediately; the agreed high output finalizes the slot and advances to the next slot.
Ranking updates use both the agreed high proof and truncated low proofs reported in the next slot, ensuring that all honest parties apply the same accountable demotions.



\paragraph{$f$-Censorship Resistance.}
In the \smr protocol, a Byzantine party may attempt to censor honest parties by sending different values to different honest parties or only sending its value to some of the honest parties. In this case, Strong Prefix Consensus (and underlying Prefix Consensus) guarantees that values appearing before the Byzantine party in the predefined ranking (called \emph{\rank}) are committed. Values after the equivocating party are excluded, making the malicious party explicit.
Censorship resistance is achieved by dynamically updating the ranking across slots. Agreed high evidence demotes a proposer responsible for censorship, while carried low evidence demotes a proposer responsible for delaying early commitment. Since the evidence is agreed before each update, all honest parties maintain the same ranking, and each Byzantine party can censor at most one slot.
In high-performance blockchains~\cite{aptos,sui,solana} producing multiple blocks per second, the fraction of affected blocks is negligible.

The key insight into why \pc allows us to solve Censorship Resistance is that we never need to ``ban'' any proposer.
We are able to provide a degree of isolation to the proposers simply by reordering them in the ranking.
In partially synchronous networks, it is impossible to determine with certainty if a node indeed misbehaved or if it simply failed to disseminate its message in time due to a temporary network asynchrony.
Thus, banning proposers would inevitably fail to achieve Censorship Resistance as any banned proposer must eventually be given the benefit of the doubt and unbanned or we would risk keeping an honest proposer banned forever.
Prefix Consensus resolves this tension.

\paragraph{Leaderless.} 
To satisfy leaderless termination, we allow special placeholder values (denoted $\bot$) in the vector inputs to (Strong) Prefix Consensus. At the beginning of each slot, honest parties wait for a bounded time to collect proposals; missing values are filled with $\bot$. From the perspective of (Strong) Prefix Consensus, $\bot$ is a valid value, and as long as all honest parties agree on the placement of $\bot$, the protocol commits the entire vector.
This prevents progress from depending on any single party’s timely participation, satisfying the leaderless property\cite{antoniadis2021leaderless}---No individual party can stall the slot by withholding its proposal.





\subsection{Roadmap}
We present a three-round asynchronous Prefix Consensus protocol for $n\ge 3f+1$ in \Cref{sec:pc}, and prove in \Cref{sec:lower-bound} that three rounds are necessary when $n\le 4f$.
We then introduce verifiable \pc in \Cref{sec:pcver}, define {\em \pcstr}, and give a leaderless partially synchronous construction in \Cref{sec:pcstr-leaderless}.
\Cref{sec:accountability} defines accountable variants of \pcver and
\pcstr.
Using accountable \pcstr as a black-box primitive, \Cref{sec:smr}
constructs a multi-slot BFT protocol achieving $f$-censorship resistance.
\ifcameraready
Additional analyses and results, including a communication-optimized
\pc implementation, a 2-round protocol for $n\ge 5f+1$, and the
connection to Graded Consensus, appear in the full
version~\cite{xiang2026prefix}.
\else
Additional results appear in the appendix, including the analyses of
\pcstr and \smr, a communication-optimized \pc implementation, a
2-round \pc protocol for $n\ge 5f+1$, and the connection to Graded
Consensus.
\fi

\section{Problem Definition}
\label{sec:def}

\subsection{Notation and Preliminary}
\label{sec:def:notation}

{\em Prefix.}
For a vector $\x$, let $\x[i]$ denote its $i$-th element and $\x[i:j]$ the subvector from $i$ to $j$.
We write $\y\preceq\x$ (equivalently $\x\succeq\y$) if $\abs{\y}\le \abs{\x}$ and $\y[i]=\x[i]$ for all $1\le i\le \abs{\y}$; in this case $\y$ is a \emph{prefix} of $\x$ and $\x$ an \emph{extension} of $\y$.
If additionally $\y\neq \x$, we write $\y\prec\x$ (or $\x\succ\y$).
Two vectors are \emph{consistent}, written $\x\sim\y$, if $\x\preceq \y$ or $\y\preceq \x$; otherwise they are \emph{conflicting}, written $\x\not\sim\y$.
W.l.o.g., all input-vector elements have the same size, denoted by $\const$.
For a set $S$ of vectors, define {\em maximum common prefix} $\mcp{S}=\max\{\x:\forall s\in S,\ \x\preceq s\}$ and {\em minimum common extension} $\mce{S}=\min\{\x:\forall s\in S,\ s\preceq \x\}$; if $\mce{S}$ does not exist, set $\mce{S}=\bot$.

For a multiset $S$ of equal-length vectors and a threshold
$t>|S|/2$, define the \emph{threshold prefix}
$\tpref{t}{S}=[z_1,\ldots,z_\ell]$, where $\ell$ is maximal such that,
for every $k\le\ell$,
\[
|\{v\in S:v[k]=z_k\}|\ge t.
\]
The value $z_k$ is unique because $2t>|S|$.
Equivalently, the threshold prefix stops at the first coordinate where no
value has $t$ supporters.

{\em Model.}
We assume the standard BFT model~\cite{lamport1982byzantine} with $n\ge 3f+1$ parties, of which at most $f$ may be Byzantine. A party is \emph{honest} if it is not Byzantine, and $\cH$ denotes the set of honest parties.
Channels are reliable and authenticated.
We consider both asynchronous and partially synchronous communication: in the asynchronous model, message delays are unbounded; in the partially synchronous model, there exists a Global Stabilization Time (GST) after which all messages are delivered within a known bound~$\Delta$.
Throughout the paper, GST denotes an \emph{effective} stabilization time:
it may be chosen later than the network's stabilization time to absorb any
finite protocol-specific stabilization and evidence-cleanup period.
We use $\delta$ to denote the actual message delay that is unknown to any party.
We refer to a protocol instance (or SMR slot) as having a
\emph{synchronized start} when all honest parties begin that instance (or
enter that slot) at the same time and GST has already occurred.

{\em Signatures.}
Let $\sig_i(x)$ denote party $i$'s signature on message $x$. Let $\ver_i(x,\sigma)$ denote the corresponding verification algorithm.
We assume domain separation: each message is signed together with a unique type tag (e.g., {\sf VOTE-1} for \voteone and {\sf VOTE-2} for \votetwo). For brevity, we omit tags in the presentation.

{\em Leaderless Termination.}
We use the notion of \emph{Leaderless Termination} from Antoniadis et al.~\cite{antoniadis2021leaderless}. 
The formal definition appears in the original paper; we give an informal description for intuition.

\begin{definition}[Leaderless Termination (informal)~\cite{antoniadis2021leaderless}]\label{def:leaderless}
A consensus protocol satisfies \emph{Leaderless Termination} if, after the network becomes synchronous, it is guaranteed to terminate even if an adversary can suspend up to $k\le f$ processes per round, potentially choosing different processes in different rounds, while up to $f-k$ other processes are Byzantine.
We focus on $k=1$: one suspended process per round and up to $f-1$ Byzantine processes.
\end{definition}

Clearly, a leader-based protocol does not satisfy Leaderless Termination since the adversary can suspend a party the round it acts as the leader.

\subsection{\pc}
\label{sec:def:pc}

We now define the core problem in the \pc family. 
For readability, we defer the definitions of the other variants 
to the sections where they are first used.

\begin{definition}[\pc]\label{def:pc}
    Each party $i$ has an input vector $v_{i}^{\sf in}$ of length $L$ and outputs a pair $(v_{i}^{\sf low}, v_{i}^{\sf high})$ of vectors of length at most $L$.
    A protocol solves \pc if it satisfies the following properties:
    \begin{itemize}
        \item \textbf{Prefix.} $v_{i}^{\sf low}\preceq v_{j}^{\sf high}$ for any honest parties $i,j\in\cH$.
        \item \textbf{Termination.} Every honest party eventually outputs.
        \item \textbf{Validity.} $\mcp{\{v_{h}^{\sf in}\}_{h\in \cH}}\preceq v_{i}^{\sf low}$ for any honest party $i\in\cH$.
    \end{itemize}
\end{definition}

We sometimes refer to $v_i^{\sf low}$ and $v_i^{\sf high}$ as party $i$'s \emph{low} and \emph{high} outputs (or values), or simply \emph{low} and \emph{high}.

\pc is strictly weaker than consensus: it can be solved deterministically under asynchrony without randomness (\Cref{sec:pc}), whereas this is impossible for consensus~\cite{fischer1985impossibility}.  
Prefix consensus is also closely related to Graded Consensus~\cite{connected-cons-2024}, as discussed in \fullversionref{sec:discussion:graded}.

The Prefix property immediately implies prefix consistency among the low outputs:
\begin{corollary}\label{cor:pc:1}
    $v_{i}^{\sf low}\sim v_{j}^{\sf low}$ for any $i,j\in \cH$.
\end{corollary}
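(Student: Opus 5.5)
The plan is to go through a single honest high output and use the fact that two prefixes of one vector are always consistent. Fix honest parties $i,j\in\cH$. Since $\cH$ is nonempty (as $n\ge 3f+1>f$), we may pick any honest party $k\in\cH$; for instance $k=i$ works. Applying the Prefix property of \Cref{def:pc} to the pairs $(i,k)$ and $(j,k)$ gives
\[
v_i^{\sf low}\preceq v_k^{\sf high}
\quad\text{and}\quad
v_j^{\sf low}\preceq v_k^{\sf high}.
\]

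The key step is the elementary fact that any two prefixes of a common vector $z$ are consistent. Suppose $x\preceq z$ and $y\preceq z$, and assume without loss of generality that $\abs{x}\le\abs{y}$. For every $1\le t\le\abs{x}$ we have $t\le\abs{y}$, so by the definition of $\preceq$ in \Cref{sec:def:notation},
\[
x[t]=z[t]=y[t].
\]
Hence $x\preceq y$, and therefore $x\sim y$.

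Applying this fact with $z=v_k^{\sf high}$, $x=v_i^{\sf low}$ and $y=v_j^{\sf low}$ yields $v_i^{\sf low}\sim v_j^{\sf low}$, which is the claim.

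There is no real obstacle here. The only points to check are that an honest $k$ exists and that both lows are prefixes of the \emph{same} high, which the Prefix property guarantees because it quantifies over all honest pairs.
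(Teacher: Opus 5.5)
Your proof is correct and is the paper's own argument: the Prefix property makes both lows prefixes of a common honest high $v_k^{\sf high}$, so they are consistent. Your explicit check that two prefixes of the same vector are consistent, and your choice $k=i$ to guarantee an honest $k$ exists, just spell out steps the paper leaves implicit.
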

\begin{proof}
By the Prefix property, for any honest $k\in\cH$ we have $v_{i}^{\sf low}\preceq v_{k}^{\sf high}$ and $v_{j}^{\sf low}\preceq v_{k}^{\sf high}$. Therefore $v_{i}^{\sf low}\sim v_{j}^{\sf low}$.
\end{proof}

It is also straightforward to show that $n=3f+1$ is the tight resilience threshold for \pc in the asynchronous Byzantine setting, via a standard indistinguishability/partitioning argument (as in the classic optimal-resilience proof for Byzantine agreement~\cite{lamport1982byzantine}). We omit the proof for brevity.

\begin{theorem}
    It is impossible to solve \pc in an asynchronous setting when $f\geq n/3$ parties can be Byzantine.
\end{theorem}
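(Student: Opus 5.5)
We reduce to the case $n=3$, $f=1$ by the standard simulation argument, and then prove the small case by an indistinguishability argument, as in the classical optimal-resilience proof of~\cite{lamport1982byzantine}.

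\textbf{Reduction to $n=3$, $f=1$.} Suppose a protocol $\Pi$ solves \pc for $n$ parties with $f\ge n/3$ faults. Partition the $n$ parties into three groups $G_A,G_B,G_C$, each of size at most $f$. Three parties $A,B,C$ each simulate one group, with every simulated party given the simulator's input vector. If one simulator is Byzantine, at most $f$ simulated parties are Byzantine, so the simulated execution is a legal execution of $\Pi$. Each honest simulator outputs the output of any one of its honest simulated parties. The Prefix and Termination properties transfer directly. Validity also transfers, because the honest simulated inputs are exactly the honest simulators' inputs. Hence it suffices to rule out $n=3$, $f=1$.

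\textbf{Indistinguishability for $n=3$, $f=1$.} We use vectors of length $L=1$. Take $A$ with input $[0]$ and $C$ with input $[1]$; $B$ will be played by the adversary in two ways. Asynchrony lets us delay messages between $A$ and $C$ arbitrarily, so we consider three executions.

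\begin{itemize}
\item In execution $E_1$, $C$ is Byzantine and silent, and $A,B$ are honest with input $[0]$. By Validity, both output low $[0]$. By Termination, this happens in finite time, so messages to and from $C$ can be withheld until then.
\item In execution $E_2$, $A$ is Byzantine and silent, and $B,C$ are honest with input $[1]$. By the same reasoning, $C$ outputs low $[1]$.
\item In execution $E_3$, $B$ is Byzantine, and $A,C$ are honest with inputs $[0]$ and $[1]$. $B$ behaves toward $A$ exactly as in $E_1$ and toward $C$ exactly as in $E_2$, which it can do by running two copies of the honest code. All messages between $A$ and $C$ are delayed until after both decide.
\end{itemize}

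$A$ cannot distinguish $E_3$ from $E_1$, so it outputs $v_A^{\sf low}=[0]$. $C$ cannot distinguish $E_3$ from $E_2$, so it outputs $v_C^{\sf low}=[1]$. These low outputs conflict, which violates \Cref{cor:pc:1}, a consequence of the Prefix property.

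\textbf{Main obstacle.} The key point is that the decisions in $E_1$ and $E_2$ occur at finite times. This lets us postpone the $A$–$C$ messages in $E_3$ past both decision times without violating the asynchronous model, whose only requirement is eventual delivery. The rest is a routine check that the simulated views coincide. In particular, in $E_1$ party $A$ receives nothing from $C$ before deciding, and $B$'s messages to $A$ there depend only on messages $B$ receives from $A$. Byzantine $B$ in $E_3$ can therefore faithfully reproduce them.
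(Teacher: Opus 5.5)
Your proof is correct and is the argument the paper has in mind. The paper omits the proof and cites only the standard partitioning/indistinguishability argument behind the classical optimal-resilience bound for Byzantine agreement. That is exactly your reduction to $n=3$, $f=1$ followed by the executions $E_1,E_2,E_3$, and your observation that $B$ never has to relay a message signed by the silent party keeps it sound with signatures.

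One small detail: the simulation needs three nonempty groups, so it only covers $n\ge 3$. For $n=2$, run $E_1,E_2,E_3$ directly on the two parties, with the role of $B$ empty. For $n=1$ the problem is trivially solvable, so the statement implicitly assumes $n\ge 2$.
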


Now we define \pcstr, a natural extension of \pc that yields consensus on the high vector. 
Our \pc solution is a direct building block for \smr.

\begin{definition}[\pcstr]\label{def:pcstr}
    \pcstr satisfies all conditions of \pc (\Cref{def:pc}) and additionally ensures: 
    \begin{itemize}
        \item \textbf{Agreement}: $v_{i}^{\sf high} = v_{j}^{\sf high}$ for any honest parties $i,j\in\cH$.
    \end{itemize}
\end{definition}

\paragraph{Relations to other consensus primitives.}
\label{sec:def:pcstr:relation}
Observe that \pcstr can implement \bc as a special case: each party invokes \pcstr on a length-one vector, using input $[0]$ or $[1]$, and outputs the sole element of the agreed high value if it is non-empty, and a fixed default bit otherwise. Consequently, any lower bound for \bc immediately carries over to \pcstr.

Finally, since \pc is strictly weaker than (binary) consensus, it is also strictly weaker than \pcstr.

\subsection{\smr }
\label{sec:def:smr}

\begin{definition}[\smr]\label{def:smr}
\smr proceeds in discrete slots $s\in\mathbb{N}$. In each slot $s$, every party $i$ has an input value $b^{\sf in}_{i,s}$ and eventually commits an output vector $\mathbf{b}^{\sf out}_{i,s}$. A protocol solves \smr if, for every slot $s$, it satisfies:
\begin{itemize}
    \item \textbf{Agreement.} For any two honest parties $i,j\in\cH$, $\mathbf{b}^{\sf out}_{i,s}=\mathbf{b}^{\sf out}_{j,s}$.
    \item \textbf{Termination.} Every honest party $i\in\cH$ eventually commits $\mathbf{b}^{\sf out}_{i,s}$.
\end{itemize}
\end{definition}

In blockchain applications one typically also requires \emph{external validity}~\cite{cachin2001secure}, meaning that every element of each committed output vector satisfies a publicly known predicate. This can be enforced by requiring inputs to be externally valid and by locally discarding invalid proposals. For brevity, we omit it from the formal definition and focus on Agreement and Termination.

\subsection{Censorship Resistance}
\label{sec:def:censor}

\begin{definition}[Censorship Resistance]\label{def:censor}
Fix a slot $s$ that occurs after GST. A \smr protocol is \emph{censorship resistant in slot $s$} if every honest party’s input for slot $s$ appears in the committed output of every honest party, i.e.,
\[
    b^{\sf in}_{i,s} \in \mathbf{b}^{\sf out}_{j,s}
    \quad \text{for all } i,j \in \cH.
\]
If this condition fails, we say that slot $s$ is \emph{censored}.
More generally, a \smr protocol achieves \emph{$c$-Censorship Resistance} if, after GST, at most $c$ slots are censored.
\end{definition}

We treat each slot input as tagged by its slot and proposer, i.e.,
$b^{\sf in}_{i,s}\equiv(s,i,\mathsf{payload}_{i,s})$; thus identical
payloads from different proposers are distinct for censorship, although
these tags may be stripped before application execution.


{\em Comparison to~\cite{abraham2025latency,garimidi2025multiple}.}
Our definition is a bounded multi-slot relaxation of the single-shot notion
of~\cite{abraham2025latency}: lifting it slot-by-slot gives
$0$-Censorship Resistance, whereas we allow up to $c$ censored post-GST slots
while still ruling out infinitely many inclusion failures.
Unlike the all-or-nothing notion of~\cite{garimidi2025multiple}, which allows
empty slots, our definition counts any slot omitting an honest input as
censored. See~\Cref{sec:related} for a detailed comparison.

Any \smr protocol that satisfies $c$-Censorship Resistance for some finite $c$ can be used to implement standard BFT state machine replication~\cite{schneider1990implementing} with censorship resistance. Concretely, clients simply (re)submit transactions until they are included, and if necessary submit to multiple parties (or switch recipients) to ensure that at least one honest party receives the transaction. 
Since after GST at most $c$ slots can be censored, once $c$ censored slots have occurred, every transaction delivered to an honest party is included and ordered in the very next slot.

\paragraph{Latency of censorship resistant \smr.}
We define two latency measures for slots that start after GST.
As usual in partial synchrony, protocols may require an additional constant stabilization period after GST before making progress; we absorb this into GST.

\begin{definition}[Commit Latency]\label{def:commit-latency}
Fix a slot $s$ that starts after GST and is not censored.
The \emph{commit latency} of slot $s$ is the time from the start of slot $s$
to the earliest time $T$ such that, by time $T$, every honest party $j\in\cH$
has committed \emph{all honest inputs} for slot $s$, i.e.,
\[
\forall i,j\in\cH:\quad
b^{\sf in}_{i,s}\ \text{is committed by party $j$ by time $T$}.
\]
\end{definition}

\begin{definition}[Slot Latency]\label{def:slot-latency}
Fix a slot $s$ that starts after GST and is not censored.
The \emph{slot latency} of slot $s$ is the time from the start of slot $s$
to the earliest time $T$ such that, by time $T$, every honest party has started slot $s{+}1$.
\end{definition}

Commit latency can be strictly smaller than slot latency: a low output may commit all honest inputs before the agreed high finalizes the slot and triggers slot $s{+}1$.
In sequential-slot constructions this gap directly increases slot latency; \fullversionref{sec:smr:slot-latency} discusses pipelining techniques that reduce it.

\begin{definition}[Amortized Commit Latency]
Consider an infinite execution after GST, and let
$s_1,s_2,\ldots$ be the non-censored slots.  Let $L(s_k)$ denote the commit
latency of slot $s_k$.  The \emph{amortized commit latency} is
\[
    \limsup_{m\to\infty} \frac{1}{m}\sum_{k=1}^{m} L(s_k).
\]
A protocol has amortized commit latency at most $L$ if this quantity is at
most $L$ in every execution.
\end{definition}

\section{\pc}
\label{sec:pc}

In this section, we present a simple 3-round protocol for \pc (\Cref{def:pc}) with $O(n^2)$ message complexity and communication complexity $O((\const L + \sigsize)n^4)$ (\Cref{sec:pc:simple}). Although not communication-efficient, it serves as a warm-up and highlights the main ideas.
The first-round certification rule treats coordinates independently: a coordinate is appended only when one value has strict-majority support within the first-round quorum.
The round complexity of our \pc protocol is \emph{optimal}, as a matching lower bound is shown in~\Cref{sec:lower-bound}.

Due to space constraints, we defer the communication-optimized
implementation to \fullversionref{sec:pc:optimized}; it has complexity
$O(\const n^2 L+\const n^3+\sigsize n^2 L)$.

\subsection{3-Round \pc Protocol}
\label{sec:pc:simple}

For readability, when the context is clear we write $x\in\qc$ to mean that $\qc$ contains a tuple whose first component is $x$ (regardless of any additional fields). Under this convention, expressions such as $\mcp{\{x : (x,\ldots)\in\qc\}}$ are abbreviated as $\mcp{\{x\in\qc\}}$.

Whenever a party receives a message, it verifies the message before processing it.
In particular, it checks that every QC contains exactly $n-f$ valid
messages from distinct senders for the same protocol instance, that every
nested QC is valid, that the claimed value equals the output of the
corresponding certification procedure, and that every message is correctly
signed and domain-separated by its phase and instance.
Each party counts at most one message per sender, phase, and instance;
additional or equivocating messages from that sender are ignored.
Each honest party broadcasts at most one message in each phase.

\begin{algorithm}[t!]
\caption{\pc Protocol}
\label{alg:pc}
\begin{algorithmic}[1]
\Function{\sf QC1Certify}{\qcone}
    \State $t:=\lfloor(n-f)/2\rfloor+1$
    \State let $\mathcal V_1$ be the multiset of vectors in \qcone
    \State \Return $\x:=\tpref{t}{\mathcal V_1}$
\EndFunction

\Statex 

\Function{\sf QC2Certify}{\qctwo}
    \State \Return $\xmcp:=\mcp{\{\x \in \qctwo\}}$
    \Comment{maximum common prefix}
\EndFunction

\Statex 

\Function{\sf QC3Certify}{\qcthree}
    \State $\xmcpmcp:=\mcp{\{\xmcp \in \qcthree\}}$
    \Comment{maximum common prefix}
    \State $\xmcpmce:=\mce{\{\xmcp \in \qcthree\}}$
    \Comment{minimum common extension}
    \State \Return $(\xmcpmcp, \xmcpmce)$
\EndFunction

\Statex 

\Upon{input $v_{i}^{\sf in}$}
    \State broadcast $\voteone:=(v_{i}^{\sf in}, \sig_i(v_{i}^{\sf in}))$
\EndUpon

\Statex 

\Upon{first obtaining $n-f$ valid \voteone from distinct parties}
    \State let \qcone be the set of $n-f$ \voteone messages
    \State $\x:={\sf QC1Certify}(\qcone)$
    \State broadcast $\votetwo:=(\x, \sig_i(\x), \qcone)$
\EndUpon

\Statex 

\Upon{first obtaining $n-f$ valid \votetwo from distinct parties}
    \State let \qctwo be the set of $n-f$ \votetwo messages
    \State $\xmcp:={\sf QC2Certify}(\qctwo)$
    \State broadcast $\votethree:=(\xmcp, \sig_i(\xmcp), \qctwo)$
\EndUpon

\Statex 

\Upon{first obtaining $n-f$ valid \votethree from distinct parties}
    \State let \qcthree be the set of $n-f$ \votethree messages
    \State $(\xmcpmcp, \xmcpmce):={\sf QC3Certify}(\qcthree)$
    \State {\bf output} $v_{i}^{\sf low}:=\xmcpmcp$ and $v_{i}^{\sf high}:=\xmcpmce$
\EndUpon

\end{algorithmic}
\end{algorithm}

\subsubsection{Protocol Description.}\label{sec:pc:simple:protocol}

In the \pc protocol, each party proposes an input vector, and the protocol proceeds through three asynchronous rounds of voting. Each round the parties collect $n-f$ votes into a \emph{quorum certificate} (QC) and enter the next round. At a high level:

\begin{itemize}
    \item {\em Round 1 (voting on inputs).}
    Each party broadcasts its \voteone by signing the input vector.
    Let $q:=n-f$ and $t:=\lfloor q/2\rfloor+1$.
    From any $q$ received \voteone messages, a party computes their
    threshold prefix $\x$, which becomes its \votetwo.
    Different coordinates may be supported by different sets of voters,
    so $\x$ need not be a prefix of any single input vector.
    \item {\em Round 2 (voting on certified prefixes).}
    Parties sign and broadcast $\x$ as \votetwo.  From a quorum of \votetwo, a party derives a certified prefix $\xmcp$, which is the longest prefix shared by all \votetwo in the quorum.
    
    \item {\em Round 3 (deriving outputs).}
    Parties vote on $\xmcp$ as \votethree and obtain a third QC.  From it, they compute the outputs as the maximum common prefix and minimum common extension of all certified prefixes.
\end{itemize}

\subsection{Analysis of \pc Protocol}\label{sec:pc:simple:analysis}
In this section, we analyze the properties of our \pc protocol (\Cref{alg:pc}) presented in~\Cref{sec:pc}.

For brevity, we use the notation $\qcone.\x:={\sf QC1Certify}(\qcone)$, $\qctwo.\xmcp:={\sf QC2Certify}(\qctwo)$, and $(\qcthree.\xmcpmcp, \qcthree.\xmcpmce):={\sf QC3Certify}(\qcthree)$. 

\begin{lemma}\label{lem:pc:1}
    $\qctwo.\xmcp\sim\qctwo'.\xmcp$ for any $\qctwo, \qctwo'$. 
\end{lemma}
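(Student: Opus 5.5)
We prove the lemma by showing that every valid \qctwo contains at least one \votetwo from an honest party, and that all honest \votetwo values are mutually consistent. Then $\qctwo.\xmcp$ is a prefix of some honest round-2 vote $\x_h$, and $\qctwo'.\xmcp$ is a prefix of some honest round-2 vote $\x_{h'}$. Since $\x_h\sim\x_{h'}$, say $\x_h\preceq\x_{h'}$, both $\qctwo.\xmcp$ and $\qctwo'.\xmcp$ are prefixes of $\x_{h'}$. Two prefixes of the same vector are consistent, which gives the claim. The first fact is immediate: a \qctwo contains $n-f\ge 2f+1>f$ votes from distinct senders, so at least one comes from an honest party.

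The core step is to show that the certified values $\qcone.\x$ and $\qcone'.\x$ of any two valid round-1 QCs are consistent. This is stronger than what we need, since any \votetwo accepted by an honest party carries a \qcone whose threshold prefix equals the claimed $\x$, as the verification rule requires. Fix two such QCs with multisets $\mathcal V_1,\mathcal V_1'$, each of $q=n-f$ signed inputs from distinct senders, and let $t=\lfloor q/2\rfloor+1$. We show that, for every coordinate $k$ at which both threshold prefixes are defined, the chosen values $z_k$ and $z'_k$ coincide. The value $z_k$ has at least $t$ supporters in $\mathcal V_1$, and $z'_k$ has at least $t$ supporters in $\mathcal V_1'$. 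Since votes are signed and each sender has one \voteone, with equivocating extras ignored, the key question is whether two such supporter sets must share an honest party. If they do, that party's single input has a single value at coordinate $k$, forcing $z_k=z'_k$. Agreement coordinatewise on the common range means one threshold prefix is a prefix of the other.

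The main obstacle is this counting step. The two supporter sets have sizes at least $t$ each, drawn from the $n$ parties, so their overlap is at least $2t-n$. We need $2t-n\ge f+1$ to guarantee an honest common member. This requires $2\lfloor (n-f)/2\rfloor+2\ge n+f+1$, which fails for general $n\ge 3f+1$ (e.g., $n=3f+1$ gives $t\approx f+1$). So pure quorum intersection across two different QCs is insufficient. We must instead exploit that a Byzantine party can sign at most one input: every signed \voteone from party $p$ carries the same vector in both QCs. With this, the relevant parties are the senders in $\mathcal V_1\cap\mathcal V_1'$, which number at least $2(n-f)-n=n-2f\ge f+1$. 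We would first try to argue via a counting of supporters among this common sender set plus the differing senders. If that fails, we would fall back on a weaker claim: honest $\x$'s need only agree wherever both exceed the common honest mcp. We would also re-examine whether the lemma is really meant to use that Byzantine signatures are unique per sender, since equivocating extras are ignored only locally, not globally. This is the step we would scrutinize most carefully before writing the final proof.
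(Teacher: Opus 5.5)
Your reduction is logically fine, but its premise is false. The premise is that all honest \votetwo values (equivalently, the threshold prefixes certified by any two valid first-round certificates) are mutually consistent. This is not merely hard to prove; it fails for the protocol's threshold $t=\lfloor (n-f)/2\rfloor+1$.

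Take $n=4$, $f=1$, all parties honest, with inputs $A=[0]$, $B=[1]$, $C=[0]$, $D=[1]$, so $q=3$ and $t=2$. A party whose first quorum is $\{A,B,D\}$ certifies $[1]$. A party whose first quorum is $\{A,B,C\}$ certifies $[0]$. Both then broadcast conflicting \votetwo values. Nobody equivocates here, so the uniqueness of signed \voteone messages, which you planned to fall back on, cannot rescue the step.

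There is also a structural sanity check. If round-1 certified values were always consistent, the two-round scheme of the appendix would already work for $n\le 4f$, contradicting the paper's lower bound. Round-1 consistency is exactly what the $n-2f$ threshold buys there, and only when $n\ge 5f+1$.

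The missing idea is to apply quorum intersection at round 2 instead of round 1. You only used the weak fact that each \qctwo contains some honest vote ($n-f>f$). What the lemma needs is that the two certificates contain a common honest vote.

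The sender sets of $\qctwo$ and $\qctwo'$ each have size $n-f$, so they share at least $n-2f\ge f+1$ parties, hence at least one honest party $h$. Since $h$ broadcasts at most one \votetwo, the same value $x_h$ appears in both certificates. Therefore $\qctwo.\xmcp\preceq x_h$ and $\qctwo'.\xmcp\preceq x_h$, and two prefixes of the same vector are consistent.

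This is the paper's argument. It uses nothing about round 1. It also removes your worry about equivocation being ignored only locally, since only the single vote of the common honest sender is used.
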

\ifcameraready
\begin{proof}
Deferred to \fullversionref{sec:pc:simple:analysis}.
\end{proof}
\else
\begin{proof}
    For any $\qctwo,\qctwo'$, by quorum intersection, at least one honest 
    party's $\votetwo=(\x, *, *)$ is in both $\qctwo$ and $\qctwo'$.
    Since $\qctwo.\xmcp= \mcp{\{\x\in \qctwo\}} \preceq\x$ and $\qctwo'.\xmcp = \mcp{\{\x \in \qctwo'\}} \preceq\x$, we conclude that $\qctwo.\xmcp\sim\qctwo'.\xmcp$. 
\end{proof}
\fi

\begin{lemma}\label{lem:pc:qc1-threshold}
Let $q=n-f$ and $t=\lfloor q/2\rfloor+1$.
Among any $q$ votes, at most one value can have support from at least $t$ distinct voters at a fixed coordinate.
Moreover, when $n\ge 3f+1$, every set of $q$ voters contains at least $t$ honest parties.
\end{lemma}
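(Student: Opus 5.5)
The lemma has two independent parts, and I will prove them separately by elementary counting.

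For the first part, fix a coordinate $k$ and a set of $q$ votes, one per distinct voter. I will argue by contradiction. Suppose two distinct values $z\neq z'$ each have at least $t$ supporters at coordinate $k$. A single voter's vector has exactly one entry at coordinate $k$, so the two supporter sets are disjoint. Their union then has size at least $2t = 2(\lfloor q/2\rfloor+1) > q$, since $2\lfloor q/2\rfloor \ge q-1$. This exceeds the $q$ available voters, a contradiction. This is the same observation that makes $\tpref{t}{\cdot}$ well defined, since here $t>q/2$.

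For the second part, I first bound the number of honest voters in any set of $q=n-f$ voters. At most $f$ of them are Byzantine, so at least $n-2f$ are honest. It therefore suffices to show $n-2f\ge \lfloor (n-f)/2\rfloor+1$.

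Since both sides are integers, I will prove the equivalent statement $n-2f > (n-f)/2$. This rearranges to $2n-4f>n-f$, i.e., $n>3f$, which holds because $n\ge 3f+1$. As $n-2f$ is an integer strictly greater than $(n-f)/2$, it is at least $\lfloor (n-f)/2\rfloor+1 = t$.

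The only delicate point is the floor arithmetic in that last step. I will handle it through the integer observation rather than by splitting on the parity of $n-f$.
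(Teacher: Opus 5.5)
Your proposal is correct and follows essentially the same route as the paper: disjoint supporter sets of size $t$ cannot fit in $q$ voters because $2t>q$, and any $q$ voters contain at least $n-2f$ honest parties with $n-2f\ge t$. The only difference is that you spell out the floor arithmetic ($n-2f>(n-f)/2 \iff n>3f$, followed by the integer observation), which the paper simply asserts.
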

\ifcameraready
\begin{proof}
Deferred to \fullversionref{sec:pc:simple:analysis}.
\end{proof}
\else
\begin{proof}
Since $2t>q$, two disjoint sets of $t$ voters cannot both fit in a set of size $q$.
As each voter contributes only one value at a coordinate, at most one value can have $t$ supporters.

Any set of $q=n-f$ voters contains at least $n-2f$ honest parties.
The assumption $n\ge 3f+1$ implies
$n-2f\ge \lfloor(n-f)/2\rfloor+1=t$.
\end{proof}
\fi

\begin{lemma}\label{lem:pc:2}
    $\mcp{\{v_{h}^{\sf in}\}_{h\in \cH}}\preceq \qcone.\x$ for any \qcone.
\end{lemma}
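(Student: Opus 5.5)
Let $m:=\mcp{\{v_{h}^{\sf in}\}_{h\in \cH}}$, of length $\ell_m$. The plan is to show, coordinate by coordinate, that the threshold prefix computed by ${\sf QC1Certify}$ agrees with $m$ on each of its first $\ell_m$ entries. Once this holds, the threshold prefix has length at least $\ell_m$ and starts with $m$, which is exactly $m\preceq\qcone.\x$.

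Fix any valid $\qcone$ and let $\mathcal V_1$ be its multiset of $q=n-f$ signed vectors from distinct senders. Validity checking ensures each vector is attributable to a distinct party. For a vote from an honest party $h$, the vector is $v_h^{\sf in}$ itself, because signatures are unforgeable and honest parties sign only their own input in round 1. Fix $k\le \ell_m$. By the definition of maximum common prefix, every honest input satisfies $v_h^{\sf in}[k]=m[k]$. By \Cref{lem:pc:qc1-threshold}, the $q$ voters in $\qcone$ include at least $t$ honest parties. Hence $|\{v\in\mathcal V_1: v[k]=m[k]\}|\ge t$.

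The threshold prefix $\tpref{t}{\mathcal V_1}$ extends coordinate by coordinate as long as some value has $t$ supporters. By the first part of \Cref{lem:pc:qc1-threshold}, that value is unique, so at coordinate $k$ it must be $m[k]$. An induction on $k=1,\dots,\ell_m$ then shows that the maximal $\ell$ in the definition satisfies $\ell\ge\ell_m$ and that $z_k=m[k]$ for all $k\le\ell_m$. This gives $m\preceq \qcone.\x$.

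I expect no real obstacle. The one point needing care is the claim that honest votes in $\qcone$ carry the true inputs: a Byzantine party cannot inject a vote attributed to an honest sender, and duplicates per sender are discarded. We also need $t>q/2$, which holds by the choice $t=\lfloor q/2\rfloor+1$. Finally, all inputs have the same length $L$, so each coordinate $k\le \ell_m\le L$ exists in every vote.
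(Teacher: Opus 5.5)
Your proof is correct and follows the paper's argument essentially step for step. For each coordinate $k$ of the honest maximum common prefix, you use \Cref{lem:pc:qc1-threshold} to get at least $t$ honest supporters of that value in any \qcone, and then uniqueness of the threshold-supported value means ${\sf QC1Certify}$ appends it. Your remarks on unforgeability, distinct senders, and equal vector lengths only make explicit the validity checks the paper assumes.
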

\ifcameraready
\begin{proof}
Deferred to \fullversionref{sec:pc:simple:analysis}.
\end{proof}
\else
\begin{proof}
    Let $v_{\cH}=\mcp{\{v_{h}^{\sf in}\}_{h\in \cH}}$ and fix any
    coordinate $k\le |v_{\cH}|$.
    Every honest party has value $v_{\cH}[k]$ at coordinate $k$.
    By \Cref{lem:pc:qc1-threshold}, \qcone contains at least $t$ honest
    votes, so $v_{\cH}[k]$ has threshold support in \qcone.
    It is the unique threshold-supported value at coordinate $k$ and
    ${\sf QC1Certify}$ appends it.
    This holds for every $k\le |v_{\cH}|$, hence
    $v_{\cH}\preceq\qcone.\x$.
\end{proof}
\fi

\begin{theorem}\label{thm:pc:1}
    \pc Protocol (\Cref{alg:pc}) solves \pc (\Cref{def:pc}). 
\end{theorem}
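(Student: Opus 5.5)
We verify the three properties of \Cref{def:pc} separately, relying on \Cref{lem:pc:1,lem:pc:2} together with quorum intersection.

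\emph{Termination.} There are at least $n-f$ honest parties, and each broadcasts a \voteone. Every honest party therefore eventually collects $n-f$ valid \voteone messages. The \votetwo and \votethree messages that honest parties send carry correctly formed QCs with correctly computed certified values, so they pass verification, and the same counting argument applies to rounds 2 and 3. One point needs care: $\mce{\cdot}$ must not return $\bot$. By \Cref{lem:pc:1}, the values $\xmcp$ contained in any $\qcthree$ are pairwise consistent, and a finite set of pairwise consistent vectors is totally ordered by $\preceq$. Its minimum common extension is therefore its longest element, which is well defined, and its maximum common prefix is its shortest element. So every honest party outputs, and both outputs have length at most $L$.

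\emph{Validity.} Fix an honest $i$ with output taken from $\qcthree$. Each $\xmcp$ in $\qcthree$ is $\qctwo.\xmcp$ for some valid $\qctwo$, because verification checks that the claimed value equals the output of the corresponding certify procedure. That value is the $\mcp{}$ of $\qcone.\x$ values, each of which extends $\mcp{\{v_h^{\sf in}\}_{h\in\cH}}$ by \Cref{lem:pc:2}. Since a common prefix of all members is a prefix of their maximum common prefix, this containment propagates upward through $\qctwo.\xmcp$ and then through $v_i^{\sf low}=\qcthree.\xmcpmcp$.

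\emph{Prefix.} Let $i,j$ be honest with quorum certificates $\qcthree_i$ and $\qcthree_j$. By quorum intersection ($2(n-f)-n\ge f+1$), some honest party's \votethree, carrying a value $\xmcp^*$, appears in both. Then
\[
v_i^{\sf low}=\mcp{\{\xmcp\in\qcthree_i\}}\preceq \xmcp^* \preceq \mce{\{\xmcp\in\qcthree_j\}}=v_j^{\sf high}.
\]
The main subtlety is the well-definedness of $\mce{}$ addressed above, since it depends on \Cref{lem:pc:1} holding for arbitrary, possibly Byzantine-assembled, valid $\qctwo$'s. That lemma is stated for any $\qctwo$, which covers this case.
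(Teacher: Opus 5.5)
Your proof is correct and follows the paper's own argument essentially step for step. Prefix comes from an honest \votethree shared by any two \qcthree's, Termination uses \Cref{lem:pc:1} to guarantee that $\mce{\cdot}$ is defined, and Validity propagates \Cref{lem:pc:2} upward through the nested certificates; your remark that the certified $\xmcp$ values are totally ordered, so $\mcp{\cdot}$ and $\mce{\cdot}$ are simply the shortest and longest elements, just makes the paper's well-definedness step explicit.
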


\ifcameraready
\begin{proof}
The Prefix property follows from quorum intersection and
\Cref{lem:pc:1}; Validity follows from \Cref{lem:pc:2}; and eventual
delivery of the $n-f$ honest votes in each phase gives Termination.
Full details appear in \fullversionref{sec:pc:simple:analysis}.
\end{proof}
\else
\begin{proof}
    {\em Prefix.}
    For any $\qcthree,\qcthree'$, by quorum intersection, at least one honest party's $\votethree=(\xmcp, *, *)$ is in both $\qcthree$ and $\qcthree'$.
    Since $\qcthree.\xmcpmcp:=\mcp{\{\xmcp \in \qcthree\}}\preceq \xmcp$ and $\qcthree'.\xmcpmce:=\mce{\{\xmcp \in \qcthree'\}}\succeq \xmcp$ (note that by~\Cref{lem:pc:1}, all vectors in $\qcthree'$ are consistent and, hence, $\sf mce$ is well-defined), we conclude that
    $\qcthree.\xmcpmcp \preceq \qcthree'.\xmcpmce$ for any $\qcthree,\qcthree'$.
    
    According to the protocol, for any $i,j\in \cH$, $v_{i}^{\sf low}=\qcthree.\xmcpmcp$ and $v_{j}^{\sf high}=\qcthree'.\xmcpmce$ for some $\qcthree,\qcthree'$. 
    Therefore $v_{i}^{\sf low}\preceq v_{j}^{\sf high}$ for any $i,j\in \cH$.
    
    {\em Termination.}
    Function {\sf QC1Certify} performs a finite scan over the $L$ coordinates.
    At each coordinate, the threshold-supported value is unique by
    \Cref{lem:pc:qc1-threshold}; if no such value exists, the function returns the prefix computed so far.
    Function {\sf QC2Certify} terminates because $\mcp{\cdot}$ is well-defined on every finite set of vectors.
    Function {\sf QC3Certify} also terminates: the operation $\mce{\cdot}$ is well-defined since, by~\Cref{lem:pc:1}, we have $\qctwo.\xmcp \sim \qctwo'.\xmcp$ for any $\qctwo, \qctwo'$, ensuring that a common extension always exists.

    All honest parties eventually receive votes from every other honest party, and therefore each can compute the required functions and continue to vote. 
    Thus, the protocol terminates.

    {\em Validity.} 
    Let $v_{\cH}=\mcp{\{v_{h}^{\sf in}\}_{h\in \cH}}$ denote the maximum common prefix of all honest parties’ inputs.
    By~\Cref{lem:pc:2}, $v_{\cH} \preceq \qcone.\x$ for any \qcone. 
    
    By definition, $\qctwo.\xmcp= \mcp{\{\x \in \qctwo\}}$. 
    Since every $\qctwo$ only contains the $\x$-values certified by $\qcone$, and $v_{\cH} \preceq \qcone.\x$ for any \qcone, we obtain that $v_{\cH} \preceq \qctwo.\xmcp$ for any \qctwo.
    An identical argument applies to any $\qcthree$, proving that $v_{\cH} \preceq \qcthree.\xmcpmcp$.

    According to the protocol, for any $i\in\cH$, $v_{i}^{\sf low}=\qcthree.\xmcpmcp$ for some $\qcthree$. 
    Therefore $v_{\cH}\preceq v_{i}^{\sf low}$ for any $i\in \cH$. 

\end{proof}
\fi

\begin{theorem}\label{thm:pc:complexity}
    \pc Protocol (\Cref{alg:pc}) has a round complexity of $3$ rounds, message complexity of $O(n^2)$, and communication complexity of $O((\const L + \sigsize) n^4)$, where $\const$ is the size of each vector element and $\sigsize$ is the size of signature. 
\end{theorem}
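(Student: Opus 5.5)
The argument is a direct accounting over the three phases of \Cref{alg:pc}, handling round, message, and communication complexity in turn.

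\emph{Round complexity.} Each honest party broadcasts exactly one message per phase: \voteone on input, \votetwo after collecting \qcone, and \votethree after collecting \qctwo. It outputs immediately upon collecting \qcthree. No phase waits for anything beyond $n-f$ messages of the previous phase. Hence the output is produced after three communication steps.

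\emph{Message complexity.} Each honest party broadcasts at most one message in each of the three phases. Each broadcast consists of $n$ point-to-point messages, so the total is $3n\cdot n=O(n^2)$ messages.

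\emph{Communication complexity.} I will bound message sizes bottom-up, using that every vector has length at most $L$ with elements of size $\const$.
\begin{itemize}
\item A \voteone message has size $O(\const L+\sigsize)$.
\item A \qcone holds $n-f$ such messages, so its size is $O(n(\const L+\sigsize))$. A \votetwo message carries a vector, a signature, and a \qcone, so its size is also $O(n(\const L+\sigsize))$.
\item A \qctwo holds $n-f$ \votetwo messages, so it has size $O(n^2(\const L+\sigsize))$. This is also the size of a \votethree message.
\end{itemize}
The \votethree messages are the largest. There are $O(n^2)$ of them in total, each of size $O(n^2(\const L+\sigsize))$, giving $O((\const L+\sigsize)n^4)$. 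The first two phases contribute only $O(n^3(\cdot))$ and $O(n^2(\cdot))$, which are dominated.

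The only delicate point is the nesting of QCs, which is where the $n^4$ arises. In particular, the \qcthree used locally to compute the output is never sent, so it contributes nothing further to the communication count.
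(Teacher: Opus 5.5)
Your proposal is correct and follows essentially the same route as the paper: three all-to-all voting rounds give $O(n^2)$ messages, and sizing the nested certificates bottom-up shows that each third-round vote has size $O((\const L+\sigsize)n^2)$, so $O(n^2)$ such messages yield $O((\const L+\sigsize)n^4)$. Your observation that the locally assembled third-round certificate is never transmitted matches the paper's accounting. The paper likewise multiplies the largest broadcast message size by the $O(n^2)$ deliveries.
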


\ifcameraready
\begin{proof}
Deferred to \fullversionref{sec:pc:simple:analysis}.
\end{proof}
\else
\begin{proof}
    From the protocol specification, all parties exchange messages in an all-to-all manner for three rounds. Thus, the round complexity is 3, and the message complexity of $O(n^2)$. 

    We now analyze the communication complexity. 
    A \voteone message has size $O(\const L + \sigsize)$, a \qcone has size $O((\const L + \sigsize) n)$, a \votetwo message has size $O((\const L + \sigsize) n)$, a \qctwo has size $O((\const L + \sigsize) n^2)$, a \votethree message has size $O((\const L + \sigsize) n^2)$, and a \qcthree has size $O((\const L + \sigsize) n^3)$. 
    Since each round involves an all-to-all exchange of messages, the total communication cost of the protocol is $O((\const L + \sigsize) n^2) \cdot O(n^2)= O((\const L + \sigsize) n^4)$. 
\end{proof}
\fi


\section{Lower Bound}
\label{sec:lower-bound}

\newcommand{\vlow}{v^{\sf low}}
\newcommand{\vhigh}{v^{\sf high}}
\newcommand{\alg}{\mathcal{A}}
\newcommand{\dhat}{\hat\delta}
\newcommand{\rzero}{\rho_0}
\newcommand{\rone}{\rho_1}
\newcommand{\vzero}{v_0}
\newcommand{\vone}{v_1}

\begin{theorem} \label{thm:lower-bound}
    It is impossible to solve \pc in an asynchronous network with latency $2\delta$ when $f \ge \frac{n}{4}$ parties can be Byzantine.
\end{theorem}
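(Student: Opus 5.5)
The plan is to reduce to the case of four parties and then use an indistinguishability argument with length-one input vectors over the values $\vzero,\vone$.

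\emph{Reduction to $n=4$, $f=1$.} Suppose an algorithm $\alg$ solves \pc with latency $2\delta$ for some $n\le 4f$. Split the parties into four groups $A,B,C,D$, each of size at most $f$. Each group is simulated by one party, so the simulated protocol tolerates the corruption of any single group. This yields a 4-party, 1-resilient protocol that outputs within two message delays. It therefore suffices to refute $n=4$, $f=1$.

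\emph{Setup.} Fix an execution model in which every message sent at time $0$ arrives at time $\dhat$, and every message sent at $\dhat$ arrives at $2\dhat$. Each honest party must output by $2\dhat$. Its local state at $2\dhat$ is then a function of:
\begin{itemize}
\item its own input;
\item the round-1 messages it received, each determined by the sender's input;
\item the round-2 messages it received, each determined by what that sender saw in round 1.
\end{itemize}
Two constraints from \Cref{def:pc} are the levers.

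Validity: if the three honest parties share input $\vzero$, every honest low equals $[\vzero]$. The high output extends low and has length at most $1$, so every honest high also equals $[\vzero]$. The symmetric statement holds for $\vone$.

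Prefix, via \Cref{cor:pc:1}: if some honest party outputs low $[\vzero]$, no honest party may output low or high $[\vone]$.

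\emph{Chain of executions.} I would consider the mixed-input configuration drawn in the figure, $A(\vone),B(\vzero),C(\vone),D(\vzero)$, and surround it by executions in which one party is Byzantine.
\begin{itemize}
\item In $E_0$, $A$ is Byzantine and $B,C,D$ are honest with input $\vzero$ (so $C$'s input is changed to $\vzero$). Validity forces $B,C,D$ to output low $=[\vzero]$.
\item In $E_1$, $D$ is Byzantine and $A,B,C$ are honest with input $\vone$ (so $B$'s input is changed to $\vone$). Validity forces them to output low $=[\vone]$.
\end{itemize}
The Byzantine party equivocates. It sends round-1 and round-2 messages consistent with input $\vone$ to some recipients and $\vzero$ to others. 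It also lies in round 2 about what it received in round 1.

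The aim is a hybrid execution $E^*$ with a single Byzantine party in which:
\begin{itemize}
\item some honest party $p$ has a two-round view identical to its view in $E_0$, so it outputs low $[\vzero]$;
\item some honest party $p'$ has a two-round view identical to its view in $E_1$, so it outputs low $[\vone]$.
\end{itemize}
These two low outputs are conflicting, which contradicts \Cref{cor:pc:1}.

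Where one hybrid does not suffice, I would interpolate through a short sequence of executions that each differ from the next in one party's input or one party's corruption. Adjacent executions are made indistinguishable to at least two honest parties up to $2\dhat$. Consistency of outputs then propagates along the chain, from low $[\vzero]$ at one end to low $[\vone]$ at the other, until some execution violates Prefix.

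\emph{Main obstacle.} The hardest step is the two-hop dependency bookkeeping. A round-2 message from an honest party $q$ to $p$ encodes $q$'s entire round-1 view. The Byzantine party can falsify only its own messages. So for $p$ to be unable to distinguish $E^*$ from $E_0$, every honest $q$ must have received round-1 messages in $E^*$ consistent with what $p$ expects $q$ to have seen in $E_0$. Simultaneously, $p'$ must see a world consistent with $E_1$.

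I would try two devices first. The first is to choose $p$ and $p'$ to be the two parties whose inputs differ between $E_0$ and $E_1$, namely $C$ and $B$. The second is to use the Byzantine party's equivocation at both $\dhat$ and $2\dhat$ to cover the single honest party whose round-1 view would otherwise betray the difference. If the counting fails with exact delays, I would fall back on the asynchronous slack. Specifically, one message from the "changed" party can be delayed past $2\dhat$ toward one recipient. Because the protocol must output with only $n-f=3$ messages per round, that recipient cannot wait for it, so its view can be made to omit the information that would betray the difference.
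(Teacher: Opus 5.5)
There is a genuine gap. The contradiction you aim for cannot be reached by an argument of this shape: two validity-forced lows $[v_0]$ and $[v_1]$ in one hybrid $E^*$, or a chain that carries lows from $E_0$ to $E_1$. The reason is that every lever you list is satisfied by a two-round protocol.

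Take $n=4$, $f=1$:
\begin{itemize}
\item each party signs and broadcasts its input;
\item after receiving $3$ signed inputs from distinct parties, it broadcasts them as a certificate;
\item after receiving certificates from $3$ distinct senders, it outputs $v^{\sf low}=v^{\sf high}=[v]$ if all three certificates have a $2$-out-of-$3$ majority for the same $v$, and $[\,]$ otherwise.
\end{itemize}
A certificate carries at most one Byzantine signature, so unanimous honest input $v$ yields $[v]$ at every honest party. Any two honest parties' sets of certificate senders share an honest party, and its unique certificate has a single majority value. Hence two non-empty honest lows never conflict, and low $[v_0]$ also excludes low or high $[v_1]$. This is crusader agreement, and it finishes within latency $2\delta$.

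Your execution surgery is protocol-independent. It invokes only Validity, ``identical view implies identical output,'' and \Cref{cor:pc:1}. It would therefore apply verbatim to this protocol and produce an actual execution with two conflicting lows, which cannot exist. So no $E^*$ of the kind you describe exists in general. The fallback chain is ruled out for the same reason: every fact it transports (lows pinned by Validity at the endpoints, and exclusions from \Cref{cor:pc:1}) holds in every execution of this protocol.

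What the protocol above violates, and what the paper's proof exploits, is the \emph{forcing} direction of Prefix. A single honest low $[v_1]$ obliges every honest high to equal $[v_1]$, so a high $[\,]$ is forbidden. The contradiction must therefore be low versus high, with the high pinned by a two-hop argument.
\begin{itemize}
\item The paper gets $v^{\sf low}_B(\rho_0)=v_0$ from \Cref{lem:lb:validity}: a silent Byzantine group plus at most $f$ dissenters, with the silent group reinterpreted as slow honest parties.
\item It gets $v^{\sf high}_C(\rho_1)=v_1$ in \Cref{lem:lb:rho-one}. In $\rho_1'$, $D$ turns Byzantine and shows $A$ the world $\rho_1''$, where \Cref{lem:lb:validity} forces $v^{\sf low}_A=v_1$. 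Prefix then forces $C$'s high, and $C$ cannot distinguish $\rho_1'$ from $\rho_1$.
\item Only then are $\rho_0$ and $\rho_1$ glued into $\rho^*$ with $A$ Byzantine.
\end{itemize}

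The gluing also needs a timing device that your lockstep abstraction (``round-2 messages are determined by round-1 views'') cannot express. Messages $C\to D$ take $1.2\hat\delta$, so $D$'s messages sent at $\hat\delta$ are exactly those of $\rho_0$, which keeps $B$'s view intact. Meanwhile $C$'s deadline only stretches to $2\delta(\rho_1)=2.4\hat\delta$. Your fallback of delaying a message past $2\hat\delta$ raises $\delta(\rho)$ and hence the output deadline. It would therefore still need the \Cref{lem:lb:validity}-style step of matching the view with a uniform-delay execution in which the slow sender is silent and Byzantine.

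Your reduction to four simulated groups is fine.
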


The rest of the section presents a proof of \Cref{thm:lower-bound} by contradiction.

\subsection{Notation}

Suppose there exists an algorithm $\alg$ solving \pc with the latency of at most $2\delta$ in asynchrony with $n \le 4f$ parties.
Let $\Pi$ be the set of all parties, $\abs{\Pi} = n \le 4f$.
We split $\Pi$ into 4 disjoint sets: $A$, $B$, $C$, and $D$ such that $\abs{A} = \abs{B} = \abs{C} = f$ and $\abs{D} = n-3f \le f$.

We will consider a number of executions of $\alg$ in order to arrive at a contradiction.

\paragraph{Executions and notations.}

Let $\mathcal V$ be the set of possible input vectors.
An \emph{initial configuration} is a function $I: \Pi \to \mathcal V$ mapping each party $i$ to its input vector $v^{\sf in}_i$.
An \emph{execution} of $\alg$ is a tuple $\rho = (I, F, S)$, where $F \subseteq \Pi$ is the set of Byzantine parties ($\abs{F} \le f$),
and $S$ is a totally ordered sequence of \emph{send}, \emph{receive}, and \emph{output} events,
each tagged with an absolute time according to a global clock inaccessible to the protocol.\footnote{As is standard in lower-bound arguments, we can assume $\alg$ is deterministic, considering any fixed set of coin values for randomized protocols.}

If $i$ is honest in $\rho = (I, F, S)$ (i.e., $i \notin F$), let $(\vlow_i(\rho), \vhigh_i(\rho))$ be the output pair of $i$ in $\rho$.
For a set $G \subseteq \Pi \setminus F$ of honest parties in $\rho$, we write $\vlow_G(\rho) = v$ (resp., $\vhigh_G(\rho) = v$) to denote that $\forall i \in G:\ \vlow_i(\rho) = v$ (resp., $\vhigh_i(\rho) = v$).

\paragraph{Similar executions.}

Given an execution $\rho = (I, F, S)$ and an honest party $i \notin F$, the \emph{output view} of $i$ in $\rho$ is its input together with all messages delivered to $i$
\emph{up to the moment when $i$ outputs $(\vlow_i, \vhigh_i)$}, including their delivery times.
For two executions $\rho,\rho'$ and a party $i$ that is honest in both, we write $\rho \stackrel{i}\sim \rho'$ if 
the output views of party $i$ in $\rho$ and $\rho'$ are the same.
Note that the messages delivered by $i$ \emph{after} it outputs could differ in $\rho$ and $\rho'$.
For a set $G$ of parties that are honest in both executions, we write $\rho \stackrel{G}\sim \rho'$ as a shorthand for $\forall i \in G:\ \rho \stackrel{i}\sim \rho'$. 

\begin{observation} \label{obs:lb:similar-exec}
    If $\rho \stackrel{G}\sim \rho'$, then $\vhigh_G(\rho) = \vhigh_G(\rho')$ and $\vlow_G(\rho) = \vlow_G(\rho')$.
\end{observation}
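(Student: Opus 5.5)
The statement is Observation~\ref{obs:lb:similar-exec}. It follows directly from determinism and the definition of output views, so I will argue it in one step per party and then lift it to the set $G$.

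Fix an arbitrary $i \in G$. Since $\rho \stackrel{G}\sim \rho'$, we have $\rho \stackrel{i}\sim \rho'$. Hence the output view of $i$ is identical in $\rho$ and $\rho'$: the same input vector $v^{\sf in}_i$, and the same sequence of delivered messages with the same delivery times, up to the moment $i$ outputs.

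As noted in the footnote, $\alg$ may be taken to be deterministic. An honest party's local state at any moment is a deterministic function of its input and of the messages delivered to it so far, together with their delivery times. Its decision to output, and the output pair itself, are functions of that state.

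I would then argue by induction over the events in $i$'s output view. At every point before the output, $i$ is in the same local state in $\rho$ as in $\rho'$. In particular, $i$ reaches its output event in both executions, at the same local moment and from the same state. It therefore emits the same pair, giving $\vlow_i(\rho)=\vlow_i(\rho')$ and $\vhigh_i(\rho)=\vhigh_i(\rho')$.

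One subtlety is that the output view contains only deliveries before the output, so it might seem that $i$ could output slightly later in one execution. This cannot happen. If $i$ outputs in $\rho$ at a point determined by its view, then in $\rho'$ the identical view drives $i$ to the identical output decision at that same point. This holds because time-triggered local steps also depend only on the view and the local clock, and the definition records delivery times for exactly this reason.

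Since $i \in G$ was arbitrary, the equalities hold for every $i\in G$. By the shorthand defined before the observation, this gives $\vhigh_G(\rho)=\vhigh_G(\rho')$ and $\vlow_G(\rho)=\vlow_G(\rho')$. I expect no real obstacle; the only care needed is the point above, that the output is determined by the view up to the output moment, including its timing.
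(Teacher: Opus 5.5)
Your proof is correct and follows the same reasoning the paper relies on: the paper states this as an observation with no proof, appealing only to the footnote that the algorithm may be taken deterministic and to the fact that the output view (input plus all deliveries, with their times, up to the output) fixes the party's state and hence its output pair. Your extra care about the output moment is a sound way to make that implicit step rigorous: the identical views leave no deliveries between the two candidate output times, so both executions reach the output from the same history.
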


\paragraph{Asynchronous latency}

For an execution $\rho = (I,F,S)$, define $\delta(\rho)$ as the longest time it takes to deliver a message in $\rho$ from one honest party to another.
For a \pc algorithm $\alg$, we say that its latency is $k\delta$ if the longest time across all valid executions $\rho$ for all honest parties to output $(\vlow_i, \vhigh_i)$ in $\rho$ is at most $k\delta(\rho)$.

\begin{figure*}[t]
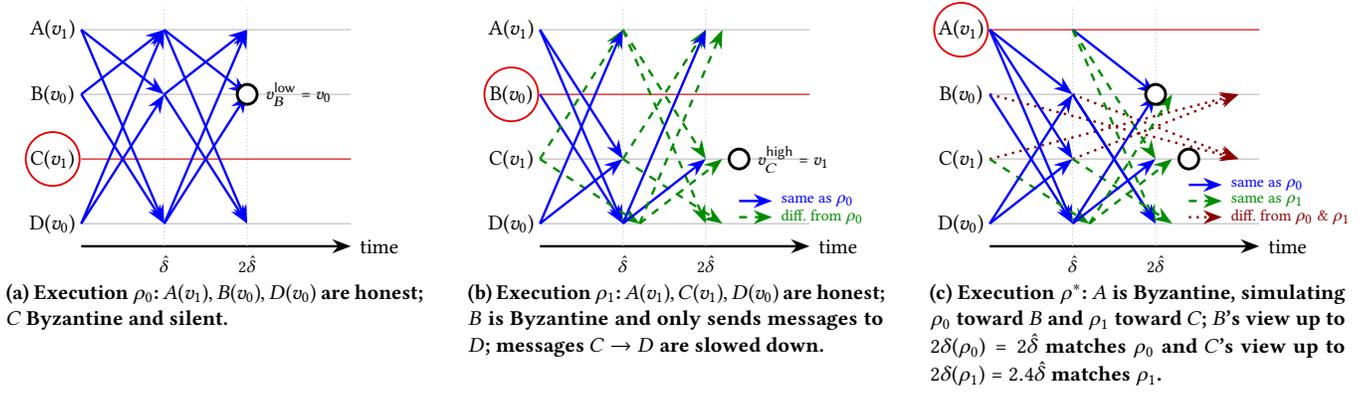

\centering

\newcommand{\tobiz}[1]{}

\begin{subfigure}[t]{\lbfigsize}
    \centering
    \ExecDiagram{
        \MarkByz{\yC}
    
        \draw[msgone] (0,\yA) -- (1,\yB);
        \tobiz{\draw[msgone] (0,\yA) -- (1,\yC);}
        \draw[msgone] (0,\yA) -- (1,\yD);
        \draw[msgone] (0,\yB) -- (1,\yA);
        \tobiz{\draw[msgone] (0,\yB) -- (1,\yC);}
        \draw[msgone] (0,\yB) -- (1,\yD);
        \draw[msgone] (0,\yD) -- (1,\yA);
        \draw[msgone] (0,\yD) -- (1,\yB);
        \tobiz{\draw[msgone] (0,\yD) -- (1,\yC);}
    
        \draw[msgone] (1,\yA) -- (2,\yB);
        \tobiz{\draw[msgone] (1,\yA) -- (2,\yC);}
        \draw[msgone] (1,\yA) -- (2,\yD);
        \draw[msgone] (1,\yB) -- (2,\yA);
        \tobiz{\draw[msgone] (1,\yB) -- (2,\yC);}
        \draw[msgone] (1,\yB) -- (2,\yD);
        \draw[msgone] (1,\yD) -- (2,\yA);
        \draw[msgone] (1,\yD) -- (2,\yB);
        \tobiz{\draw[msgone] (1,\yD) -- (2,\yC);}
    
        \node[decision] (decB) at (2,\yB) {};
        \node[anchor=west,font=\scriptsize] at ($(decB)+(0.12,0)$) {$\vlow_B = \vzero$};
    }
    \caption{Execution $\rzero$: $A(\vone), B(\vzero), D(\vzero)$ are honest; $C$ Byzantine and silent.}
    \label{subfig:lb:rho0}
\end{subfigure}
\hfill
\begin{subfigure}[t]{\lbfigsize}
    \centering
    \ExecDiagram{
        \begin{scope}[shift={(2.4,0.1)}]
            \draw[msgone] (0,0.25) -- (0.4,0.25)
              node[anchor=west,font=\scriptsize,blue] {same as $\rzero$};
            \draw[msgtwo] (0,0) -- (0.4,0)
              node[anchor=west,font=\scriptsize,green!55!black] {diff.\ from $\rzero$};
        \end{scope}
    
        \MarkByz{\yB}
    
        \tobiz{\draw[msgone] (0,\yA) -- (1,\yB);}
        \draw[msgone] (0,\yA) -- (1,\yC);
        \draw[msgone] (0,\yA) -- (1,\yD);
        \draw[msgone] (0,\yB) -- (1,\yD);
        \draw[msgtwo] (0,\yC) -- (1,\yA);
        \tobiz{\draw[msgtwo] (0,\yC) -- (1,\yB);}
        \draw[msgtwo] (0,\yC) -- (1.2,\yD);
        \draw[msgone] (0,\yD) -- (1,\yA);
        \tobiz{\draw[msgone] (0,\yD) -- (1,\yB);}
        \draw[msgone] (0,\yD) -- (1,\yC);
    
        \tobiz{\draw[msgtwo] (1,\yA) -- (2,\yB);}
        \draw[msgtwo] (1,\yA) -- (2,\yC);
        \draw[msgtwo] (1,\yA) -- (2,\yD);
        \draw[msgtwo] (1,\yC) -- (2,\yA);
        \tobiz{\draw[msgtwo] (1,\yC) -- (2,\yB);}
        \draw[msgtwo] (1,\yC) -- (2.2,\yD);
        \draw[msgone] (1,\yD) -- (2,\yA);
        \tobiz{\draw[msgone] (1,\yD) -- (2,\yB);}
        \draw[msgone] (1,\yD) -- (2,\yC);
        \draw[msgtwo] (1.2,\yD) -- (2.2,\yA);
        \tobiz{\draw[msgtwo] (1.2,\yD) -- (2.2,\yB);}
        \draw[msgtwo] (1.2,\yD) -- (2.2,\yC);
    
        \node[decision] (decC) at (2.4,\yC) {};
        \node[anchor=west,font=\scriptsize] at ($(decC)+(0.12,0)$) {$\vhigh_C = \vone$};
    }
    
    \caption{Execution $\rone$: $A(\vone), C(\vone), D(\vzero)$ are honest; $B$ is Byzantine and only sends messages to $D$; messages $C\to D$ are slowed down.}
    \label{subfig:lb:rho1}
\end{subfigure}
\hfill
\begin{subfigure}[t]{\lbfigsize}
    \centering
    \ExecDiagram{
    
        \begin{scope}[shift={(2.4,0.1)}]
            \draw[msgone] (0,0.5) -- (0.4,0.5)
              node[anchor=west,font=\scriptsize,blue] {same as $\rzero$};
            \draw[msgtwo] (0,0.25) -- (0.4,0.25)
              node[anchor=west,font=\scriptsize,green!55!black] {same as $\rone$};
            \draw[msgthr] (0,0) -- (0.4,0)
              node[anchor=west,font=\scriptsize,red!55!black] {diff.\ from $\rzero$ \& $\rone$};
        \end{scope}
    
        \MarkByz{\yA}
    
        \draw[msgone] (0,\yA) -- (1,\yB);
        \draw[msgone] (0,\yA) -- (1,\yC);
        \draw[msgone] (0,\yA) -- (1,\yD);
        \tobiz{\draw[msgone] (0,\yB) -- (1,\yA);}
        \draw[msgthr] (0,\yB) -- (3,\yC);
        \draw[msgone] (0,\yB) -- (1,\yD);
        \draw[msgone] (1,\yB) -- (2,\yD);
        \tobiz{\draw[msgtwo] (0,\yC) -- (1,\yA);}
        \draw[msgthr] (0,\yC) -- (3,\yB);
        \draw[msgtwo] (0,\yC) -- (1.2,\yD);
        \tobiz{\draw[msgone] (0,\yD) -- (1,\yA);}
        \draw[msgone] (0,\yD) -- (1,\yB);
        \draw[msgone] (0,\yD) -- (1,\yC);
    
        \draw[msgone] (1,\yA) -- (2,\yB);
        \draw[msgtwo] (1,\yA) -- (2,\yC);
        \draw[msgtwo] (1,\yA) -- (2,\yD);
        \tobiz{\draw[msgone] (1,\yB) -- (2,\yA);}
        \draw[msgthr] (1,\yB) -- (3,\yC);
        \draw[msgone] (1,\yB) -- (2,\yD);
        \tobiz{\draw[msgtwo] (1,\yC) -- (2,\yA);}
        \draw[msgthr] (1,\yC) -- (3,\yB);
        \draw[msgtwo] (1,\yC) -- (2.2,\yD);
        \tobiz{\draw[msgone] (1,\yD) -- (2,\yA);}
        \draw[msgone] (1,\yD) -- (2,\yB);
        \draw[msgone] (1,\yD) -- (2,\yC);
        \tobiz{\draw[msgtwo] (1.2,\yD) -- (2.2,\yA);}
        \draw[msgtwo] (1.2,\yD) -- (2.2,\yB);
        \draw[msgtwo] (1.2,\yD) -- (2.2,\yC);
    
        \node[decision] (decB) at (2,\yB) {};
        \node[decision] (decC) at (2.4,\yC) {};
    }
    
    \caption{Execution $\rho^*$: 
    $A$ is Byzantine, simulating $\rzero$ toward $B$ and $\rone$ toward $C$;
    $B$'s view up to $2\delta(\rzero) = 2\dhat$ matches $\rzero$ and $C$'s view up to $2\delta(\rone) = 2.4\dhat$ matches $\rone$.
    }
    \label{subfig:lb:rho-star}
\end{subfigure}

\vspace{2pt}

\caption{Executions $\rzero$, $\rone$, and $\rho^*$ used in the lower-bound proof. For clarity, messages to Byzantine parties and messages sent at or after $2\dhat$ are omitted from the picture.}

\label{fig:lb:fig1}
\end{figure*}

\begin{figure*}[t]
\centering

\newcommand{\tobiz}[1]{}
\newcommand{\tobizimportant}[1]{#1}

\begin{subfigure}[t]{\lbfigsize}
    \centering
    \ExecDiagram{
        \begin{scope}[shift={(2.4,0.1)}]
            \draw[msgone] (0,0.25) -- (0.4,0.25)
              node[anchor=west,font=\scriptsize,blue] {same as $\rone''$};
            \draw[msgone] (0,0) -- (0.4,0)
              node[anchor=west,font=\scriptsize,green!55!black] {diff.\ from $\rone''$};
        \end{scope}
    
        \MarkByz{\yB}
    
        \tobiz{\draw[msgone] (0,\yA) -- (1,\yB);}
        \draw[msgone] (0,\yA) -- (1,\yC);
        \draw[msgone] (0,\yA) -- (1,\yD);
        \draw[msgone] (0,\yB) -- (1,\yD);
        \draw[msgone] (0,\yC) -- (1,\yA);
        \tobiz{\draw[msgone] (0,\yC) -- (1,\yB);}
        \draw[msgtwo] (0,\yC) -- (1.2,\yD);
        \draw[msgone] (0,\yD) -- (1,\yA);
        \tobiz{\draw[msgone] (0,\yD) -- (1,\yB);}
        \draw[msgone] (0,\yD) -- (1,\yC);
    
        \tobiz{\draw[msgone] (1,\yA) -- (2,\yB);}
        \draw[msgone] (1,\yA) -- (2,\yC);
        \draw[msgone] (1,\yA) -- (2,\yD);
        \draw[msgone] (1,\yC) -- (2,\yA);
        \tobiz{\draw[msgone] (1,\yC) -- (2,\yB);}
        \draw[msgtwo] (1,\yC) -- (2.2,\yD);
        \draw[msgtwo] (1,\yD) -- (2,\yA);
        \tobiz{\draw[msgtwo] (1,\yD) -- (2,\yB);}
        \draw[msgtwo] (1,\yD) -- (2,\yC);
        \draw[msgtwo] (1.2,\yD) -- (2.2,\yA);
        \tobiz{\draw[msgtwo] (1.2,\yD) -- (2.2,\yB);}
        \draw[msgtwo] (1.2,\yD) -- (2.2,\yC);
    
        \node[decision] (decC) at (2.4,\yC) {};
        \node[anchor=west,font=\scriptsize] at ($(decC)+(0.12,0)$) {$\vhigh_C = \vone$};
    }
    
    \caption{Execution $\rone$: $A(\vone), C(\vone), D(\vzero)$ are honest; $B$ is Byzantine and only sends messages to $D$; messages $C\to D$ are slowed down. $\vhigh_C(\rone) = \vone$ because $\rone \stackrel{C}\sim \rone'$.}
    \label{subfig:lb:fig2:rho1}
\end{subfigure}
\hfill
\begin{subfigure}[t]{\lbfigsize}
    \centering
    \ExecDiagram{
        \begin{scope}[shift={(2.4,0.1)}]
            \draw[msgone] (0,0.5) -- (0.4,0.5)
              node[anchor=west,font=\scriptsize,blue] {same as $\rone''$};
            \draw[msgtwo] (0,0.25) -- (0.4,0.25)
              node[anchor=west,font=\scriptsize,green!55!black] {same as $\rone$};
            \draw[msgthr] (0,0) -- (0.4,0)
              node[anchor=west,font=\scriptsize,red!55!black] {diff.\ from $\rone \& \rone''$};
        \end{scope}
    
        \MarkByz{\yD}
    
        \draw[msgone] (0,\yA) -- (1,\yB);
        \draw[msgone] (0,\yA) -- (1,\yC);
        \tobiz{\draw[msgone] (0,\yA) -- (1,\yD);}
        \draw[msgthr] (0,\yB) -- (3,\yA);
        \draw[msgthr] (0,\yB) -- (3,\yC);
        \tobizimportant{\draw[msgone] (0,\yB) -- (1,\yD);}
        \draw[msgone] (0,\yC) -- (1,\yA);
        \draw[msgone] (0,\yC) -- (1,\yB);
        \tobiz{\draw[msgone] (0,\yC) -- (1.2,\yD);}
        \draw[msgone] (0,\yD) -- (1,\yA);
        \draw[msgone] (0,\yD) -- (1,\yB);
        \draw[msgone] (0,\yD) -- (1,\yC);
    
        \draw[msgone] (1,\yA) -- (2,\yB);
        \draw[msgone] (1,\yA) -- (2,\yC);
        \tobiz{\draw[msgone] (1,\yA) -- (2,\yD);}
        \draw[msgthr] (1,\yB) -- (3,\yA);
        \draw[msgthr] (1,\yB) -- (3,\yC);
        \draw[msgone] (1,\yC) -- (2,\yA);
        \draw[msgone] (1,\yC) -- (2,\yB);
        \tobiz{\draw[msgone] (1,\yC) -- (2.2,\yD);}
        \draw[msgone] (1,\yD) -- (2,\yA);
        \draw[msgtwo] (1,\yD) -- (2,\yB);
        \draw[msgtwo] (1,\yD) -- (2,\yC);
        \draw[msgtwo] (1.2,\yD) -- (2.2,\yB);
        \draw[msgtwo] (1.2,\yD) -- (2.2,\yC);
    
        \node[decision] (decA) at (2,\yA) {};
        \node[anchor=west,font=\scriptsize] at ($(decA)+(0.12,0)$) {$\vlow_A = \vone$};
        \node[decision] (decC) at (2.4,\yC) {};
    }
    
    \caption{Execution $\rone'$: $A(\vone)$, $B(\vzero)$, and $C(\vone)$ are honest; $D$ is Byzantine and simulates $\rone''$ in its messages to $A$. 
    $\vlow_A(\rone') = \vone$ because $\rone' \stackrel{A}\sim \rone''$ $\Rightarrow$ $\vhigh_C(\rone') = \vone$.}
    \label{subfig:lb:fig2:rho1p}
\end{subfigure}
\hfill
\begin{subfigure}[t]{\lbfigsize}
    \centering
    \ExecDiagram{
    
        \MarkByz{\yB}
    
        \tobiz{\draw[msgone] (0,\yA) -- (1,\yB);}
        \draw[msgone] (0,\yA) -- (1,\yC);
        \draw[msgone] (0,\yA) -- (1,\yD);
        \draw[msgone] (0,\yC) -- (1,\yA);
        \tobiz{\draw[msgone] (0,\yC) -- (1,\yB);}
        \draw[msgone] (0,\yC) -- (1,\yD);
        \draw[msgone] (0,\yD) -- (1,\yA);
        \tobiz{\draw[msgone] (0,\yD) -- (1,\yB);}
        \draw[msgone] (0,\yD) -- (1,\yC);
    
        \tobiz{\draw[msgone] (1,\yA) -- (2,\yB);}
        \draw[msgone] (1,\yA) -- (2,\yC);
        \draw[msgone] (1,\yA) -- (2,\yD);
        \draw[msgone] (1,\yC) -- (2,\yA);
        \tobiz{\draw[msgone] (1,\yC) -- (2,\yB);}
        \draw[msgone] (1,\yC) -- (2,\yD);
        \draw[msgone] (1,\yD) -- (2,\yA);
        \tobiz{\draw[msgone] (1,\yD) -- (2,\yB);}
        \draw[msgone] (1,\yD) -- (2,\yC);
    
        \node[decision] (decA) at (2,\yA) {};
        \node[anchor=west,font=\scriptsize] at ($(decA)+(0.12,0)$) {$\vlow_A = \vone$};
    }
    
    \caption{Execution $\rone''$: $A(\vone)$, $C(\vone)$, and $D(\vzero)$ are honest; $B$ is Byzantine and silent. $\vlow_A(\rone'') = \vone$ by \Cref{lem:lb:validity}.}
    \label{subfig:lb:fig2:rho1pp}
\end{subfigure}

\vspace{2pt}

\caption{Executions $\rone$, $\rone'$, and $\rone''$ from the proof of \Cref{lem:lb:rho-one}. For clarity, messages to Byzantine parties and messages sent at or after $2\dhat$ are omitted from the picture.}

\label{fig:lb:fig2}
\end{figure*}

\subsection{Proof}

Consider 2 conflicting valid length-$L$ input vectors: $\vzero$ and $\vone$.
We will build 2 executions: $\rzero$ and $\rone$ such that $\vlow_B(\rzero) = \vzero$  and $\vhigh_C(\rone) = \vone$.
We will then construct a single execution $\rho^*$ where $\rho^* \stackrel{B}\sim \rzero$ and $\rho^* \stackrel{C}\sim \rone$.
By \Cref{obs:lb:similar-exec}, $\vlow_B(\rho^*) = \vzero$ and $\vhigh_C(\rho^*) = \vone$.
This violates the Prefix property of \pc---A contradiction with $\alg$ solving \pc.

Let us start with an auxiliary lemma.

\begin{lemma} \label{lem:lb:validity}
    Let $\rho$ be any valid execution of $\alg$ where all honest parties have input in $\{\vzero, \vone\}$.
    Let $b \in \{0, 1\}$, $H_b$ be the set of honest parties with input $v_b$, $H_{1-b}$ be the set of honest parties with input $v_{1-b}$, and $F$ be the set of Byzantine parties in $\rho$.
    If $\abs{H_{1-b}} \le f$ and the parties in $F$ are silent, then $\vlow_{H_b}(\rho) = v_b$.
\end{lemma}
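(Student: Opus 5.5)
We argue by indistinguishability with a run in which the parties in $H_{1-b}$ are Byzantine. Construct an auxiliary execution $\rho'$ with the same initial configuration on $H_b$ and the same message schedule among them. In $\rho'$, the parties in $H_{1-b}$ are Byzantine and behave exactly as the honest parties of $H_{1-b}$ do in $\rho$: each follows the protocol with input $v_{1-b}$ and sends the same messages at the same times. The parties in $F$ are no longer Byzantine in $\rho'$. Instead they are honest with input $v_b$, and all messages they send (and all messages sent to them) are delayed until after every party in $H_b$ has produced its output in $\rho$.

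First check that $\rho'$ is valid. Its Byzantine set is $H_{1-b}$, and $\abs{H_{1-b}}\le f$ by hypothesis. The delays on messages involving $F$ are finite, so asynchrony permits them, and termination is unaffected. The parties of $H_{1-b}$ receive the same messages in both executions, because messages from $F$ are absent in $\rho$ (silence) and delayed in $\rho'$. So the deterministic behaviour of $H_{1-b}$ in $\rho'$ is a legitimate Byzantine strategy that reproduces $\rho$ exactly, and every party in $H_b$ receives identical messages at identical times up to its output. Hence $\rho\stackrel{H_b}\sim\rho'$, and by \Cref{obs:lb:similar-exec} we get $\vlow_{H_b}(\rho)=\vlow_{H_b}(\rho')$.

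Now apply Validity in $\rho'$. The honest parties of $\rho'$ are exactly $H_b\cup F$, and all of them have input $v_b$. So $\mcp{\{v^{\sf in}_h\}}=v_b$, and Validity forces $v_b\preceq\vlow_i(\rho')$ for every honest $i$. Since outputs have length at most $L=\abs{v_b}$, this gives $\vlow_i(\rho')=v_b$. Therefore $\vlow_{H_b}(\rho)=v_b$.

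The main obstacle is latency. The paper measures latency against $\delta(\rho')$, so delaying $F$'s messages inflates $\delta(\rho')$ and could seem to interfere with the argument. It does not, because we use only Validity and Termination in $\rho'$, not the $2\delta$ latency bound. Those outputs are fixed by the views, and the views coincide with those in $\rho$.

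A minor edge case is $H_b=\emptyset$, where the statement is vacuous. The case where $F$ must be honest in $\rho'$ is handled because $F$'s input can be chosen freely as $v_b$.
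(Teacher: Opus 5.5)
Your proof is correct and follows the paper's argument. Like the paper, you build an auxiliary execution $\rho'$ in which $H_{1-b}$ is Byzantine but replays $\rho$, while $F$ is honest with input $v_b$ and slowed down; you then combine indistinguishability for $H_b$ with Validity in $\rho'$. The only difference is cosmetic: the paper delays $F$'s messages until $3\delta(\rho)$, using the assumed $2\delta$ latency to know that $H_b$ has output by $2\delta(\rho)$, whereas you delay them past the finite output time guaranteed by Termination in $\rho$, which shows the lemma does not actually depend on the latency of $\mathcal{A}$.
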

\begin{proof}
    Intuitively, from the perspective of any party in $H_b$, the parties in $H_{1-b}$ could be Byzantine,
    while the parties in $F$ could be honest with input $v_b$ whose messages are delayed. Hence, it could be that all honest parties have input $v_b$.

    More formally, let $\rho'$ be an execution identical to $\rho$, except:
    \begin{itemize}
        \item Parties in $H_{1-b}$ are Byzantine, but they act identically to $\rho$; all messages between $H_b$ and $H_{1-b}$ are delivered in the same order $\rho$.

        \item Parties in $F$ are honest with input $v_b$, but all their messages are delayed until time $3\delta(\rho)$.
    \end{itemize}

    In $\rho$, $H_b$ must output by the time $2\delta(\rho)$.
    Note, however, that for any party in $H_b$, $\rho'$ is identical to $\rho$ up to the time $3\delta(\rho)$.
    Hence, in $\rho'$, parties in $H_b$ must also output by the time $2\delta(\rho)$ and $\rho \stackrel{H_b}\sim \rho'$.
    Moreover, since all honest parties have input $v_b$ in $\rho'$, by Validity, $\vlow_{H_b}(\rho') = v_b$.
    By \Cref{obs:lb:similar-exec}, $\vlow_{H_b}(\rho) = \vlow_{H_b}(\rho') = v_b$.
\end{proof}

Let $\dhat$ be an arbitrary time interval.
From here on out, unless otherwise specified, in the executions that we are constructing, all parties start at the same time, and all messages take exactly $\dhat$ time to be delivered, with ties in the delivery order broken deterministically (e.g., messages from $A$ first, then $B$, $C$, and $D$).
Moreover, groups $A$ and $C$ will always have input $\vone$, and groups $B$ and $D$---$\vzero$.

Now, we can construct our 2 executions $\rzero$ and $\rone$  (illustrated in \Cref{subfig:lb:rho0,subfig:lb:rho1}) that we will later combine to arrive at an execution $\rho^*$ (\Cref{subfig:lb:rho-star}) where the Prefix property is violated:
\begin{itemize}
    \item $\rzero$ (\Cref{subfig:lb:rho0}): $A(\vone)$, $B(\vzero)$, and $D(\vzero)$ are honest; $C$ is Byzantine and silent.

    \item $\rone$ (\Cref{subfig:lb:rho1}): $A(\vone)$, $C(\vone)$, and $D(\vzero)$ are honest; $B$ is Byzantine.
    Parties in $B$ send the same messages to $D$ as in $\rzero$, but omit all other messages.
    Messages from $C \to D$ are slowed down and take $1.2\dhat$ instead of $1\dhat$ to be delivered.
\end{itemize}

\begin{lemma}
    $\vlow_B(\rzero) = \vzero$
\end{lemma}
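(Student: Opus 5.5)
This should follow directly from \Cref{lem:lb:validity}, applied to $\rzero$ with $b=0$. In $\rzero$ the honest parties are $A\cup B\cup D$, the Byzantine set is $F=C$, and $C$ is silent by construction. Since $A$ has input $\vone$ and $B$, $D$ have input $\vzero$, all honest inputs lie in $\{\vzero,\vone\}$. Hence $H_0 = B\cup D$ and $H_1 = A$.

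The plan has three checks and one conclusion.

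First, I check the size hypothesis: $\abs{H_{1-b}} = \abs{H_1} = \abs{A} = f \le f$.

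Second, I check that $F=C$ is silent, which holds by the definition of $\rzero$.

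Third, I check that $\rzero$ is a valid execution of $\alg$. It has $\abs{F}=f$ Byzantine parties, and every message among honest parties is delivered in exactly $\dhat$, so $\delta(\rzero)=\dhat$. By the latency assumption on $\alg$, every honest party therefore outputs by time $2\dhat$.

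With these in place, \Cref{lem:lb:validity} gives $\vlow_{H_0}(\rzero)=\vzero$. Since $B\subseteq H_0$, this yields $\vlow_B(\rzero)=\vzero$.

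I do not expect a real obstacle. The only care needed is matching the roles correctly: $b=0$, the minority set $H_1=A$ has size exactly $f$, and the silent Byzantine set is $C$. I would also note that the lemma's internal argument still goes through here. Its alternative execution $\rho'$ treats $A$ as Byzantine and $C$ as honest with input $\vzero$ but delayed. That execution has exactly $f$ faults, and there all honest inputs equal $\vzero$, so Validity forces the low output.
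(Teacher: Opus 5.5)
Your proof is correct and is the same as the paper's: a direct application of \Cref{lem:lb:validity} with $H_b=B\cup D$, $H_{1-b}=A$, and $F=C$. Your choice $b=0$ is the consistent one, since $H_b$ must be the set of honest parties with input $v_b$ and $B\cup D$ has input $\vzero$; the paper's text writes $b=1$, which appears to be a typo.
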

\begin{proof}
    Follows directly from~\Cref{lem:lb:validity}, where $b=1$, $H_b = B \cup D$, $H_{1-b} = A$, and $F = C$.
\end{proof}

\begin{lemma} \label{lem:lb:rho-one}
    $\vhigh_C(\rone) = \vone$
\end{lemma}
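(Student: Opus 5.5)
The plan follows the chain of executions depicted in \Cref{fig:lb:fig2}: $\rone \stackrel{C}\sim \rone'$, then $\rone' \stackrel{A}\sim \rone''$, then \Cref{lem:lb:validity} applied to $\rone''$.

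\emph{Step 1: the anchor $\rone''$.} Define $\rone''$ as in \Cref{subfig:lb:fig2:rho1pp}: $A(\vone)$, $C(\vone)$, $D(\vzero)$ are honest, $B$ is Byzantine and silent, and all delays are $\dhat$. Here $H_1 = A\cup C$, $H_0 = D$ with $\abs{D}\le f$, and $F=B$ is silent. Applying \Cref{lem:lb:validity} with $b=1$ gives $\vlow_A(\rone'')=\vone$. Since $\delta(\rone'')=\dhat$, $A$ outputs by time $2\dhat$.

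\emph{Step 2: the bridge $\rone'$.} Define $\rone'$: $A(\vone)$, $B(\vzero)$, $C(\vone)$ are honest and $D$ is Byzantine. Messages from $B$ to $A$ and $C$ are delayed beyond $2.4\dhat$ (say delivered at $3\dhat$), so $\delta(\rone')$ is at most about $3\dhat$. Toward $A$, $D$ sends exactly what it sent in $\rone''$; since $D$'s behavior in $\rone''$ depends only on messages from $A$ and $C$, $D$ can compute this. Toward $C$ and $B$, $D$ sends exactly what it sent in $\rone$. This is possible because in $\rone$, $D$'s messages depend on $A$'s messages, on the $1.2\dhat$-delayed messages of $C$, and on $B$'s messages, and $B$ in $\rone$ sends $D$ exactly its $\rzero$ messages. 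One must check that honest $B$ in $\rone'$ actually sends those messages to $D$ before time $2\dhat$. This requires $B$'s view in $\rone'$ to match its view in $\rzero$ up to about $1\dhat$ (so that its round-2 messages coincide). I expect this to hold, since $B$ receives from $A$ the same messages, $D$ can mimic its $\rzero$ messages to $B$, and messages from $C$ can be delayed. I will therefore additionally let $D$ send $B$ its $\rzero$ messages and delay $C\to B$ past $2\dhat$.

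With this construction, $A$'s view up to $2\dhat$ in $\rone'$ equals its view in $\rone''$: it sees the same messages from $C$ and $D$, nothing from $B$, and no messages that it sends itself are affected. Hence $\rone' \stackrel{A}\sim \rone''$ and $\vlow_A(\rone')=\vone$. By the Prefix property in $\rone'$, $\vone\preceq\vhigh_C(\rone')$. Since outputs have length at most $L$ and $\abs{\vone}=L$, this gives $\vhigh_C(\rone')=\vone$.

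\emph{Step 3: transfer to $\rone$.} $C$ outputs in $\rone$ by $2\delta(\rone)=2.4\dhat$. Its view up to then consists of messages from $A$ (identical, since $A$'s messages to $C$ depend only on what $A$ receives from $C$ and $D$, and those agree up to $1.4\dhat$), from $D$ (identical by construction), and nothing from $B$ (silent toward $C$ in $\rone$, delayed past $2.4\dhat$ in $\rone'$). Hence $\rone\stackrel{C}\sim\rone'$, and \Cref{obs:lb:similar-exec} gives $\vhigh_C(\rone)=\vone$.

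\emph{Main obstacle.} The delicate part is Step 2's consistency bookkeeping. $D$ must produce, toward $C$, the $\rone$-messages, which causally depend on $B$'s $\rzero$-messages; this requires honest $B$ in $\rone'$ to reproduce its $\rzero$ behavior toward $D$ in time. Simultaneously, $A$'s messages to $C$ must match $\rone$ even though $A$ receives different messages from $D$ in $\rone''$ and in $\rone$. The latter works because in $\rone$, $D$'s messages to $A$ up to $1\dhat$ are just its initial messages, and later differences arrive at $A$ too late to affect what $A$ sends to $C$ before $2.4\dhat$. If that timing fails, I would adjust the delays, for instance by slowing $D\to A$ relative to $C$, to keep all causal chains outside $C$'s decision window.
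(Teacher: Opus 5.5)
Your proposal is correct and follows the paper's proof essentially verbatim. It uses the same chain $\rone \stackrel{C}\sim \rone' \stackrel{A}\sim \rone''$: $D$ is Byzantine, mimicking $\rone$ toward $C$ and $\rone''$ toward $A$, $B$ is honest with its messages to $A$ and $C$ delayed to $3\dhat$, and \Cref{lem:lb:validity} is applied to $\rone''$. Your extra bookkeeping to make $B$ reproduce its $\rzero$ round-2 behavior is harmless but unnecessary: $D$'s messages that reach $C$ by $2.4\dhat$ are sent by $1.4\dhat$, so they depend only on $B$'s initial messages.
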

\begin{proof}
    Intuitively, from the perspective of parties in $C$, $D$ could be Byzantine and not have forwarded the information about group $B$ to group $A$. In that case, $A$ could have output $\vlow = \vone$, so to respect the Prefix property of \pc, $C$ must output $\vhigh = \vone$.

    More formally, let $\rone'$ (\Cref{subfig:lb:fig2:rho1p}) be an execution identical to $\rone$ (\Cref{subfig:lb:fig2:rho1}, except:
    \begin{itemize}
        \item $D$ is Byzantine and acts identically to $\rone$ toward parties in $C$, but pretends not to have seen any messages from $B$ in their messages toward $A$.
        \item $B(1)$ is honest, but its messages to $\Pi \setminus D$ are delayed until $3\dhat$; all other messages are delivered in the same order as in $\rone$. 
    \end{itemize}
    In $\rone$, $C$ must output by the time $2\delta(\rone) = 2.4\dhat$.
    Moreover, the set of messages received by $C$ in $\rone'$ is identical to $\rone$ up to the time $3\dhat$ when it receives the delayed messages from $B$.
    Hence, $C$ must output by the time $2.4\dhat$ in $\rone'$ as well and $\rone' \stackrel{C}\sim \rone$.
    
    Finally, we demonstrate that $\vlow_A(\rone') = \vone$ and, thus, by the Prefix property of \pc, $\vhigh_C(\rone') = \vone$.
    To this end, consider execution $\rone''$ (\Cref{subfig:lb:fig2:rho1pp}) where $A(\vone)$, $C(\vone)$, and $D(\vzero)$ are honest; $B$ is Byzantine and silent.
    By~\cref{lem:lb:validity}, $\vlow_A(\rone'') = \vone$.
    The set of messages received by $A$ in $\rone'$ and $\rone''$ before the time $2\dhat$ is identical and thus $\rone' \stackrel{A}\sim \rone''$ and $\vlow_A(\rone') = \vlow_A(\rone'') = \vone$.
    Hence, $\vhigh_C(\rone') = \vone$.
    Since $\rone \stackrel{C}\sim \rone'$, $\vhigh_C(\rone) = \vhigh_C(\rone') = \vone$.
\end{proof}

Finally, let us construct the execution $\rho^*$ (\Cref{subfig:lb:rho-star}) by combining $\rzero$ and $\rone$ in a way that causes the disagreement.
In $\rho^*$, $A$ is Byzantine. It mimics $\rzero$ in its messages toward parties in $B$ and $\rone$ in its messages toward parties in $C$ and $D$.
Similarly to $\rone$, messages from $C$ to $D$ are slowed down and take $1.2\dhat$ to be delivered.
Messages between $B$ and $C$ are delayed to $3\dhat$.
As illustrated in \Cref{fig:lb:fig1}, messages delivered by $B$ up to the time $2\dhat$ are identical to $\rzero$ and since $B$ outputs at time $2\dhat$ in $\rzero$, it must also output at time $2\dhat$ in $\rho^*$ and $\rho^* \stackrel{B}\sim \rzero$
Similarly, $\rho^* \stackrel{C}\sim \rone$.
Hence, $\vlow_B(\rho^*) = \vzero$ and $\vhigh_C(\rho^*) = \vone$, violating the Prefix property of \pc---A contradiction.

\section{\pcver}
\label{sec:pcver}

An outer protocol such as \pcstr must safely process \pc outputs presented
by potentially Byzantine parties.
\pcver therefore attaches publicly checkable proofs to the low and high
outputs.
The \pc protocol of \Cref{sec:pc} realizes \pcver without additional
messages or rounds by exposing its existing \qcthree as the output proof.

\begin{definition}[\pcver]\label{def:pcver}
\pcver satisfies \pc, and each party $i$ additionally outputs proofs
$\pi_i^{\sf low}$ and $\pi_i^{\sf high}$ for its low and high outputs.
For every instance identifier $\id$, public predicates
$\mathcal F_{\id}^{\sf low}$ and
$\mathcal F_{\id}^{\sf high}$ satisfy:
\begin{itemize}
    \item \textbf{Proof Completeness.}
    For every honest party $i$, its output proofs are accepted:
    \[
    \mathcal F_{\id}^{\sf low}(v_i^{\sf low},\pi_i^{\sf low})
    =
    \mathcal F_{\id}^{\sf high}(v_i^{\sf high},\pi_i^{\sf high})
    =
    {\sf true}.
    \]

    \item \textbf{Proof Soundness.}
    If $(v^{\sf low},\pi^{\sf low})$ and
    $(v^{\sf high},\pi^{\sf high})$ are publicly accepted for the same
    instance, then
    \[
    v^{\sf low}\preceq v^{\sf high}
    \qquad\text{and}\qquad
    \mcp{\{v_h^{\sf in}\}_{h\in\cH}}\preceq v^{\sf low}.
    \]
\end{itemize}
\end{definition}

\paragraph{Construction.}
In the protocol from \Cref{alg:pc}, each party $i$ sets
$\pi_i^{\sf low}=\pi_i^{\sf high}=\qcthree$.
Predicate ${\sf ValidQC3}(\id,\qcthree)$ recursively checks that:
(i) every QC contains $n-f$ valid messages from distinct senders for
instance $\id$;
(ii) all signatures and nested certificates are valid; and
(iii) every claimed value equals the output of its certification
procedure.
Define
\[
\begin{aligned}
\mathcal F_{\id}^{\sf low}(v,\qcthree)
&\Longleftrightarrow
{\sf ValidQC3}(\id,\qcthree)
\ \wedge\ (v,*)={\sf QC3Certify}(\qcthree),\\
\mathcal F_{\id}^{\sf high}(v,\qcthree)
&\Longleftrightarrow
{\sf ValidQC3}(\id,\qcthree)
\ \wedge\ (*,v)={\sf QC3Certify}(\qcthree).
\end{aligned}
\]

\begin{theorem}\label{thm:pc:verifiable}
\Cref{alg:pc}, augmented with its nested \qcthree output proof, solves
\pcver with the same round, message, and communication complexities as
\pc.
\end{theorem}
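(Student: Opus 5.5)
Since \Cref{thm:pc:1} already establishes that \Cref{alg:pc} solves \pc, what remains is Proof Completeness, Proof Soundness, and the complexity claim. The complexity claim is immediate. The proof $\pi_i^{\sf low}=\pi_i^{\sf high}=\qcthree$ is an object every party already holds at output time. No extra message or round is added, so the counts from \Cref{thm:pc:complexity} are unchanged. The output proof has size $O((\const L+\sigsize)n^3)$, which is the size of a \qcthree the party has already received.

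For Proof Completeness I would argue as follows. An honest party forms \qcthree only from \votethree messages it has verified, and verification applies exactly the checks of ${\sf ValidQC3}$, including recursive validity of nested QCs. Hence ${\sf ValidQC3}(\id,\qcthree)$ holds. The honest party also sets $(v_i^{\sf low},v_i^{\sf high}):={\sf QC3Certify}(\qcthree)$, so both predicates accept by definition.

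For Proof Soundness, the key observation is that the proofs of \Cref{lem:pc:1}, \Cref{lem:pc:2}, and the Prefix/Validity parts of \Cref{thm:pc:1} never use that the QCs were assembled by honest parties. They use only two facts: each QC contains $n-f$ validly signed votes from distinct senders whose claimed values equal the certification outputs, and honest parties sign at most one vote per phase and instance. Both facts are enforced by ${\sf ValidQC3}$ together with unforgeability of signatures and domain separation. So take any accepted $(v^{\sf low},\qcthree)$ and $(v^{\sf high},\qcthree')$ for instance $\id$. Quorum intersection ($2(n-f)-n\ge f+1$) gives an honest \votethree carrying some $\xmcp$ that appears in both QCs. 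Then $v^{\sf low}=\mcp{\cdot}\preceq\xmcp\preceq\mce{\cdot}=v^{\sf high}$. Here $\mce{\cdot}$ is well defined by \Cref{lem:pc:1}, applied to the valid nested \qctwo's, which are consistent again by quorum intersection on honest \votetwo messages.

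For the Validity half, \Cref{lem:pc:2} applies to every valid nested \qcone, because it relies only on \Cref{lem:pc:qc1-threshold} and the correctness of the claimed threshold prefix, and ${\sf ValidQC3}$ checks that correctness. So $\mcp{\{v_h^{\sf in}\}_{h\in\cH}}$ is a prefix of every $\x$ in a valid \qctwo, hence of every $\xmcp$, hence of $\mcp{\cdot}=v^{\sf low}$.

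The main subtlety, and the step I expect to need the most care, is justifying that the honest party shared by the two QCs signed the \emph{same} value in both. This needs domain separation by phase and instance $\id$, plus the rule that honest parties broadcast only one vote per phase. With these, any validly signed honest \votethree for $\id$ is unique, and the intersection argument goes through for arbitrary, possibly adversarially assembled, certificates.
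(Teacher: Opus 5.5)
Your proposal is correct and takes the same route as the paper. Proof Completeness holds because every honest output is computed from a \qcthree that already passes the recursive checks of ${\sf ValidQC3}$; Proof Soundness holds because the Prefix and Validity arguments of \Cref{thm:pc:1} (via \Cref{lem:pc:1,lem:pc:2}) use only conditions that recursive validation enforces, so they carry over to arbitrary publicly accepted certificates, and no messages or data are added. You simply spell out the quorum-intersection and unique-honest-signed-vote details that the paper's short proof leaves implicit.
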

\begin{proof}
Every honest output is computed from a valid \qcthree, so its proof is
accepted by the corresponding predicate, establishing Proof Completeness.
Recursive validation ensures that every accepted certificate satisfies
the conditions used in the analysis of \Cref{alg:pc}.
The arguments for Prefix and Validity in \Cref{thm:pc:1} therefore apply
to all publicly accepted outputs.
No additional protocol messages or certificate data are introduced.
\end{proof}

\section{\pcstr}
\label{sec:pcstr-leaderless}

In this section, we present a {\em leaderless} (\Cref{def:leaderless}) construction of \pcstr that makes black-box use of \pcver.
For the case considered here, the adversary may suspend one process per
round while up to $f-1$ other processes are Byzantine.
Our construction has worst-case message complexity $O(n^3)$ and communication complexity $O(n^4)$ (\fullversionref{thm:pcstr:complexity}).
As noted in \Cref{sec:def:pcstr:relation}, any \pcstr protocol immediately
yields Binary Consensus.
Consequently, our construction yields leaderless Binary Consensus with
worst-case message complexity $O(n^3)$, improving over the prior
$O(n^4)$ bound of~\cite{antoniadis2021leaderless}.

\begin{algorithm}[t!]
\caption{\pcstr Protocol using Leaderless Validated Consensus Protocol}
\label{alg:pcstr-leaderless:simple}
\centering
\begin{algorithmic}[1]

\Function{\sf Predicate}{$v,\pi$}\label{line:pcstr-simple:predicate}
    \State \Return $\cF^{\sf high}(v,\pi)$
\EndFunction

\Statex

\Upon{{\bf input} $v_{i}^{\sf in}$}
    \State run protocol $\Pi^{\sf VPC}$ with input $v_{i}^{\sf in}$
\EndUpon

\Statex

\Upon{$\Pi^{\sf VPC}$ outputs $(v^{\sf low}_{i},\pi^{\sf low}_{i})$}
    \State {\bf output} $v^{\sf low}_i$
    \Comment{Output low value}
\EndUpon

\Statex

\Upon{$\Pi^{\sf VPC}$ outputs $(v^{\sf high}_{i},\pi^{\sf high}_{i})$}
    \State run protocol $\Pi^{\sf LVC}$ with input $v^{\sf high}_{i}$ and proof $\pi^{\sf high}_{i}$
    \Statex \Comment{$\Pi^{\sf LVC}$ has predicate defined in line~\ref{line:pcstr-simple:predicate}}
\EndUpon

\Statex

\Upon{$\Pi^{\sf LVC}$ outputs $v^{\sf high}_{i}$}
    \State {\bf output} $v^{\sf high}_i$
    \Comment{Output high value}
\EndUpon

\end{algorithmic}
\end{algorithm}

\paragraph{Warmup: a simple construction from leaderless Validated Consensus.}
Suppose we are given a leaderless Validated Consensus protocol $\Pi^{\sf LVC}$. Then we obtain a simple construction of \pcstr by combining a \pcver protocol $\Pi^{\sf VPC}$ with $\Pi^{\sf LVC}$, as shown in \Cref{alg:pcstr-leaderless:simple}.
At a high level, parties first invoke $\Pi^{\sf VPC}$ on their input and immediately output the resulting low value. To agree on the high value, they invoke $\Pi^{\sf LVC}$ on the high-value candidate produced by \pcver together with its proof, and output the decision of $\Pi^{\sf LVC}$ as the high value. Since \pcver produces verifiable high values, the external-validity predicate for $\Pi^{\sf LVC}$ simply checks whether the supplied proof certifies the high value.

Correctness is immediate: Agreement on the high output follows from Agreement of $\Pi^{\sf LVC}$; the Prefix property follows from the corresponding property of \pcver together with its Verifiability and the External Validity of $\Pi^{\sf LVC}$; Validity follows from Validity of \pcver; and Termination follows from Termination of both \pcver and $\Pi^{\sf LVC}$.

However, to the best of our knowledge, no such leaderless Validated Consensus protocol is known. The best existing leaderless construction incurs worst-case message (and communication) complexity $O(n^4)$ under the validity requirement that every decided value was previously proposed~\cite{antoniadis2021leaderless}.

\paragraph{Overview and intuition of our protocol.}
The protocol runs a sequence of \emph{views}. In each view $w$, parties invoke a \pcver instance $\Pi^{\sf VPC}_w$. Each instance outputs a verifiable \emph{low} value (safe to commit) and a verifiable \emph{high} value (safe to extend), each with a publicly checkable proof.

View~1 runs \pcver on the \pcstr input vector (e.g., the set of block proposals in later \smr constructions).
Its verifiable low output is immediately taken as the low output of \pcstr.
The remaining task is to agree on a verifiable high value that extends this committed low.

Later views do not add new inputs. Instead, parties exchange publicly verifiable certificates derived from earlier views and run \pcver on the resulting certificate vector.
The vector order is deterministic and cycles across views.
Once some later view commits a non-empty certificate vector, its first
entry unequal to $\hash(\bot)$ defines a publicly checkable parent pointer
to an earlier view's verifiable high.
Following parent pointers yields a unique chain back to view~1, which fixes a unique decided view-1 high output and hence Agreement.

Cyclic shifting also drives Termination in a leaderless way.
After GST, the shift ensures that eventually two honest parties occupy the first two positions of the order; thus, even if the adversary can suspend one party per round (\Cref{def:leaderless}), it cannot indefinitely prevent some view from committing a non-empty certificate vector.

\begin{algorithm*}[t!]
\caption{\pcstr Protocol for party $i$}
\label{alg:pcstr-leaderless}
\centering
\vspace{-0.6em}
\begin{algorithmic}[1]

\begin{multicols}{2}
\raggedcolumns
\setlength{\columnsep}{1.2em}

\Statex {\bf Local variables (party $i$):}
\Statex $w_i$: current view, initialized to $1$.
\Statex $\rank^{\sf SPC}_{i,w}:={\sf Shift}^{w-1}(\rank^{\sf SPC}_1)$:
public ranking for every view $w$.
\Statex $P_{i,w}[\cdot]$: length-$n$ buffer of received proposal objects for view $w>1$, initialized to all $\bot$.
\Statex $Q_{i,w}$: set of received \emptyview messages for view $w$, initialized to $\emptyset$.
\Statex $(w^{\sf high}_\star, v^{\sf high}_\star,\pi^{\sf high}_\star)$: max view $w^{\sf high}_\star$ for which $i$ holds a verifiable high output, initialized to $(0,[\,],\bot)$.

\Function{\sf High}{$\cert$}
    \If{$\cert=\cert^{\sf dir}(w,v,\pi)$
        \textbf{or} $\cert=\cert^{\sf ind}(*,(w,v,\pi),*)$}
        \State \Return $(w,v,\pi)$
    \EndIf
    \State \Return $(0,[\,],\bot)$
\EndFunction
\Function{\sf Parent}{$v$}\label{line:smr:parent}
    \For{$k=1,\dots,|v|$}
        \If{$v[k]\neq \hash(\bot)$}
            \State fetch preimage $P_k=(w,\cert)$ of $v[k]$ if needed
            \State \Return ${\sf High}(\cert)$
        \EndIf
    \EndFor
    \State \Return $(0,[\,],\bot)$
\EndFunction
\Function{\sf HasParent}{$w, v$}
    \State \Return $w=1 \vee {\sf Parent}(v)\neq (0,[\,],\bot)$
\EndFunction
\Function{\sf ValidHigh}{$w,v,\pi$}
    \State \Return $\cF^{\sf high}_w(v,\pi)\wedge{\sf HasParent}(w,v)$
\EndFunction
\Function{\sf ValidSkip}{$w,\Sigma$}
    \State \Return $\Sigma$ is a valid $f+1$-signature aggregate
    \Statex \hspace{\algorithmicindent}
        $\wedge\ \forall(w',w_h,d_h)\in\Sigma:
        w'=w\wedge 0<w_h<w$
\EndFunction
\Function{\sf ValidCert}{$w,\cert$}
    \If{$\cert=\cert^{\sf dir}(w-1,v,\pi)$}
         \State \Return ${\sf ValidHigh}(w-1,v,\pi)$
    \ElsIf{$\cert=\cert^{\sf ind}(w-1,(w_{\star},v_{\star},\pi_{\star}),\Sigma)$}
        \If{$\neg{\sf ValidSkip}(w-1,\Sigma)$} \State \Return {\sf false} \EndIf
        \State $w_h := \max\{\,w_h' : (w',w_h',*)\in\Sigma\,\}$
        \State \Return $\begin{aligned}[t]
            &w_{\star}=w_h
            \ \wedge\ (w-1,w_\star,\hash(v_\star,\pi_\star))\in\Sigma\\
            &\wedge\ {\sf ValidHigh}(w_\star,v_\star,\pi_\star)
            \end{aligned}$
    \Else \, \Return {\sf false}
    \EndIf
\EndFunction

\Upon{{\bf input} $v_{i}^{\sf in}$}
    \State run protocol $\Pi^{\sf VPC}_1$ with input $v_{i}^{\sf in}$
\EndUpon

\Upon{first invocation of {\sf RunVPC}$(w)$}
    \State $\rank^{\sf SPC}_{i,w}=(p_1,\dots,p_n):={\sf Shift}(\rank^{\sf SPC}_{i,w-1})$
    \State $v^{\sf in}_{i,w}:=[\,\hash(P_{i,w}[p_1]),\dots,\hash(P_{i,w}[p_n])\,]$
    \State run $\Pi^{\sf VPC}_w$ with input $v^{\sf in}_{i,w}$
\EndUpon

\Upon{$\Pi^{\sf VPC}_w$ outputs $(v^{\sf low}_{i,w},\pi^{\sf low}_{i,w})$}
    \State broadcast $\newcommit:=(w, v^{\sf low}_{i,w},\pi^{\sf low}_{i,w})$
\EndUpon

\Upon{$\Pi^{\sf VPC}_w$ outputs $(v^{\sf high}_{i,w},\pi^{\sf high}_{i,w})$}
    \If{${\sf HasParent}(w, v^{\sf high}_{i,w})$}
        \If{$w>w^{\sf high}_\star$} \State
            $(w^{\sf high}_\star,v^{\sf high}_\star,\pi^{\sf high}_\star)
            :=(w,v^{\sf high}_{i,w},\pi^{\sf high}_{i,w})$ \EndIf
        \State broadcast $\newview:=(w+1,\cert^{\sf dir}(w, v^{\sf high}_{i,w},\pi^{\sf high}_{i,w}))$
    \Else
        \State $d_\star:=\hash(v^{\sf high}_\star,\pi^{\sf high}_\star)$
        \State broadcast $\emptyview :=(w,(w^{\sf high}_\star, v^{\sf high}_\star,\pi^{\sf high}_\star),\sig_i((w,w^{\sf high}_\star,d_\star)))$
    \EndIf
\EndUpon

\Upon{receiving $\newview=(w,\cert)$ from party $p$}
    \If{${\sf ValidCert}(w,\cert) \wedge w\geq w_i$}
        \If{$w>w_i$}
            \Comment{Enter new view}
            \State broadcast $\newview:=(w,\cert)$
            \State $w_i := w$
            \State start $\timer(w_i)$ for $2\Delta$
        \EndIf
        \State $(w',v,\pi):={\sf High}(\cert)$
        \If{$w'>w^{\sf high}_\star$} \State
            $(w^{\sf high}_\star,v^{\sf high}_\star,\pi^{\sf high}_\star)
            :=(w',v,\pi)$ \EndIf
        \State $P_{i,w}[p]:= (w,\cert)$
        \If{$P_{i,w}[j]\neq \bot$ for $\forall j\in[n]$}
            {\sf RunVPC}$(w)$
        \EndIf
    \EndIf
\EndUpon

\Upon{receiving $\emptyview=(w,(w_h,v_h,\pi_h),\sigma)$ from party $j$}
    \State $d_h:=\hash(v_h,\pi_h)$
    \If{$w\ge w_i \wedge w>w_h \wedge
        \ver_j((w,w_h,d_h),\sigma)
        \wedge{\sf ValidHigh}(w_h,v_h,\pi_h)$}
        \State $Q_{i,w} := Q_{i,w}\cup\{(j, w_h,v_h,\pi_h,\sigma)\}$ if $(j, *, *, *, *)\not\in Q_{i,w}$
        \If{$|Q_{i,w}|=f+1$}
            \State $\Sigma :=$ aggregate signatures on
                $(w,w_h,\hash(v_h,\pi_h))$ from $Q_{i,w}$
            \State $(w_\star, v_\star,\pi_\star):=$ any triple with max $w_h$
            \State $\cert:=\cert^{\sf ind}(w, (w_\star, v_\star,\pi_\star),\Sigma)$
            \State broadcast $\newview:=(w+1,\cert)$
        \EndIf
    \EndIf
\EndUpon

\Upon{receiving $\newcommit=(w,v,\pi)$}
    \If{$\cF^{\sf low}_w(v,\pi)$}
        \State broadcast $\newcommit$ if not broadcast for $w$
        \State {\sf Commit}$(w, v,\pi)$
    \EndIf
\EndUpon

\Upon{$\timer(w)$ expires and $w=w_i$}
    \State {\sf RunVPC}$(w)$
\EndUpon

\Upon{{\sf Commit}$(w, v,\pi)$}
    \If{$w=1$}
        \If{$i$ has not output a low value}
            \State {\bf output} $(v^{\sf low}_i,\pi^{\sf low}_i):=(v,\pi)$
        \EndIf
    \ElsIf{${\sf HasParent}(w,v)$}
        \State $(w_p,v_p,\pi_p):={\sf Parent}(v)$
        \If{$w_p=1$ and $i$ has not output a high value}
            \State {\bf output} $(v^{\sf high}_i,\pi^{\sf high}_i):=(v_p,\pi_p)$
        \ElsIf{$w_p>1$}
            \State {\sf Commit}$(w_p, v_p,\pi_p)$
        \EndIf
    \EndIf
\EndUpon

\end{multicols}
\end{algorithmic}
\vspace{-0.6em}
\end{algorithm*}

\subsection{Definition}
\label{sec:pcstr-leaderless:def}

In each view $w$, parties invoke $\Pi^{\sf VPC}_w$, with public low/high predicates $\cF^{\sf low}_w(\cdot)$ and $\cF^{\sf high}_w(\cdot)$.
In addition to the usual \pcstr input, the protocol takes a public
\emph{initial ranking} $\rank^{\sf SPC}_1$, identical at all honest
parties.
It fixes the party order used by the outer protocol (e.g., \smr) to form
the view-1 \pcver input vector; all later rankings are derived
deterministically from it.

\subsubsection{Ranking.}
\label{sec:pcstr-leaderless:def:ranking}
The input ranking $\rank^{\sf SPC}_{1}$ (e.g., $(1,2,\ldots,n)$) is
used in view~1.
For every view $w\ge 2$, the ranking is updated by a fixed cyclic shift:
\[
\rank^{\sf SPC}_{i,w} := {\sf Shift}(\rank^{\sf SPC}_{i,w-1}) = (p_2,p_3,\ldots,p_n,p_1)
\]
where $\rank^{\sf SPC}_{i,w-1}=(p_1,\ldots,p_n)$.
These rankings are defined for every view independently of whether party
$i$ enters the intermediate views.
All honest parties therefore agree on $\rank^{\sf SPC}_{i,w}$ for every view $w$.

\paragraph{Instance binding.}
All protocol objects implicitly bind the \pcstr instance $\sid$, message
type, and view.
The view-$w$ \pcver identifier additionally binds
$\rank^{\sf SPC}_{i,w}$:
\[
\id_w:=\hash(\sid,w,\rank^{\sf SPC}_{i,w}).
\]
We omit these fields from the notation, and treat every broadcast as
locally processed by its sender.

\subsubsection{\pcver Input.}
In view~1, each party $i$ runs $\Pi^{\sf VPC}_1$ on its \pcstr input vector $v^{\sf in}_{i}$ (as in \Cref{def:pcstr}).
For each view $w>1$, party $p$ broadcasts a proposal object $P_{p,w}=(w,\cert)$, where $\cert$ is a certificate authorizing entry into view $w$ (defined below).
Let $P_{i,w}[p]$ be party $i$'s view of $P_{p,w}$ (or $\bot$ if missing). Party $i$ orders these by $\rank^{\sf SPC}_{i,w}=(p_1,\ldots,p_n)$ and inputs to $\Pi^{\sf VPC}_w$:
\[
v^{\sf in}_{i,w} = [\,\hash(P_{i,w}[p_1]),\ldots,\hash(P_{i,w}[p_n])\,],
\]
where $\hash(\bot)$ is a fixed distinguished value.

\subsubsection{Certificates.}
\label{sec:pcstr-leaderless:def:cert}

A certificate $\cert_{i,w}$ authorizing entry into view $w$ is of one of the following forms:
\begin{itemize}
    \item \textbf{Direct certificate:}\;
    $\mathsf{Cert}^{\sf dir}(w-1, v^{\sf high}_{w-1}, \pi^{\sf high}_{w-1})$,
    where $(v^{\sf high}_{w-1}, \pi^{\sf high}_{w-1})$ is a high output of $\Pi^{\sf VPC}_{w-1}$ that has a parent.

    \item \textbf{Indirect certificate:}\;
    $\mathsf{Cert}^{\sf ind}(w-1,(w_\star, v^{\sf high}_{w_\star}, \pi^{\sf high}_{w_\star}), \Sigma)$
    where $0<w_\star < w-1$.
    $\Sigma$ is an aggregate signature on $f+1$ skip statements of the
    form
    \[
    (w-1,w',\hash(v^{\sf high}_{w'},\pi^{\sf high}_{w'})),
    \]
    where each signer reports its latest parented high
    $(w',v^{\sf high}_{w'},\pi^{\sf high}_{w'})$ with $0<w'<w-1$.
    Define
    \[
    w_\star := \max\{\, w' : (w-1,w',*) \in \Sigma \,\}.
    \]
    The attached $(v^{\sf high}_{w_\star},\pi^{\sf high}_{w_\star})$
    must match one of the signed statements attaining this maximum and
    serves as the parent.
    Thus $\Sigma$ certifies that no view in $(w_\star,w-1]$ has a parented
    high known to the above $f+1$ parties.
\end{itemize}

\subsubsection{Parent.}
Committed values in views $w>1$ are vectors of hashes of proposal objects.
The \emph{parent} of such a vector is determined by its first entry
unequal to $\hash(\bot)$:
if that entry contains a direct certificate, the parent is
$(w-1,v^{\sf high},\pi^{\sf high})$; if it contains an indirect
certificate, the parent is
$(w_\star,v^{\sf high}_\star,\pi^{\sf high}_\star)$.
If no parent exists, set ${\sf Parent}(v):=(0,[\,],\bot)$.
We call the vector \emph{non-empty} if it induces a parent, and
\emph{empty} otherwise.

\subsubsection{Data Availability.}
Since \pcstr runs \pcver on vectors of hashes, we require Availability
(\fullversionref{def:pc:availability}) so that every decided hash is retrievable from
some honest party.
We assume collision resistance and a canonical encoding for hashed
proposal objects.
A proof for \Cref{alg:pc} and its \pcver variant is given in
\fullversionref{lem:availability}.
We assume a standard dispersal layer under which honest holders retain and
serve preimages, and every certified preimage becomes available to all
honest parties within the corresponding proposal-collection window.
Its communication is included in the stated asymptotic bounds.

\subsection{Protocol Description.}
\label{sec:pcstr-leaderless:protocol}
The pseudocode is given in \Cref{alg:pcstr-leaderless}. The protocol proceeds in views.
In every view $w$, parties run the \pcver instance $\Pi^{\sf VPC}_w$ on a length-$n$ hash vector and obtain
a verifiable low/high pair with proofs.

\paragraph{View~1 (deciding the low).}
Each party $i$ runs $\Pi^{\sf VPC}_1$ on its \pcstr input vector $v_i^{\sf in}$.
Upon receiving a verifiable low output $(v^{\sf low}_{i,1},\pi^{\sf low}_{i,1})$, it broadcasts $\newcommit=(1,v^{\sf low}_{i,1},\pi^{\sf low}_{i,1})$.
By the ${\sf Commit}$ procedure, any valid $\newcommit$ in view~1 makes
parties output $(v^{\sf low},\pi^{\sf low})
:=(v^{\sf low}_{i,1},\pi^{\sf low}_{i,1})$.

\paragraph{Views $w\ge 2$ (agreeing on the view-1 high).}
Views $w\ge 2$ carry no new inputs.
Instead, each party proposes a \emph{certificate} that is publicly verifiable and derived from earlier views.
Concretely, a party enters view $w$ upon receiving a valid $\newview=(w,\cert)$, rebroadcasts it once, starts a timeout,
and stores the sender's proposal object (\newview) in a buffer $P_{i,w}[\cdot]$.

To run $\Pi^{\sf VPC}_w$, party $i$ applies one cyclic shift to the
previous view's ranking, orders senders by $\rank^{\sf SPC}_{i,w}$, and
forms
\[
v^{\sf in}_{i,w}=[\,\hash(P_{i,w}[p_1]),\ldots,\hash(P_{i,w}[p_n])\,],
\]
using $\hash(\bot)$ for missing entries. Party $i$ invokes $\Pi^{\sf VPC}_w$ either after receiving all $n$ proposal objects
or when the view timer expires, whichever occurs first; each view instance
is invoked at most once.
Messages delivered by the timer deadline are processed before the timeout
event.
In a synchronous-minus-one execution, this deadline is the end of the
common logical proposal round: all correct parties freeze their buffers at
that round boundary before invoking \pcver.

Upon a low output $(v^{\sf low}_{i,w},\pi^{\sf low}_{i,w})$, party $i$ broadcasts $\newcommit=(w,v^{\sf low}_{i,w},\pi^{\sf low}_{i,w})$.
Upon receiving any valid $\newcommit$, a party calls
${\sf Commit}(w,v,\pi)$.
If $v$ has a parent, ${\sf Commit}$ follows parent pointers together with
their proofs until reaching view~1 and outputs the resulting
$(v^{\sf high},\pi^{\sf high})$ pair.
Each party outputs its low and high pairs at most once.

\paragraph{Direct advance vs.\ skipping.}
If $\Pi^{\sf VPC}_w$ outputs a high value $v^{\sf high}_{i,w}$ that \emph{has a parent} (i.e., ${\sf HasParent}(w,v^{\sf high}_{i,w})={\sf true}$),
party $i$ broadcasts a direct certificate for the next view.

Otherwise, party $i$ broadcasts an $\emptyview$ message containing its current highest non-empty high triple
$(w^{\sf high}_\star,v^{\sf high}_\star,\pi^{\sf high}_\star)$ and a
signature on
$(w,w^{\sf high}_\star,
\hash(v^{\sf high}_\star,\pi^{\sf high}_\star))$.
Once a party collects $f{+}1$ valid $\emptyview$ messages for the same view $w$, it aggregates the signatures into $\Sigma$,
selects any reported triple with maximum view index, forms an indirect certificate
$\cert^{\sf ind}(w,(w_\star,v_\star,\pi_\star),\Sigma)$, and broadcasts $\newview=(w{+}1,\cert)$.

\paragraph{Remark (Verifiability).}
\Cref{sec:accountability:pcstr} formally defines \pcverstr.
Our construction satisfies that definition: ${\sf Commit}$ preserves the
view-1 low proof and returns the selected view-1 high with its proof.

\paragraph{Safety intuition.}
Within each view, \pcver guarantees that verifiable lows are prefix-consistent and verifiable highs are publicly checkable extensions.
Across views, certificates are only accepted if they reference verifiable highs that themselves have parents (via ${\sf ValidCert}$),
so parent pointers always point to earlier verifiable highs and never to the sentinel $(0,[\,])$.
Moreover, the skip mechanism is conservative: an indirect certificate that jumps back to an earlier view can only occur when all intermediate
views are parentless (cf.\ \fullversionref{lem:pcstr:cert:1,lem:pcstr:cert:2}), so it cannot bypass any non-empty committed low.
As a result, all non-empty commits induce a single chain back to view~1, and as shown in \fullversionref{lem:pcstr:low-commit}, all honest parties output
the same view-1 high value.

\paragraph{Liveness intuition.}
After GST, honest parties keep entering new views (\fullversionref{lem:pcstr:enter}): each view either yields a non-empty high (direct advance)
or yields enough $\emptyview$ messages to form an indirect certificate (skip).
Cyclic shifting ensures that in some post-GST view $w^\star$ the first two
ranked parties are honest.
At most one is suspended, so honest inputs agree on a non-$\hash(\bot)$
proposal within the first two coordinates (and consistently use
$\hash(\bot)$ for a suspended proposer).
Validity therefore yields a low with a parent.
This triggers ${\sf Commit}$, which reaches view~1 and outputs the unique
view-1 high value.

\subsubsection{Protocol Analysis.}
Due to space constraints, we defer the proof that \Cref{alg:pcstr-leaderless} solves \pcstr (\Cref{def:pcstr}) and achieves Leaderless Termination to \fullversionref{sec:pcstr-leaderless:analysis}.

\section{Accountability}
\label{sec:accountability}

This section defines accountability for both \pcver and the verifiable
\pcstr construction of \Cref{sec:pcstr-leaderless}.
In the \smr application, accountability identifies the Byzantine proposer
responsible for a truncated output and allows parties to update the
proposer ranking.

\subsection{\pcacc}
\label{sec:accountability:pc}

Suppose that in \pc, the vector coordinates correspond to proposers ordered by a public ranking
$\rank=(p_1,\ldots,p_L)$.
Our \pc construction provides the following additional property: for any
verified truncated output $v$, the first excluded proposer
$\rank[|v|+1]$ is Byzantine.
Briefly, when honest parties agree on the value of every honest proposer,
that value reaches the threshold in every \qcone, so an output cannot
truncate immediately before an honest proposer.

\begin{definition}[Honest-Proposal Consistency]
\label{def:pc:honest-proposal-consistency}
An execution satisfies \emph{honest-proposal consistency} if, for every
honest proposer $p_k$, all honest input vectors contain the same value at
coordinate $k$.
\end{definition}

The \smr construction uses this condition for stabilized post-GST slots;
its proposal-collection argument is given in \Cref{sec:smr}.
For an instance $\id$, let $\rank_{\id}$ denote its proposer ranking.
Using canonical encoding, the identifier $\id$ commits to this ranking,
and every protocol message is signed together with $\id$.

\begin{definition}[\pcacc]\label{def:pcacc}
\pcacc satisfies \pcver and has the public deterministic blame function
\[
{\sf Blame}_{\id}(v,\pi):=
\begin{cases}
\rank_{\id}[|v|+1],
& \begin{array}{l}
  |v|<L\text{ and }
  \mathcal F_{\id}^{\sf low}(v,\pi)\vee
  \mathcal F_{\id}^{\sf high}(v,\pi),
  \end{array}\\[2mm]
\bot, & \text{otherwise.}
\end{cases}
\]
Thus, for a publicly accepted truncated output, ${\sf Blame}_{\id}$
returns its first excluded proposer.
It satisfies \emph{Accountability}: in every execution satisfying
honest-proposal consistency,
${\sf Blame}_{\id}(v,\pi)=p\neq\bot$ implies that $p$ is Byzantine.
\end{definition}

\begin{theorem}\label{thm:pc:accountable}
The \pcver construction of \Cref{thm:pc:verifiable} solves \pcacc
without additional messages, proofs, or communication.
\end{theorem}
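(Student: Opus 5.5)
We need to show that in any execution satisfying honest-proposal consistency, if a publicly accepted output $(v,\pi)$ with $|v|<L$ exists (low or high), then $\rank_{\id}[|v|+1]$ is Byzantine. Since ${\sf Blame}_{\id}$ is computed locally from $(v,\pi)$ and the ranking committed in $\id$, no additional messages, proofs, or communication are needed. The plan is to argue by contraposition. Let $k=|v|+1\le L$ and suppose $p_k=\rank_{\id}[k]$ is honest. By honest-proposal consistency, every honest input has the same value $z$ at coordinate $k$. We will show that every accepted output has length at least $k$, which contradicts $|v|=k-1$.

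\textbf{Step 1 (\qcone level).} Fix any valid \qcone and let $x=\qcone.\x$. We claim $|x|\ge k$ \emph{provided} $|x|\ge k-1$. By \Cref{lem:pc:qc1-threshold}, \qcone contains at least $t$ honest votes, and all of them carry $z$ at coordinate $k$. So $z$ is the unique threshold-supported value there, and the threshold prefix cannot stop at $k$. The difficulty is that the threshold prefix might already stop earlier. The clean statement is therefore: no certified $x$ has length exactly $k-1$. In other words, $\tpref{t}{\cdot}$ never halts immediately before a coordinate on which all honest parties agree.

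\textbf{Step 2 (propagating through \qctwo and \qcthree).} Recursive validation (\Cref{thm:pc:verifiable}) guarantees that every accepted output is computed from genuine certified vectors. A high output $v=\mce{\{\xmcp\in\qcthree\}}$ equals one of the $\xmcp$ values, since these are pairwise consistent by \Cref{lem:pc:1}. A low output is an mcp of consistent $\xmcp$ values, so it equals the shortest of them. Likewise each $\xmcp=\mcp{\{x\in\qctwo\}}$ is the common prefix of the \votetwo values $x$. Here is the main obstacle: an mcp of vectors that are \emph{not} consistent can have any length, so $\xmcp$ could in principle be shorter than every $x$ in \qctwo.

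To handle it, I will show that all certified $x$ values agree wherever both are defined; that is, any two certified threshold prefixes are consistent. Suppose two \qcone sets certify values $z\neq z'$ at the same coordinate. Then these would need $t$ supporters each. Intersection arguments give only $t+t-(n-f)\ge 1$ shared voter, and that voter may be Byzantine and equivocate. So I would instead use signatures: each sender's \voteone is a single signed vector, and equivocation across QCs is possible. If this route fails, I fall back to a direct argument. Let $x_1,\dots$ be the \votetwo values in \qctwo and let $m=\mcp{}$ of them with $|m|=k-1$. Then either some $x_j$ has length $k-1$, which is impossible by Step 1, or two of them differ at coordinate $k$. In the latter case both values have $t$ supporters in their respective \qcone sets, yet each \qcone has at least $t$ honest voters voting $z$, and at most $q-t<t$ remaining voters. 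Hence the only threshold value at coordinate $k$ in any \qcone is $z$, and the two cannot differ at coordinate $k$. Thus $|\xmcp|\ne k-1$.

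The same case split applies one level up: $\qcthree$ aggregates $\xmcp$ values. For its mcp or mce to have length $k-1$, some $\xmcp$ would need length $k-1$, or two of them would need to differ at coordinate $k$. Each $\xmcp$ of length $\ge k$ is a prefix of a certified $x$, so its coordinate $k$ equals $z$. Both cases are excluded, so no accepted output has length $k-1$, and therefore $p_k$ is Byzantine.
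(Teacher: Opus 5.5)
Your proof is correct and follows essentially the same argument as the paper. In both, a valid \qcone that reaches coordinate $k-1$ must append the honest value $z$ at coordinate $k$, and this is then pushed through the \qctwo and \qcthree levels to rule out an accepted output of length $k-1$. The paper shortcuts the \qcthree-level case split by using \Cref{lem:pc:1} to identify $v$ itself as the $\xmcp$ of a single \qctwo, whereas you rule out length $k-1$ at every \qctwo and then redo the split; your abandoned detour through global consistency of certified $x$ values is unnecessary, and you rightly suspected it fails for $n=3f+1$.
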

\begin{proof}
By \Cref{thm:pc:verifiable}, the construction satisfies \pcver.
The function ${\sf Blame}_{\id}$ is public and deterministic because the
output proof and proposer ranking are both bound to $\id$.

Suppose ${\sf Blame}_{\id}(v,\pi)=p\neq\bot$ in an execution satisfying
honest-proposal consistency.
Then $|v|<L$ and $p=\rank_{\id}[k]$ for $k=|v|+1$.
Assume for contradiction that $p_k$ is honest.
Every honest \voteone sender has the same value $z$ at coordinate $k$,
and every valid \qcone contains at least
$t=\lfloor(n-f)/2\rfloor+1$ honest votes
(\Cref{lem:pc:qc1-threshold}).
Hence every $\x={\sf QC1Certify}(\qcone)$ that reaches coordinate $k-1$
also appends $z$ at coordinate $k$.

Here $\pi=\qcthree$.
By \Cref{lem:pc:1}, its $\xmcp$ values are pairwise prefix-consistent and
therefore totally ordered by $\preceq$.
Thus the low and high outputs are actual shortest and longest values in
\qcthree, and some valid \qctwo in \qcthree certifies $v$.
Every $\x$ in this \qctwo extends $v$ and therefore appends $z$ at
coordinate $k$.
Their maximum common prefix has length at least $k$, contradicting
$|v|=k-1$.

Thus the construction satisfies Accountability and solves \pcacc.
It reuses the existing \qcthree proof and therefore adds no messages,
certificate data, or communication.
\end{proof}

\subsection{Accountable \pcstr}
\label{sec:accountability:pcstr}

\begin{definition}[\pcverstr]\label{def:pcverstr}
\pcverstr satisfies \pcver and additionally ensures Agreement on the high
output:
\[
v_i^{\sf high}=v_j^{\sf high}
\qquad\text{for all honest }i,j\in\cH.
\]
\end{definition}

\begin{definition}[\pcaccstr]\label{def:pcaccstr}
\pcaccstr simultaneously satisfies \pcacc and \pcverstr for the same
input, verifiable outputs, and instance identifier.
\end{definition}

Equivalently, \pcaccstr is \pcacc with high-output Agreement, or
\pcverstr with Accountability.

\paragraph{Construction.}
No protocol changes to \Cref{alg:pcstr-leaderless} are required.
Instantiate its view-1 \pcver with \Cref{alg:pc}.
A communication-optimized implementation is given in
\fullversionref{sec:pc:optimized}.
The verifiable \pcstr low is already returned with its view-1 low proof.
Later views agree on a view-1 high, and ${\sf Commit}$ returns that high
with its original proof.
Thus both verifiable outputs belong to the same view-1 instance
\[
\id_1:=\hash(\sid,1,\rank^{\sf SPC}_1).
\]
For these outputs, define
\[
{\sf Blame}^{\sf SPC}_{\id_1}(v,\pi):={\sf Blame}_{\id_1}(v,\pi).
\]

\begin{theorem}\label{thm:pcstr:accountable}
Under partial synchrony, the construction of
\Cref{alg:pcstr-leaderless} solves \pcaccstr whenever its view-1 inputs
satisfy honest-proposal consistency.
It adds no messages or communication to the verifiable \pcstr
construction, and requires no additional proof data.
\end{theorem}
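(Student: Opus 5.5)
The plan is to reduce the theorem to two facts: \Cref{alg:pcstr-leaderless} already solves \pcverstr (deferred analysis, taken as given), and \Cref{thm:pc:accountable} gives Accountability for the view-1 instance. The accountable structure is then just a relabeling of outputs that the construction already produces.

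\emph{Step 1: \pcverstr.} I invoke the deferred analysis of \Cref{alg:pcstr-leaderless}. Under partial synchrony it gives Prefix, Validity, Termination, and high-output Agreement. Proof Completeness and Proof Soundness follow from the Remark (Verifiability).

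\emph{Step 2: both outputs are view-1 outputs.} The low output is assigned only in ${\sf Commit}(1,v,\pi)$, and only after $\cF^{\sf low}_1(v,\pi)$ has been checked on receipt of a \newcommit. So $(v^{\sf low}_i,\pi^{\sf low}_i)$ is publicly accepted for $\id_1$.

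For the high output, I induct on the recursion depth of ${\sf Commit}$. Each recursive call passes $(w_p,v_p,\pi_p)={\sf Parent}(v)$. This triple originates either from a direct certificate or from an indirect one, and in both cases it has passed ${\sf ValidCert}$, hence ${\sf ValidHigh}$, hence $\cF^{\sf high}_{w_p}(v_p,\pi_p)$. The recursion stops exactly when $w_p=1$, so the returned $(v^{\sf high}_i,\pi^{\sf high}_i)$ satisfies $\cF^{\sf high}_{\id_1}$.

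One subtlety is where the certificate comes from. A ${\sf Parent}$ lookup fetches the preimage of a committed hash, and that preimage was accepted by some party only after a validity check. By collision resistance and Availability, the fetched preimage is that same validated proposal object, so I would cite this explicitly.

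\emph{Step 3: Accountability.} ${\sf Blame}^{\sf SPC}_{\id_1}$ is ${\sf Blame}_{\id_1}$, which is public and deterministic because $\id_1$ binds $\rank^{\sf SPC}_1$. Suppose ${\sf Blame}^{\sf SPC}_{\id_1}(v,\pi)=p\ne\bot$. Then $(v,\pi)$ is publicly accepted for the view-1 \pcver instance, which is \Cref{alg:pc} run on inputs satisfying honest-proposal consistency. \Cref{thm:pc:accountable} therefore yields that $p$ is Byzantine.

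Blame applies to any accepted pair for $\id_1$, not only to honest outputs, so the argument does not depend on which party holds the pair. The proof in \Cref{thm:pc:accountable} works from any valid \qcthree, which covers this case.

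\emph{Step 4: overhead.} No messages or proof fields are added, since ${\sf Blame}$ is computed locally from proofs that already exist.

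The main obstacle is Step 2: checking that the high proof returned through the parent chain is literally the original view-1 \qcthree and has not been re-wrapped. I would handle this by tracing that ${\sf High}(\cert)$ returns the stored $\pi$ unchanged and that ${\sf ValidHigh}$ checks $\cF^{\sf high}_w$ on exactly that $\pi$.
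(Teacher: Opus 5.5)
Your proposal is correct and follows essentially the same route as the paper's proof. Both cite the \pcstr correctness analysis for Agreement, observe that the low and high outputs keep publicly accepted view-1 proofs, and derive Accountability from \Cref{thm:pc:accountable} applied to the view-1 instance. Your parent-chain tracing through ${\sf ValidCert}$ and ${\sf ValidHigh}$ spells out what the paper states in one line, and the paper additionally notes that the communication-optimized view-1 instantiation also works.
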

\begin{proof}
As shown in \fullversionref{thm:pcstr:correct}, the construction satisfies all \pcstr
properties, including Agreement on the high output.
Because both outputs retain proofs accepted by the view-1 predicates, the
construction also satisfies \pcverstr.
Under honest-proposal consistency, both the basic and optimized view-1
instantiations satisfy \pcacc
(\Cref{thm:pc:accountable} and
\fullversionref{thm:pc:optimized-correct}).
Since the verifiable \pcstr outputs retain their accepted view-1 proofs,
${\sf Blame}^{\sf SPC}_{\id_1}$ identifies a Byzantine proposer whenever
either output is truncated.
The proof data is already present in view-1 outputs and parent
certificates, so the construction adds no messages or communication.
\end{proof}

\section{\smr}
\label{sec:smr}

We construct \smr by running one \pcaccstr instance per slot,
sequentially.
The agreed high output finalizes the slot, while the low output allows
honest proposals to be committed earlier.
Truncated high outputs and reported low outputs provide accountable
evidence for updating the proposer ranking.

The protocol achieves $f$-Censorship Resistance and amortized commit
latency
\[
\tau_{\sf prop}+3\delta,
\]
where $\tau_{\sf prop}=2\Delta$ after GST and
$\tau_{\sf prop}=\Delta$ under synchronized slot starts.
In a failure-free execution, all proposals arrive in one actual delay, so
$\tau_{\sf prop}=\delta$.
Thus synchronized slots have amortized four-round commit latency.

\subsection{Definition}
\label{sec:smr:def}

Fix a public initial proposer ranking $\rank^{\sf MC}_1$.
All objects are implicitly bound to the \smr instance.
Slot number $s$ identifies the slot's \pcaccstr instance.
Its view-1 \pcacc identifier is
\[
\id_s:=\hash(s,1,\rank^{\sf MC}_s).
\]
We write $\Pi^{\sf ASPC}_s$ for this \pcaccstr instance.
For latency, the start of slot $s$ is the time by which all honest parties
have invoked ${\sf NewSlot}(s)$.

\paragraph{Proposal objects and evidence.}
Party $p$'s signed proposal object is
\[
P_{p,s}:=(s,p,b^{\sf in}_{p,s},e_{p,s-1}).
\]
A proposal object is valid if its signature verifies under $p$ and its
instance, slot, and proposer fields are correct; malformed evidence is
ignored without invalidating the proposal itself.
Proposal objects use canonical encoding and collision-resistant hashing.
For $P=(s,p,b,e)$, write $P.b:=b$ and $P.e:=e$.
The evidence field is either $\bot$ or
\[
e_{p,s-1}:=(v^{\sf low}_{p,s-1},\pi^{\sf low}_{p,s-1}),
\]
where the low is shorter than the agreed high of slot $s-1$.
Evidence is accepted only in the immediately following slot and only
under the source predicate for $\id_{s-1}$.

Party $i$ stores received proposal objects in $B_{i,s}[p]$, using $\bot$
when no valid object from $p$ has been received.
With $\rank^{\sf MC}_{i,s}=(p_1,\ldots,p_n)$, its slot input is
\[
v^{\sf in}_{i,s}
:=
[\,\hash(B_{i,s}[p_1]),\ldots,\hash(B_{i,s}[p_n])\,].
\]
The \pcaccstr instance returns verifiable outputs
\[
(v^{\sf low}_{i,s},\pi^{\sf low}_{i,s}),
\qquad
(v^{\sf high}_{i,s},\pi^{\sf high}_{i,s}).
\]

\paragraph{Ranking update.}
Each slot-$s$ proposal may carry a verifiable low from slot $s-1$.
After slot $s$ is finalized, let $\mathcal L_{s-1}$ contain the valid
previous-slot lows carried by proposal objects in the agreed slot-$s$
high; set $\mathcal L_0:=\emptyset$.
For $\rank=(p_1,\ldots,p_n)$ and a set $D$, define
${\sf Demote}(\rank,D)$ by moving the parties in $D$ to the end while
preserving everyone's relative order.

Let
\[
b_s:={\sf Blame}^{\sf SPC}_{\id_s}
       (v^{\sf high}_{i,s},\pi^{\sf high}_{i,s})
\]
be the proposer blamed by the current high, and let
\[
D_{s-1}:=
\{{\sf Blame}^{\sf SPC}_{\id_{s-1}}(v,\pi):
  (v,\pi)\in\mathcal L_{s-1}\}\setminus\{\bot,b_s\}.
\]
Thus $D_{s-1}$ contains the distinct proposers blamed by valid
previous-slot low evidence, excluding $b_s$.

The ranking update has two steps:
\[
\begin{aligned}
\rank'&:={\sf Demote}(\rank^{\sf MC}_{i,s},D_{s-1}),\\
\rank^{\sf MC}_{i,s+1}
&:={\sf Demote}(\rank',\{b_s\}\setminus\{\bot\}).
\end{aligned}
\]
The first step handles slow-low behavior from slot $s-1$.
Because honest parties may observe different lows, they carry them into
slot $s$ and apply only the common set selected by the agreed high.
Each valid truncated low blames a Byzantine proposer; demoting these
proposers yields the amortized commit-latency guarantee
(\fullversionref{thm:smr:complexity}).

The second step moves the proposer blamed by the current high to the
absolute end.
This keeps a current censor behind every honest proposer, even if stale
boundary evidence is present.
Consequently, each censored slot demotes a Byzantine censor behind all
honest proposers, yielding $f$-Censorship Resistance
(\fullversionref{thm:smr:2}).
Because the high output and its proposal objects are agreed, all honest
parties compute the same update.

\begin{algorithm*}[t!]
\caption{\smr Protocol for party $i$}
\label{alg:smr}
\centering
\vspace{-0.6em}
\begin{algorithmic}[1]

\begin{multicols}{2}
\raggedcolumns
\setlength{\columnsep}{1.2em}

\Statex {\bf Local variables (party $i$):}
\Statex $s_i$: current slot, initialized to $0$.
\Statex $B_{i,s}[\cdot]$: received slot-$s$ proposal objects, initialized to $\bot$.
\Statex $e_{i,s}$: evidence produced in slot $s$, initialized to $\bot$.

\Function{\sf Demote}{$\rank,D$}\label{line:smr:update}
    \State \Return $\rank$ with parties in $D$ moved to the end,
        preserving relative order
\EndFunction
\Function{\sf ValidEvidence}{$s,e$}
    \If{$s=1$ or $e=\bot$} \State \Return {\sf false} \EndIf
    \State unpack $e=(v,\pi)$
    \State \Return $\cF^{\sf low}_{\id_{s-1}}(v,\pi)
        \wedge v\prec v^{\sf high}_{i,s-1}$
\EndFunction
\Function{\sf CommittedLows}{$s,v$}
    \State $\mathcal L:=\emptyset$
    \For{$k=1,\dots,|v|$ with $v[k]\neq\hash(\bot)$}
        \State fetch the proposal object $P$ with $\hash(P)=v[k]$
        \If{${\sf ValidEvidence}(s,P.e)$}
            \State unpack $P.e=(v',\pi')$
            \State $\mathcal L:=\mathcal L\cup\{(v',\pi')\}$
        \EndIf
    \EndFor
    \State \Return $\mathcal L$
\EndFunction
\Function{\sf UpdateRank}{$s,v,\pi$}
    \State $b:={\sf Blame}^{\sf SPC}_{\id_s}(v,\pi)$
    \State $\mathcal L:={\sf CommittedLows}(s,v)$
    \State $D:=\{{\sf Blame}^{\sf SPC}_{\id_{s-1}}(v',\pi'):
        (v',\pi')\in\mathcal L\}\setminus\{\bot,b\}$
    \State $\rank':={\sf Demote}(\rank^{\sf MC}_{i,s},D)$
    \State \Return ${\sf Demote}(\rank',\{b\}\setminus\{\bot\})$
\EndFunction

\Upon{protocol start}
    \State $\rank^{\sf MC}_{i,1}:=\rank^{\sf MC}_1$
    \State ${\sf NewSlot}(1)$
\EndUpon
\Upon{${\sf NewSlot}(s)$}
    \State $s_i:=s$
    \State form signed
        $P_{i,s}:=(s,i,b^{\sf in}_{i,s},e_{i,s-1})$
    \State $B_{i,s}[i]:=P_{i,s}$
    \State broadcast $\proposal:=(s,P_{i,s})$
    \State start $\timer(s)$ for $\tau_{\sf prop}$
\EndUpon
\Upon{receiving the first valid $\proposal=(s,P_{p,s})$ from $p$
       before slot $s$ is frozen}
    \State $B_{i,s}[p]:=P_{p,s}$
    \If{$B_{i,s}[j]\neq\bot$ for all $j\in[n]$}
        \State ${\sf RunASPC}(s)$
    \EndIf
\EndUpon
\Upon{$\timer(s)$ expires and $s=s_i$}
    \State ${\sf RunASPC}(s)$
\EndUpon
\Upon{first invocation of ${\sf RunASPC}(s)$}
    \State freeze $B_{i,s}$ and let
        $\rank^{\sf MC}_{i,s}=(p_1,\dots,p_n)$
    \State $v^{\sf in}_{i,s}:=
        [\,\hash(B_{i,s}[p_1]),\dots,\hash(B_{i,s}[p_n])\,]$
    \State run $\Pi^{\sf ASPC}_{s}$ with input $v^{\sf in}_{i,s}$
        and ranking $\rank^{\sf MC}_{i,s}$
\EndUpon
\Upon{$\Pi^{\sf ASPC}_{s}$ outputs
       $(v^{\sf low}_{i,s},\pi^{\sf low}_{i,s})$}
    \State ${\sf Commit}(s,\rank^{\sf MC}_{i,s},v^{\sf low}_{i,s})$
\EndUpon
\Upon{$\Pi^{\sf ASPC}_{s}$ outputs
       $(v^{\sf high}_{i,s},\pi^{\sf high}_{i,s})$}
    \State wait until the local low output is available
    \State ${\sf Commit}(s,\rank^{\sf MC}_{i,s},v^{\sf high}_{i,s})$
    \State $\rank^{\sf MC}_{i,s+1}:=
        {\sf UpdateRank}(s,v^{\sf high}_{i,s},\pi^{\sf high}_{i,s})$
    \If{$v^{\sf low}_{i,s}\prec v^{\sf high}_{i,s}$}
        \State $e_{i,s}:=(v^{\sf low}_{i,s},\pi^{\sf low}_{i,s})$
    \EndIf
    \State ${\sf NewSlot}(s+1)$
\EndUpon
\Upon{{\sf Commit}$(s,\rank,v)$}
    \For{$k=1,\dots,|v|$ with $v[k]\neq\hash(\bot)$}
        \State fetch and verify $P_{p,s}$ for $p=\rank[k]$ with
            $\hash(P_{p,s})=v[k]$
        \State commit $P_{p,s}.b$ at $(s,k)$ if not committed
    \EndFor
\EndUpon

\end{multicols}
\end{algorithmic}
\vspace{-0.6em}
\end{algorithm*}

\subsection{Protocol Description}
\label{sec:smr:protocol}

\paragraph{Starting a slot.}
At slot start, party $i$ broadcasts its signed proposal object carrying
the previous slot's evidence.
It accepts only the first valid object from each proposer.
The proposal buffer is frozen when \pcaccstr is first invoked.
Messages delivered at the proposal deadline are processed before the
timeout event.
After GST, previous-slot high agreement bounds honest slot-start skew by
$\Delta$, so a $2\Delta$ collection window gives honest-proposal
consistency.
For synchronized slot starts, a $\Delta$ window suffices.
We absorb the first post-GST evidence-cleanup slot into the stabilization
period.

\paragraph{Committing and advancing.}
Upon a low output, parties verify and early-commit its proposal objects.
Upon the agreed high output, they finalize the slot, process accountable
evidence contained in its committed proposal objects, compute the next
ranking, and start slot $s+1$.
If the local low is shorter than the agreed high, its proof is attached to
the next proposal object.

Procedure ${\sf Commit}(s,\rank_s,v)$ maps coordinate $k$ to proposer
$\rank_s[k]$, fetches and verifies the canonical proposal object, and
deduplicates by slot-position.
Low commits are safe prefixes of the final high output.

\paragraph{Safety and liveness intuition.}
Within a slot, accepted lows prefix the agreed high, so early commits are
consistent with finalization.
Slots are finalized sequentially, yielding one common log.
After GST, \pcaccstr Termination advances every slot.
Every honest proposer places its own signed proposal object in its local
input.
Once honest inputs are consistent, Availability prevents an accepted
output from replacing that proposal with another value.
Thus a censored slot has a truncated high
(\fullversionref{lem:smr:proposal-integrity}), whose proof identifies a Byzantine
proposer.
Similarly, a low shorter than the high identifies a Byzantine proposer
once its evidence is committed in the following slot.

\subsubsection{Protocol Analysis.}
Proofs are deferred to \fullversionref{sec:smr:analysis}.
Under synchrony and an honest first proposer in ASPC's view-2 ranking, the
protocol has good-case message complexity $O(n^2)$ per slot; its
worst-case message complexity is $O(n^3)$.

\section{Related Work}
\label{sec:related}

BFT consensus has been extensively studied, with foundational results establishing tight bounds on resilience and complexity under various network assumptions~\cite{dolev1982byzantine, dwork1988consensus}. Practical systems like PBFT~\cite{castro1999practical} introduced leader-driven architectures, inspiring optimizations in performance, modularity, and communication efficiency~\cite{kotla2007zyzzyva, gueta2019sbft, bessani2014bftsmart, cachin2016architecture}. 
Asynchronous protocols~\cite{miller2016honeybadger,abraham2019asymptotically,lu2020dumbo} achieve progress without timing assumptions.
Traditional partially synchronous BFT protocols often rely on designated leaders, making them vulnerable to targeted DoS attacks and providing weak inclusion guarantees~\cite{castro1999practical}. Leader rotation~\cite{cohen2022aware} 
mitigates some issues but does not prevent selective exclusion of honest inputs or targeted DoS vulnerability~\cite{gelashvili2022jolteon}.

\paragraph{Leaderless protocols} Antoniadis et al.~\cite{antoniadis2021leaderless} formalized leaderless consensus, where progress holds despite an adversary suspending one party per round. Protocols like DBFT~\cite{crain2018dbft} and asynchronous designs~\cite{miller2016honeybadger, guo2020dumbo, danezis2022narwhal, abraham2019asymptotically} are sometimes called leaderless but lack formal censorship-resistance as defined here and in~\cite{garimidi2025multiple, abraham2025latency}, allowing adversaries to censor honest proposals.

\paragraph{Multi-leader protocols} To mitigate the leader performance bottleneck and enhance throughput, multi-leader approaches~\cite{stathakopoulou2019mirbft,orthrus2024,ladon2024,iss2022} run parallel instances of single-leader BFT consensus such as PBFT~\cite{castro2002practical} and merge outputs.
However, these protocols do not provide formal censorship-resistance guarantees.
Moreover, they are not leaderless as per the definition of~\cite{antoniadis2021leaderless}.
Instead of avoiding vulnerability to faulty leaders, these protocols typically exacerbate it, as demonstrated experimentally in~\cite{hydra2025}.

\paragraph{Leader-based protocols with multiple proposers} 
Some works propose to decouple dissemination from agreement via a parallel dissemination layer, while relying on designated leaders for ordering~\cite{cohen2023proof, quorumstore}.
Refinements such as Autobahn~\cite{giridharan2024autobahn} and Raptr~\cite{tonkikh2025raptr} optimize latency and robustness. Raptr uses a prefix-based mechanism on block lengths, inspiring our Prefix Consensus definition, but does not formalize it or explore its applications to leaderless protocols nor censorship-resistance.


\paragraph{Censorship resistance.}
Two recent works~\cite{abraham2025latency,garimidi2025multiple} explicitly
study censorship in BFT consensus. Abraham et al.~\cite{abraham2025latency}
prove matching five-round lower and upper bounds for strong single-shot
censorship resistance. The upper bound is mainly theoretical. Their lower
bound applies to the single-slot analogue of our setting, where a transaction
submitted to one honest validator must be included in that slot. We instead
consider long-lived multi-slot consensus: by allowing only a bounded number
of post-GST censored slots ($f$ in our protocol), we rule out infinitely many
inclusion failures while avoiding the single-shot lower bound, achieving lower
amortized commit latency, and providing leaderless
termination~\cite{antoniadis2021leaderless}.
Garimidi et al. (or MCP)~\cite{garimidi2025multiple} use an all-or-nothing notion:
each slot outputs either all timely honest inputs or the empty set. This
prevents selective censorship in non-empty slots, but permits censorship via
skipping the slots entirely. In contrast, our definition counts any slot that omits an
honest input, including an empty slot, as censored. This distinction matters
in practice: skipped blocks leave transactions pending and, unless encrypted,
observable in the mempool, enabling a Byzantine next leader to exploit them for MEV. Moreover, MCP is not leaderless and remains vulnerable to
leader-targeted attacks.
MCP~\cite{garimidi2025multiple} additionally hides transaction contents until
ordering, but only with lower resilience ($f<n/5$) and additional trust
assumptions on relay nodes. Our protocol is orthogonal to privacy and could
obtain a similar hiding guarantee without weakening resilience or relying
on relay nodes, e.g., by integrating an encrypted mempool such
as~\cite{fernando2025trx}.

\paragraph{DAG-based protocols} Similarly to other works listed here, DAG BFT protocols~\cite{baird2016hashgraph,danielsson2018aleph,keidar2021dagrider,keidar2023cordial,spiegelman2022bullshark,spiegelman2023shoal,arun2024shoal++,sailfish2025, sailfishpp2025,babel2023mysticeti} distribute the workload of payload dissemination among all parties to achieve higher throughput and avoid the leader performance bottleneck. 
Moreover, their round-based structure ensures that at least $2f+1$ proposals are incorporated per round, providing a limited form of censorship resistance. 
However, they typically still rely on leaders (usually called \emph{anchors}) for inclusion (choosing the $2f+1$ proposals) and ordering. Thus, they inherit similar limitations with respect to targeted DoS attacks and censorship as leader-based and multi-leader BFT protocols.

\paragraph{Graded Consensus}
As discussed in \fullversionref{sec:discussion:graded}, Prefix Consensus is closely related to (multi-valued) Graded Consensus~\cite{connected-cons-2024}.
Graded consensus and similar primitives~\cite{connected-cons-2024,connected-cons-2026,abraham2024round,yang1998structured} are interesting in their own right, as they circumvent the famous FLP impossibility theorem~\cite{fischer1985impossibility} by relaxing the agreement property of consensus.
Additionally, they are often used as key building blocks in randomized asynchronous Byzantine Agreement protocols (e.g.,~\cite{abraham2022bindingcrusader_podc,bouzid2015minimal,deligios2021round,momose2022constant,blum2020asynchronous}).

\section{Conclusion}
\label{sec:conclusion}
This paper studies censorship resistance through the lens of a new primitive,
\emph{Prefix Consensus}. We show that \pc admits a deterministic,
purely asynchronous solution under optimal resilience, and we establish a tight
round complexity: three communication rounds are necessary and sufficient in the
asynchronous Byzantine setting with optimal resilience ($n = 3f+1$). Building on a verifiable variant, we
then construct a partially synchronous \smr protocol that is
\emph{leaderless} and achieves \emph{$f$-censorship resistance} after GST: at most
$f$ slots may omit some honest proposals, after which all honest proposals are
included in every slot.





%
%
%
\ifcameraready
  \bibliographystyle{ACM-Reference-Format}
\else
  \bibliographystyle{splncs04}
\fi
\balance
\bibliography{references}

\ifcameraready\else
  \clearpage
  \nobalance
  \appendix

  \section{Analysis of \pcstr Protocol}
\label{sec:pcstr-leaderless:analysis}

In this section, we analyze the properties of our \pcstr protocol (\Cref{alg:pcstr-leaderless}) presented in~\Cref{sec:pcstr-leaderless}.

\paragraph{Instantiation scope.}
The Agreement, Prefix, Validity, and ordinary Termination proofs
treat \pcver as a black box.
The Leaderless Termination proof instantiates \pcver with the
threshold-prefix protocol of \Cref{alg:pc}, while the complexity analysis
uses its communication-optimized implementation from
\Cref{sec:pc:optimized}.

\paragraph{Data availability.}\label{sec:data-availability}
Conceptually, the \pcstr construction invokes \pcver on vectors of full
proposal objects.
In this form, it is a black-box construction whose correctness uses only
the \pcver properties.
Algorithm~\ref{alg:pcstr-leaderless} replaces proposal objects by their
hashes solely to reduce communication.
For this optimization, every certified hash must be retrievable from an
honest holder; this motivates the Availability property below.
Every subsequent use of Availability can be omitted in the full-object
version, where the proposal and its certificate are read directly from the
\pcver output.

\begin{definition}[Availability]\label{def:pc:availability}
A \pc (resp., \pcver) protocol satisfies \emph{Availability} if for any honest (resp., publicly verifiable) output $v^{\sf low}$ and any index $1\le k\le |v^{\sf low}|$, there exists an honest party $h$ such that $v_h^{\sf in}[k]=v^{\sf low}[k]$. The same guarantee holds for any honest (resp., publicly verifiable) output $v^{\sf high}$.
\end{definition}

\begin{lemma}\label{lem:availability}
    \Cref{alg:pc} and its \pcver variant (\Cref{sec:pcver}) satisfy Availability.
\end{lemma}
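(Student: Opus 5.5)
The plan is to trace any accepted output coordinate back through the three nested certificates to a threshold-supported entry of some \qcone, and then use the honest-majority guarantee of \Cref{lem:pc:qc1-threshold} to find an honest party whose input carries that value. Since the \pcver variant exposes exactly the \qcthree used by honest parties, one argument covering every \qcthree accepted by ${\sf ValidQC3}$ handles both the honest-output and the publicly-verifiable-output cases.

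First, fix a valid \qcthree with output $(v^{\sf low},v^{\sf high})={\sf QC3Certify}(\qcthree)$ and an index $k\le|v|$ for $v\in\{v^{\sf low},v^{\sf high}\}$. By \Cref{lem:pc:1}, the $\xmcp$ values in \qcthree are pairwise consistent and hence totally ordered by $\preceq$. Therefore $v^{\sf low}$ equals the shortest of them and $v^{\sf high}$ equals the longest, exactly as observed in the proof of \Cref{thm:pc:accountable}. So there is some valid \votethree in \qcthree whose $\xmcp$ equals $v$. Recursive validation guarantees $\xmcp={\sf QC2Certify}(\qctwo)$ for the embedded \qctwo.

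Second, $\xmcp=\mcp{\{\x\in\qctwo\}}$ is a prefix of every $\x$ in that \qctwo. Pick any one \votetwo in it, with value $\x$ and nested \qcone; validation gives $\x={\sf QC1Certify}(\qcone)=\tpref{t}{\mathcal V_1}$. Since $|\x|\ge|v|\ge k$ and $\x[k]=v[k]$, the definition of the threshold prefix says at least $t=\lfloor(n-f)/2\rfloor+1$ votes in \qcone carry $v[k]$ at coordinate $k$.

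Third, and this is the step I expect to need the most care, convert this threshold support into an honest supporter. A set of $n-f$ voters contains at most $f$ Byzantine ones, so it suffices that $t>f$. From $n\ge3f+1$ we get $n-f\ge2f+1$, so $t\ge f+1$. At least one of the $t$ supporters is therefore honest, say $h$. Its \voteone is a signature on its true input, so by signature unforgeability $v_h^{\sf in}[k]=v[k]$. This step relies on the ``at most one message per sender'' rule and on signature validity checked by ${\sf ValidQC3}$, so that Byzantine parties cannot forge honest \voteone messages. Applying the argument to $v^{\sf low}$ and to $v^{\sf high}$ gives Availability for honest outputs and, via the same accepted-proof reasoning, for publicly verifiable outputs.
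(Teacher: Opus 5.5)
Your proof is correct and matches the paper's argument essentially step for step. You use \Cref{lem:pc:1} to identify each output with the shortest or longest certified $\xmcp$ in \qcthree, pass to an $\x$ in the certifying \qctwo that extends it, and use the threshold support $t=\lfloor(n-f)/2\rfloor+1>f$ in the nested \qcone to obtain an honest supporter whose input has the same element, with a single argument over all \qcthree accepted by ${\sf ValidQC3}$ covering both honest and publicly verifiable outputs, as the paper does.
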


\begin{proof}
    Consider any honest (or publicly verifiable) output produced from a valid $\qcthree$.
    Every coordinate of $v^{\sf low}$ (resp., $v^{\sf high}$) appears in
    some $\xmcp$ value certified in $\qcthree$, because these values are
    mutually consistent (\Cref{lem:pc:1}) and the outputs are their
    shortest and longest elements.
    Every coordinate of such an $\xmcp$ appears in every $\x$ value of
    its certifying $\qctwo$.
    Finally, by ${\sf QC1Certify}$, every coordinate appended to an $\x$
    value has support from
    $t=\lfloor(n-f)/2\rfloor+1>f$ distinct \voteone senders in its
    certifying \qcone.
    At least one supporter is honest and has the same input element at
    that coordinate.
    Thus every element of either output is present in an honest party's
    input, proving Availability.
\end{proof}


\paragraph{Useful lemmas.}

\begin{lemma}\label{lem:pcstr:cert:1}
If a valid indirect certificate for entering view $w$ is formed, then
every publicly accepted low value $v^{\sf low}_{w-1}$ in view $w-1$ has
no parent:
\[
{\sf Parent}(v^{\sf low}_{w-1})=(0,[\,],\bot).
\]
\end{lemma}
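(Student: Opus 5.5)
We argue by contraposition. Suppose some publicly accepted low $v^{\sf low}_{w-1}$ of $\Pi^{\sf VPC}_{w-1}$ has a parent, i.e., ${\sf Parent}(v^{\sf low}_{w-1})\neq(0,[\,],\bot)$. We will show that then at most $f$ honest parties can send a valid \emptyview for view $w-1$. Since $\Sigma$ needs $f+1$ distinct signers, it must include an honest one, so no valid indirect certificate $\cert^{\sf ind}(w-1,\ldots)$ can be formed. Strictly, we only need that no \emph{honest} party signs an \emptyview for view $w-1$; an honest signer is guaranteed because $f+1>f$.

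\textbf{Step 1: a parented low forces every accepted high to be parented.} Let $k$ be the first coordinate of $v^{\sf low}_{w-1}$ unequal to $\hash(\bot)$. By Proof Soundness of \pcver (\Cref{def:pcver}), every publicly accepted high $v^{\sf high}_{w-1}$ satisfies $v^{\sf low}_{w-1}\preceq v^{\sf high}_{w-1}$. Hence $v^{\sf high}_{w-1}$ agrees with $v^{\sf low}_{w-1}$ on coordinates $1,\dots,k$, so its first non-$\hash(\bot)$ entry is also at coordinate $k$ and holds the same hash. By collision resistance and canonical encoding, that hash has a unique preimage $(w-1,\cert)$, and ${\sf Parent}$ is a deterministic function of it. Therefore ${\sf Parent}(v^{\sf high}_{w-1})={\sf Parent}(v^{\sf low}_{w-1})\neq(0,[\,],\bot)$. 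When $w-1=1$, ${\sf HasParent}$ is trivially true, so the claim holds vacuously. The fetch of the preimage is justified by Availability (\Cref{lem:availability}).

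\textbf{Step 2: honest parties do not skip.} An honest party sends a view-$(w-1)$ \emptyview only in the branch where its own output $v^{\sf high}_{i,w-1}$ of $\Pi^{\sf VPC}_{w-1}$ fails ${\sf HasParent}$. By Proof Completeness, that output is publicly accepted. By Step 1, it has a parent, so the party takes the direct-certificate branch instead. Hence no honest party signs $(w-1,\cdot,\cdot)$.

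\textbf{Step 3: counting.} ${\sf ValidSkip}(w-1,\Sigma)$ requires $f+1$ distinct valid signatures on statements with first component $w-1$. At most $f$ of these can be Byzantine, so at least one comes from an honest party, contradicting Step 2. We must also check that honest parties sign only through the \emptyview path, so that domain separation rules out reusing other signatures.

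\textbf{Main obstacle.} The hard part is Step 1, specifically ensuring that ${\sf Parent}$ depends only on the first non-$\hash(\bot)$ entry. This holds by definition, but it needs collision resistance and availability of preimages to pin down the certificate uniquely. A subtlety is whether a Byzantine-controlled entry whose certificate is invalid could make ${\sf High}$ return the sentinel. This is ruled out because only valid certificates are stored in $P_{i,w}$, and Availability says every decided hash is some honest party's input.
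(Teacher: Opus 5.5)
Your proof is correct and is the paper's argument in contrapositive form. The paper fixes an honest \emptyview signer $h$ in $\Sigma$ and notes that its view-$(w-1)$ high is parentless. It then uses Proof Soundness ($v^{\sf low}_{w-1}\preceq v^{\sf high}_{h,w-1}$) and the shared first non-$\hash(\bot)$ entry to conclude that the low is parentless.

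Your worry about Byzantine entries with invalid certificates is unnecessary. ${\sf Parent}$ depends only on that shared first entry, so the low and the high get the same result either way.
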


\begin{proof}
A valid indirect certificate for entering view $w$ requires $f+1$ valid \emptyview messages for view $w-1$
(\Cref{alg:pcstr-leaderless}), so at least one such message is from an honest party $h$.
Honest $h$ broadcasts \emptyview in view $w-1$ only upon receiving a verifiable high output
$(v^{\sf high}_{h,w-1},\pi^{\sf high}_{h,w-1})$ with ${\sf HasParent}(w-1,v^{\sf high}_{h,w-1})={\sf false}$.
The certificate bounds imply $w-1>1$, so
${\sf Parent}(v^{\sf high}_{h,w-1})=(0,[\,],\bot)$.

Let $v^{\sf low}_{w-1}$ be any publicly accepted low in the same view.
By \pcver Proof Soundness,
$v^{\sf low}_{w-1}\preceq v^{\sf high}_{h,w-1}$.
If $v^{\sf low}_{w-1}$ had a parent, its first non-$\hash(\bot)$ entry
would also occur in $v^{\sf high}_{h,w-1}$, giving the latter a parent---a
contradiction.
Therefore ${\sf Parent}(v^{\sf low}_{w-1})=(0,[\,],\bot)$.
\end{proof}

\begin{lemma}\label{lem:pcstr:cert:2}
Suppose a valid indirect certificate for entering view $w$ references a
parent high from view $w'<w-1$.
Then, for every $u$ with $w'<u<w$, every publicly accepted low
$v^{\sf low}_u$ in view $u$ has no parent:
\[
{\sf Parent}(v^{\sf low}_u)=(0,[\,],\bot).
\]
\end{lemma}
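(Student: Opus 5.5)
The case $u=w-1$ is exactly \Cref{lem:pcstr:cert:1}. For $w'<u<w-1$ I will reduce to the same parentless-high argument. The only difference is that the parentless honest high is no longer in view $u$ itself; it is obtained from the $f+1$ signers of $\Sigma$.

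\textbf{Step 1: extract an honest signer.}
The certificate has the form $\cert^{\sf ind}(w-1,(w',v_\star,\pi_\star),\Sigma)$. ${\sf ValidCert}$ forces $w'=\max\{w_h:(w-1,w_h,*)\in\Sigma\}$. Since $\Sigma$ aggregates $f+1$ signatures, some signer $h$ is honest. By the $\emptyview$ rule, $h$ signed $(w-1,w^{\sf high}_\star,d_\star)$ using its current triple $w^{\sf high}_\star$, and this value is at most $w'$. So when $h$ sent its view-$(w-1)$ \emptyview, the largest view for which $h$ held a parented high was at most $w'$.

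\textbf{Step 2: suppose view $u$ has a parented low and derive a contradiction.}
Assume some publicly accepted $v^{\sf low}_u$ has a parent. By \pcver Proof Soundness, every accepted high $v^{\sf high}_u$ satisfies $v^{\sf low}_u\preceq v^{\sf high}_u$. The first non-$\hash(\bot)$ entry of $v^{\sf low}_u$ therefore also appears in $v^{\sf high}_u$, so every accepted view-$u$ high has a parent. This is the same argument as in \Cref{lem:pcstr:cert:1}.

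I then need $h$ to have actually held, or adopted, a view-$u$ high before it signed for view $w-1$.
\begin{itemize}
\item $h$ processes view $w-1$ only after entering it, which requires a valid $\newview$ for $w-1$.
\item I would argue by induction along the certificate chain. Every valid certificate for entering a view $x$ with $u<x$ transitively carries a ${\sf High}$ component of view at least $u$. Either some certificate in the chain is direct at a view $\ge u$, or an indirect certificate skipped over $u$ and so had an honest signer reporting below $u$.
\item Whenever $h$ accepts such a $\newview$, it updates $w^{\sf high}_\star$ to at least the view of ${\sf High}(\cert)$. This would contradict $w^{\sf high}_\star\le w'<u$.
\end{itemize}

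\textbf{Main obstacle.}
A skip certificate over view $u$ is not by itself a contradiction. The cleanest route is strong induction on $w$: assume the lemma for all smaller indirect certificates. Take any view-$x$ entry certificate with $x>u$ that $h$ could have accepted on its way to view $w-1$.
\begin{itemize}
\item If that certificate is indirect with parent $<u$, the induction hypothesis already gives that view $u$ has no parented low. This contradicts the assumption of Step 2.
\item Otherwise the certificate's ${\sf High}$ view is $\ge u$. Then $h$'s $w^{\sf high}_\star\ge u>w'$, contradicting Step 1.
\end{itemize}
The delicate point is that $h$ may enter view $w-1$ directly, without passing through view $u+1$. So the induction has to be on the single certificate $h$ used to enter $w-1$, or on the one it held for any view in $(u,w-1]$, and I would check that its ${\sf High}$ view is monotone enough for this to go through.
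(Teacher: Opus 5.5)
Your proposal is correct, and the strong induction in your final paragraph is the paper's proof. Take the honest signer $h$ and the single certificate $\cert_h$ it used to enter view $w-1$. Its reported view $r_h\le w'$ is at least the parent view $r$ of $\cert_h$, because $w^{\sf high}_\star$ only increases and $h$ processes $\cert_h$ before running $\Pi^{\sf VPC}_{w-1}$; this is all the ``monotone enough'' check requires. Then split into the direct case ($r=w-2$, which forces $w'=w-2$) and the indirect case (apply the induction hypothesis to $\cert_h$, which references $r<w-2$). The only difference is cosmetic: you argue by contradiction with a split on $r<u$ versus $r\ge u$, while the paper uses $r\le w'$ so the induction hypothesis covers every $u\in(w',w-1)$ at once, and your Step~2 detour through parented highs in view $u$ is not needed.
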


\begin{proof}
We use strong induction on $w$.
The base case $w=3$ follows from \Cref{lem:pcstr:cert:1}, since the only
possible referenced parent view is $w'=1$.

For the inductive step, let
\[
\cert^{\sf ind}(w-1,(w',v',\pi'),\Sigma)
\]
be a valid indirect certificate entering view $w$.
By \Cref{lem:pcstr:cert:1}, every publicly accepted low in view $w-1$ is
parentless.
It remains to consider views strictly between $w'$ and $w-1$.

Choose an honest signer $h$ among the $f+1$ reports in $\Sigma$, and let
$r_h\le w'$ be the latest parent view reported by $h$.
Before executing view $w-1$ and producing this report, $h$ entered
view $w-1$ using some valid certificate $\cert_h$.
Let $r$ be the parent view referenced by $\cert_h$.
When $h$ processed $\cert_h$, it updated its latest parent view to at
least $r$; hence
\[
r\le r_h\le w'.
\]

If $\cert_h$ is direct, then $r=w-2$.
The above inequality and $w'<w-1$ imply $w'=w-2$, so there are no views
strictly between $w'$ and $w-1$.

Otherwise, $\cert_h$ is an indirect certificate entering the smaller view
$w-1$ and referencing parent view $r$.
By the induction hypothesis, every publicly accepted low in each view
$u$ with $r<u<w-1$ is parentless.
Since $r\le w'$, this includes every $u$ with $w'<u<w-1$.
Combining this with the parentlessness of view $w-1$ proves the lemma.
\end{proof}

\begin{lemma}[Valid parent]\label{lem:pcstr:valid-parent}
Let $v$ be a publicly accepted output in view $w>1$.
If procedure ${\sf Parent}(v)$ returns
$(u,h,\pi)\neq(0,[\,],\bot)$, then:
\begin{enumerate}
    \item $0<u<w$;
    \item $\cF^{\sf high}_u(h,\pi)$ and ${\sf HasParent}(u,h)$ hold; and
    \item the first non-$\hash(\bot)$ entry of $v$ is the hash of a valid
    view-$w$ proposal object whose certificate references $(u,h,\pi)$.
\end{enumerate}
\end{lemma}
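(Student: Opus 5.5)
We unfold the definition of ${\sf Parent}(v)$ and trace where its first non-$\hash(\bot)$ entry came from, using Availability (\Cref{def:pc:availability}, \Cref{lem:availability}) to link that entry to an honest party's input. Let $k$ be the first index with $v[k]\neq\hash(\bot)$; by the pseudocode, ${\sf Parent}(v)={\sf High}(\cert)$, where $(w,\cert)$ is the fetched preimage $P_k$ of $v[k]$. Since $v$ is a publicly accepted output of $\Pi^{\sf VPC}_w$, Availability gives an honest party $h$ with $v^{\sf in}_{h,w}[k]=v[k]$. By construction of honest inputs, $v^{\sf in}_{h,w}[k]=\hash(P_{h,w}[p_k])$ for the $k$-th ranked proposer $p_k$. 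Because $v[k]\neq\hash(\bot)$ and $\hash(\bot)$ is a distinguished value, $P_{h,w}[p_k]\neq\bot$. Collision resistance then gives $P_{h,w}[p_k]=P_k$, so the fetched preimage is the object $h$ actually stored.

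Next we show this object is a valid view-$w$ proposal object. Honest $h$ sets $P_{h,w}[p]:=(w,\cert)$ only in the \newview handler, after checking ${\sf ValidCert}(w,\cert)$. So $P_k=(w,\cert)$ with ${\sf ValidCert}(w,\cert)$ true, which gives item~3 once we know ${\sf High}(\cert)=(u,h,\pi)$ is the referenced triple; that follows directly from the definition of ${\sf High}$. We also need $\cert$ to be of direct or indirect form, so that ${\sf High}$ does not return the sentinel; this is consistent with the hypothesis that the result is not $(0,[\,],\bot)$ and is in any case guaranteed by ${\sf ValidCert}$.

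Items~1 and~2 follow by case analysis on ${\sf ValidCert}$. If $\cert=\cert^{\sf dir}(w-1,h,\pi)$, then $u=w-1$, and ${\sf ValidHigh}(w-1,h,\pi)$ yields $\cF^{\sf high}_{u}(h,\pi)\wedge{\sf HasParent}(u,h)$. Since $w>1$, $u=w-1\ge1>0$. If $\cert=\cert^{\sf ind}(w-1,(w_\star,v_\star,\pi_\star),\Sigma)$, then $u=w_\star$, and ${\sf ValidCert}$ requires ${\sf ValidHigh}(w_\star,v_\star,\pi_\star)$, which gives item~2. It also requires ${\sf ValidSkip}(w-1,\Sigma)$, which forces every reported $w_h'$ to satisfy $0<w_h'<w-1$. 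As $w_\star$ is the maximum of these values, $0<u<w-1<w$.

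The main obstacle is the first step: justifying that the fetched preimage is a validated object rather than something adversarial. This is exactly where Availability and collision resistance are needed. In the full-object version noted in the data-availability paragraph, this step collapses, since the object is read directly from the output, but we must still argue it passed ${\sf ValidCert}$. That holds because the object is an entry of some honest input, by the same Availability-style argument applied to full objects.
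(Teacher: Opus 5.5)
Your proposal is correct and follows the paper's proof essentially step for step. Availability gives an honest party whose view-$w$ input holds $v[k]$; that entry is the hash of a stored proposal object that passed ${\sf ValidCert}(w,\cert)$; collision resistance identifies it with the fetched preimage; and a direct/indirect case split on the certificate yields $u=w-1$ or $0<u<w-1$, together with $\cF^{\sf high}_u(h,\pi)\wedge{\sf HasParent}(u,h)$. The only blemish is notational: you use $h$ both for the honest party and for the high value in $(u,h,\pi)$, so rename one of them.
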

\begin{proof}
Let $k$ be the first coordinate of $v$ unequal to $\hash(\bot)$.
By Availability (\Cref{lem:availability}), there is an honest party whose
view-$w$ input has value $v[k]$ at coordinate $k$.
By the input construction, this value is
$\hash(P_k)$ for a proposal object $P_k=(w,\cert)$ accepted by
${\sf ValidCert}(w,\cert)$.
Collision resistance and canonical encoding ensure that ${\sf Parent}$
fetches this same object, so $\cert$ is the certificate from which
${\sf Parent}(v)$ obtains $(u,h,\pi)$.

If $\cert$ is direct, it has form
$\cert^{\sf dir}(w-1,h,\pi)$ and hence $u=w-1$.
If it is indirect, it has form
$\cert^{\sf ind}(w-1,(u,h,\pi),\Sigma)$ and validation gives
$0<u<w-1$.
In both cases, ${\sf ValidCert}$ also checks
$\cF^{\sf high}_u(h,\pi)$ and ${\sf HasParent}(u,h)$.
These facts establish all three claims.
\end{proof}

\paragraph{Roots.}
\begin{definition}[Root]\label{def:pcstr:root}
For every publicly accepted view-1 high $h$, define
${\sf Root}(1,h):=h$.
For a publicly accepted parented value $v$ in view $w>1$, define
\[
{\sf Root}(w,v):={\sf Root}(u,h)
\qquad\text{where }{\sf Parent}(v)=(u,h,\pi).
\]
\end{definition}

\begin{lemma}[Root is well-defined]\label{lem:pcstr:root-defined}
The root of every publicly accepted parented value is a publicly accepted
view-1 high value.
\end{lemma}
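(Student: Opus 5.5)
We prove the claim by strong induction on the view $w$ of the parented value. The claim for $w$ is: every publicly accepted parented value $v$ in view $w$ has a root, ${\sf Root}(w,v)$ is well-defined (the recursion terminates), and it is a publicly accepted view-1 high.

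\emph{Base case.} For $w=1$ there is nothing to show. ${\sf Root}(1,h)=h$ by \Cref{def:pcstr:root}, and $h$ is a publicly accepted view-1 high by assumption. This also covers any view-1 value reached as a parent, since \Cref{lem:pcstr:valid-parent} guarantees $\cF^{\sf high}_1(h,\pi)$ for it.

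\emph{Inductive step.} Fix $w>1$ and a publicly accepted parented value $v$ in view $w$. The value $v$ may be a low or a high, because \Cref{lem:pcstr:valid-parent} applies to any publicly accepted output. Let ${\sf Parent}(v)=(u,h,\pi)\neq(0,[\,],\bot)$. By \Cref{lem:pcstr:valid-parent}:
\begin{itemize}
\item $0<u<w$;
\item $\cF^{\sf high}_u(h,\pi)$ holds, so $h$ is a publicly accepted view-$u$ high;
\item ${\sf HasParent}(u,h)$ holds.
\end{itemize}
There are two cases.
\begin{itemize}
\item If $u=1$, then ${\sf Root}(w,v)={\sf Root}(1,h)=h$, which is a publicly accepted view-1 high.
\item If $u>1$, then ${\sf HasParent}(u,h)$ means ${\sf Parent}(h)\neq(0,[\,],\bot)$, so $h$ is a publicly accepted parented value in the strictly smaller view $u$. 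The induction hypothesis gives that ${\sf Root}(u,h)$ is well-defined and is a publicly accepted view-1 high, and ${\sf Root}(w,v)={\sf Root}(u,h)$.
\end{itemize}
Termination of the recursion follows from the same fact: each step strictly decreases the view while staying positive, so the chain reaches view 1 after at most $w-1$ steps.

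\emph{Main obstacle.} The delicate point is that ${\sf Parent}$ is computed by fetching a preimage of a hash. We must be sure the fetched certificate is the one validated by some honest party, so that the returned triple really satisfies the predicates. Otherwise a Byzantine party could supply a different preimage and attach an invalid "parent." This is handled inside \Cref{lem:pcstr:valid-parent} via Availability (\Cref{lem:availability}) together with collision resistance, and the proof above relies on it as a black box. A secondary subtlety is making sure "publicly accepted parented value" applies to $h$ itself at the intermediate step. This holds because ${\sf ValidCert}$ checks exactly $\cF^{\sf high}_u$ and ${\sf HasParent}$, and the inductive hypothesis needs nothing more.
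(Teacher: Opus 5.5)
Your proof is correct and follows the same argument as the paper. Both rely on \Cref{lem:pcstr:valid-parent}: each parent step moves to a strictly smaller positive view, carries a publicly accepted high proof, and ${\sf HasParent}$ forces a further parent whenever that view exceeds one. Your strong induction on the view just spells out the paper's remark that the recursion reaches view~1 and cannot stop earlier.
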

\begin{proof}
By \Cref{lem:pcstr:valid-parent}, every parent step moves to a strictly
smaller positive view and carries a publicly accepted high proof.
If the parent view is greater than one, ${\sf ValidCert}$ requires that
high to have another parent.
The recursion therefore reaches view~1 and cannot stop earlier.
\end{proof}

\begin{lemma}[Same-view parent]\label{lem:pcstr:same-parent}
Any two publicly accepted non-empty lows in the same view have the same
parent.
Moreover, every publicly accepted high in that view has the same parent
as any publicly accepted non-empty low.
\end{lemma}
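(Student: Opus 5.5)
The plan is to reduce both claims to the Proof Soundness of \pcver in view $w$, using the rule that ${\sf Parent}$ is determined by the first coordinate unequal to $\hash(\bot)$. Fix a view $w>1$ (in view~1 parents are not defined, and ${\sf HasParent}$ is trivially true; if needed, the statement for view~1 is vacuous). Let $v$ and $v'$ be publicly accepted non-empty lows in view $w$, and let $h$ be any publicly accepted high in view $w$.

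\textbf{Step 1 (low versus high).} By Proof Soundness (\Cref{def:pcver}) applied to the accepted pair $(v,h)$, we get $v\preceq h$. Let $k$ be the first coordinate of $v$ with $v[k]\neq\hash(\bot)$; it exists because $v$ is non-empty. Every coordinate $j<k$ of $v$ equals $\hash(\bot)$. Since $h$ agrees with $v$ on coordinates $1,\dots,|v|$, the first coordinate of $h$ unequal to $\hash(\bot)$ is also $k$, and $h[k]=v[k]$.

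Next, ${\sf Parent}$ fetches the preimage of this common hash. By collision resistance and canonical encoding, it returns the same proposal object $P_k=(w,\cert)$ in both cases. By Availability (\Cref{lem:availability}), that preimage exists and was validated by an honest party, as in \Cref{lem:pcstr:valid-parent}. Therefore ${\sf High}(\cert)$ is identical for $v$ and $h$, so ${\sf Parent}(v)={\sf Parent}(h)$. This proves the second claim.

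\textbf{Step 2 (two lows).} To compare $v$ and $v'$, I will not compare them directly. Instead, I pick any publicly accepted high $h$ in view $w$ and apply Step 1 twice, giving ${\sf Parent}(v)={\sf Parent}(h)={\sf Parent}(v')$. Such an $h$ is needed for this route. If one is unavailable, the fallback is a direct argument: by Proof Soundness together with \Cref{cor:pc:1}, accepted lows are pairwise consistent, since both prefix any accepted high. Consistent vectors that are both non-empty share their first non-$\hash(\bot)$ coordinate, because the shorter one is a prefix of the longer. This fallback still needs one accepted high to exist. In any execution where a low is accepted, the honest high outputs of the \pcver instance provide one by Proof Completeness and Termination. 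Alternatively, the two lows' \qcthree certificates can be compared directly via \Cref{lem:pc:1}.

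\textbf{Main obstacle.} The delicate point is making sure ``same first hash'' really implies ``same parent'' for publicly accepted rather than honest outputs. This requires that the fetched preimage is unique and well-formed. I would handle it by invoking collision resistance and the validity guarantee of \Cref{lem:pcstr:valid-parent}, item~3. I would also state explicitly that prefix-consistency preserves the position of the first non-$\hash(\bot)$ entry.
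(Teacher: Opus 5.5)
Your proof follows the paper's. Both claims go through a common publicly accepted high via \pcver Proof Soundness, and the position and value of the first non-$\hash(\bot)$ entry are preserved under $\preceq$, so the parents coincide.

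The one step to repair is where that high comes from. Proof Completeness plus Termination does not supply it here. An honest party that jumps past view $w$ never runs $\Pi^{\sf VPC}_w$. So a valid \qcthree can be assembled from $n-2f$ honest and $f$ Byzantine \votethree messages while no honest party ever outputs in view $w$.

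The paper avoids this with a one-line observation you should use. Since $\pi^{\sf low}=\pi^{\sf high}=\qcthree$, the certificate accepting $L$ also accepts the high $H$ with $(*,H)={\sf QC3Certify}(\qcthree)$. Hence an accepted high always exists alongside any accepted low.

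Your fallback through \Cref{lem:pc:1} is also sound: each accepted low is the shortest $\xmcp$ value in its \qcthree, hence an actual one. The appeal to Availability in Step~1 is harmless but not needed for equality of parents.
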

\begin{proof}
Let $L,L'$ be two publicly accepted non-empty lows in the same view.
The \qcthree proof certifying $L$ also certifies a publicly accepted high
$H$ in that view.
By \pcver Proof Soundness, $L\preceq H$ and $L'\preceq H$.
Since $L$ is non-empty, $H$ is non-empty as well.
Both lows contain the first non-$\hash(\bot)$ entry of $H$, so that entry
is identical in $L$ and $L'$ and their parents are equal.
The same first-entry argument gives
${\sf Parent}(L)={\sf Parent}(H)$.
\end{proof}

\begin{lemma}[Certificate lock]\label{lem:pcstr:cert-extend}
Let $L$ be a publicly accepted non-empty low in view $w\ge2$, and let
$R={\sf Root}(w,L)$.
For every $x>w$, every valid certificate entering view $x$ references a
high $H$ from a view $u\ge w$ such that ${\sf Root}(u,H)=R$.
\end{lemma}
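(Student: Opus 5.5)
We prove the statement by strong induction on $x>w$. Throughout we use \Cref{lem:pcstr:same-parent}: every publicly accepted high in view $w$ has the same parent as $L$. Hence every such high $H_w$ is non-empty and satisfies ${\sf Root}(w,H_w)={\sf Root}(w,L)=R$.

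\emph{Base case $x=w+1$.} A valid certificate entering view $w+1$ is either direct or indirect.

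\begin{itemize}
\item If it is direct, it has the form $\cert^{\sf dir}(w,H,\pi)$ with $\cF^{\sf high}_w(H,\pi)$. So $H$ is a publicly accepted view-$w$ high, $u=w$, and ${\sf Root}(w,H)=R$ by the observation above.
\item If it is indirect, \Cref{lem:pcstr:cert:1} says every publicly accepted low in view $w$ is parentless. This contradicts $L$ being non-empty, so this case cannot occur.
\end{itemize}

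\emph{Inductive step, $x>w+1$.} Assume the claim holds for every view in $(w,x)$, and consider a valid certificate entering view $x$.

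\begin{itemize}
\item \emph{Direct certificate.} It references a high $H$ from view $x-1\ge w+1$, and ${\sf ValidCert}$ checks ${\sf HasParent}(x-1,H)$. By \Cref{lem:pcstr:valid-parent}, $H$'s first non-$\hash(\bot)$ entry is the hash of a valid view-$(x-1)$ proposal object. Its certificate is a valid certificate entering view $x-1$ and references ${\sf Parent}(H)=(u',H',\pi')$. By the induction hypothesis, $u'\ge w$ and ${\sf Root}(u',H')=R$. By \Cref{def:pcstr:root}, ${\sf Root}(x-1,H)={\sf Root}(u',H')=R$, and $u=x-1\ge w$.
\item \emph{Indirect certificate.} It references $(w_\star,H,\pi)$ with $w_\star<x-1$. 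By \Cref{lem:pcstr:cert:2}, every publicly accepted low in every view $u\in(w_\star,x)$ is parentless. Since $L$ is parented and $w<x$, we must have $w\le w_\star$.
  \begin{itemize}
  \item If $w_\star=w$, then $H$ is a publicly accepted view-$w$ high, because ${\sf ValidHigh}$ checks $\cF^{\sf high}_w$. So ${\sf Root}(w,H)=R$ as above.
  \item If $w_\star>w$, then ${\sf ValidHigh}$ gives ${\sf HasParent}(w_\star,H)$. Applying \Cref{lem:pcstr:valid-parent} to $H$ yields a valid certificate entering view $w_\star<x$. The induction hypothesis, combined with the root recursion as in the direct case, gives ${\sf Root}(w_\star,H)=R$.
  \end{itemize}
\end{itemize}

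To keep the induction clean, I would first isolate a small auxiliary claim: for every publicly accepted parented high $H$ in view $u>w$, ${\sf Root}(u,H)$ equals the root of the high referenced by the certificate whose hash is $H$'s first non-$\hash(\bot)$ entry. This is immediate from \Cref{lem:pcstr:valid-parent} and \Cref{def:pcstr:root}, and it is used in both the direct case and the $w_\star>w$ subcase.

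The main obstacle is the indirect case. There we need \Cref{lem:pcstr:cert:2} to rule out skipping past view $w$. That lemma is stated for lows, so the argument hinges on $L$ being a publicly accepted \emph{low} in view $w$ that has a parent. I would check that $w_\star<w$ really forces $w\in(w_\star,x)$, including the boundary $w=x-1$, where \Cref{lem:pcstr:cert:1} applies directly.
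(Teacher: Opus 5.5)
Your proof is correct and follows the paper's argument essentially step for step. Both use strong induction on $x$, rule out indirect certificates that skip past view $w$ via \Cref{lem:pcstr:cert:2}, and apply \Cref{lem:pcstr:same-parent} when the referenced high lies in view $w$. Otherwise both pass through \Cref{lem:pcstr:valid-parent} to the certificate at the first non-$\hash(\bot)$ entry and apply the induction hypothesis. The only difference is cosmetic: in the base case you cite \Cref{lem:pcstr:cert:1} directly where the paper cites \Cref{lem:pcstr:cert:2}, and the two are equivalent there.
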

\begin{proof}
We use strong induction on $x$.

For $x=w+1$, an indirect certificate would, by
\Cref{lem:pcstr:cert:2}, imply that $L$ is parentless.
Thus the certificate is direct and references a publicly accepted high
$H$ in view $w$.
By \Cref{lem:pcstr:same-parent},
${\sf Parent}(H)={\sf Parent}(L)$, so ${\sf Root}(w,H)=R$.

Now let $x>w+1$ and assume the claim for every smaller target view.
If the certificate is direct, it references a parented high $H$ in view
$x-1$.
By \Cref{lem:pcstr:valid-parent}, the first non-$\hash(\bot)$ entry of
$H$ contains a valid proposal object with a certificate entering view
$x-1$.
The induction hypothesis gives root $R$ to the high referenced by that
certificate.
Writing ${\sf Parent}(H)=(r,G,\pi_G)$, that referenced high is $G$, so
${\sf Root}(x-1,H)={\sf Root}(r,G)=R$.

If the certificate is indirect, let $(u,H,\pi_H)$ be its referenced high.
If $u<w$, \Cref{lem:pcstr:cert:2} would again imply that $L$ is
parentless, so $u\ge w$.
For $u=w$, \Cref{lem:pcstr:same-parent} gives root $R$.
For $u>w$, \Cref{lem:pcstr:valid-parent} exposes a valid certificate
entering view $u$ in the proposal object at the first
non-$\hash(\bot)$ entry of $H$.
The induction hypothesis gives root $R$ to the high referenced by that
certificate.
Writing ${\sf Parent}(H)=(r,G,\pi_G)$, that referenced high is $G$, so
${\sf Root}(u,H)={\sf Root}(r,G)=R$.
\end{proof}

\begin{lemma}\label{lem:pcstr:low-extend}
Any two publicly accepted non-empty lows in views $2\le w\le w'$ have the
same root.
\end{lemma}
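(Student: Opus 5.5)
We split on whether the two views are equal or distinct. Let $L$ be a publicly accepted non-empty low in view $w$ and $L'$ one in view $w'$, with $2\le w\le w'$, and let $R={\sf Root}(w,L)$. By \Cref{lem:pcstr:root-defined}, both ${\sf Root}(w,L)$ and ${\sf Root}(w',L')$ are well defined.

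\emph{Same view ($w=w'$).} \Cref{lem:pcstr:same-parent} gives ${\sf Parent}(L)={\sf Parent}(L')=(u,h,\pi)$. Both roots then equal ${\sf Root}(u,h)$ by \Cref{def:pcstr:root}.

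\emph{Different views ($w<w'$).} We reduce to the certificate lock, \Cref{lem:pcstr:cert-extend}.

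\begin{enumerate}
\item Since $L'$ is non-empty and publicly accepted in view $w'>1$, apply \Cref{lem:pcstr:valid-parent}. The first non-$\hash(\bot)$ entry of $L'$ is the hash of a valid view-$w'$ proposal object, and its certificate $\cert$ enters view $w'$ and references exactly ${\sf Parent}(L')=(u,H,\pi_H)$.
\item Because $w'>w$, \Cref{lem:pcstr:cert-extend} applied to $L$ with $x=w'$ says that every valid certificate entering view $w'$ references a high $H$ from a view $u\ge w$ with ${\sf Root}(u,H)=R$.
\item Applying this to $\cert$ gives ${\sf Root}(w',L')={\sf Root}(u,H)=R$, by the definition of ${\sf Root}$ via ${\sf Parent}$.
\end{enumerate}

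The main point to verify is in step 1. The certificate extracted from $L'$ via \Cref{lem:pcstr:valid-parent} must be the same object whose referenced high defines ${\sf Parent}(L')$, and hence ${\sf Root}(w',L')$. This is what clause (3) of that lemma states, and it relies on Availability and collision resistance. Once that identification holds, the statement is a direct corollary of \Cref{lem:pcstr:cert-extend} together with \Cref{lem:pcstr:same-parent}, and no further induction is needed.
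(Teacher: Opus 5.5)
Your proposal is correct and follows essentially the same argument as the paper: the equal-view case via \Cref{lem:pcstr:same-parent}, and the $w<w'$ case by taking the certificate in the proposal object at the first non-$\hash(\bot)$ entry of $L'$ and applying \Cref{lem:pcstr:cert-extend} with $x=w'$. Your explicit appeal to clause (3) of \Cref{lem:pcstr:valid-parent} only spells out the identification that the paper states in one line (``This high is the parent of $L'$'').
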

\begin{proof}
If $w=w'$, the claim follows from \Cref{lem:pcstr:same-parent}.
If $w<w'$, let $L$ and $L'$ be the two lows.
The proposal object at the first non-$\hash(\bot)$ entry of $L'$ contains
a valid certificate entering view $w'$.
By \Cref{lem:pcstr:cert-extend}, the high referenced by that certificate
has root ${\sf Root}(w,L)$.
This high is the parent of $L'$, so
${\sf Root}(w',L')={\sf Root}(w,L)$.
\end{proof}

\begin{lemma}\label{lem:pcstr:low-commit}
For any two publicly accepted non-empty lows $L,L'$ in views
$w,w'\ge2$, ${\sf Commit}(w,L,\cdot)$ and
${\sf Commit}(w',L',\cdot)$ output the same view-1 high value.
\end{lemma}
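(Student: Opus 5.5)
The plan is to reduce the statement to \Cref{lem:pcstr:low-extend} by showing that ${\sf Commit}(w,L,\cdot)$, for a publicly accepted non-empty low $L$ in view $w\ge2$, outputs exactly ${\sf Root}(w,L)$. Once that is established, WLOG $w\le w'$, and \Cref{lem:pcstr:low-extend} gives ${\sf Root}(w,L)={\sf Root}(w',L')$, so both calls output the same view-1 high.

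\emph{Step 1: ${\sf Commit}$ traces the root.} I would prove, by strong induction on $w\ge2$, the following claim: for any publicly accepted parented value $v$ in view $w$, the recursive call ${\sf Commit}(w,v,\pi)$ reaches the branch with $w_p=1$, and at that point $v_p={\sf Root}(w,v)$. By \Cref{lem:pcstr:valid-parent}, ${\sf Parent}(v)=(u,h,\pi_h)$ satisfies $0<u<w$. It also gives that $\cF^{\sf high}_u(h,\pi_h)$ holds, so $h$ is publicly accepted, and that ${\sf HasParent}(u,h)$ holds.

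Now split on $u$. If $u=1$, ${\sf Commit}$ selects $(v_p,\pi_p)=(h,\pi_h)$, and by \Cref{def:pcstr:root} ${\sf Root}(w,v)={\sf Root}(1,h)=h$. If $u>1$, ${\sf Commit}$ recurses on $(u,h,\pi_h)$. Here $h$ is a publicly accepted parented value in a smaller view, so the induction hypothesis applies. It gives that the recursion outputs ${\sf Root}(u,h)$, and this equals ${\sf Root}(w,v)$ by definition.

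Termination of this recursion and its landing in view~1 are exactly the content of \Cref{lem:pcstr:root-defined}. The "has not output a high value" guard only suppresses a repeated output; it does not change the value that is selected. I will interpret the statement as being about the selected value.

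\emph{Step 2: the non-empty low is parented.} The ${\sf HasParent}$ check in ${\sf Commit}$ passes for $L$ because, by the paper's definition, $L$ being non-empty means ${\sf Parent}(L)\neq(0,[\,],\bot)$. I also need the recursion to go through the parent highs, even though these are highs rather than lows. This is fine because the claim in Step 1 is stated for arbitrary publicly accepted parented values, not only lows.

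\emph{Main obstacle.} The delicate point is making sure the values fed into the recursion are legitimately "publicly accepted parented values", so that \Cref{lem:pcstr:valid-parent} and the Root definition apply at every level. I would discharge this with clause~2 of \Cref{lem:pcstr:valid-parent}: it supplies both $\cF^{\sf high}_u$ acceptance and ${\sf HasParent}$ for each parent. Combined with the strict decrease $u<w$, this keeps the induction well-founded.
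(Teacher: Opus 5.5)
Your proof is correct and follows the paper's route: show that ${\sf Commit}$ traces the parent recursion defining ${\sf Root}$, then apply \Cref{lem:pcstr:low-extend} to conclude that both lows share the same view-1 root. Your strong induction, which uses \Cref{lem:pcstr:valid-parent} to keep each parent a publicly accepted parented high, simply spells out the paper's one-line claim that ${\sf Commit}$ outputs the root of its input low.
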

\begin{proof}
Procedure ${\sf Commit}$ follows exactly the parent recursion defining
${\sf Root}$ and therefore outputs the root of its input low.
By \Cref{lem:pcstr:low-extend}, the two lows have the same root, so the two
calls output the same view-1 high.
\end{proof}

\begin{lemma}\label{lem:pcstr:enter}
After GST, honest parties keep entering new views.
\end{lemma}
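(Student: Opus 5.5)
We argue by induction on views: if after GST some honest party enters view $w$, then every honest party enters $w$ and some honest party eventually enters $w+1$. View~1 is the base case, because every honest party starts $\Pi^{\sf VPC}_1$ upon input. The first step is \emph{propagation}. An honest party entering view $w$ rebroadcasts the valid $\newview=(w,\cert)$ it received. After GST every honest party receives this message within $\Delta$. Since ${\sf ValidCert}$ is a public predicate, each honest party accepts it and enters view $w$ unless it is already in a higher view; in that case we continue the argument from the higher view. Each honest party then starts $\timer(w)$, so it invokes ${\sf RunVPC}(w)$ either when all $n$ proposal objects arrive or when the timer expires. Hence all honest parties invoke $\Pi^{\sf VPC}_w$ in finite time, and by Termination of \pcver (\Cref{thm:pc:1}) every honest party obtains a verifiable high output $(v^{\sf high}_{i,w},\pi^{\sf high}_{i,w})$.

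The second step is a case split on what honest parties do with these high outputs. If some honest party has ${\sf HasParent}(w,v^{\sf high}_{i,w})$, it broadcasts $\newview=(w+1,\cert^{\sf dir}(w,v^{\sf high}_{i,w},\pi^{\sf high}_{i,w}))$. This certificate passes ${\sf ValidCert}(w+1,\cdot)$ because ${\sf ValidHigh}$ checks exactly $\cF^{\sf high}_w$ together with ${\sf HasParent}$, so every honest party receiving it enters view $w+1$. Otherwise, all $n-f\ge 2f+1\ge f+1$ honest parties broadcast \emptyview for view $w$. Each honest party eventually receives at least $f+1$ of these and must accept them. For that, three conditions must hold. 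First, $w\ge w_i$, which holds unless the receiver has already advanced past $w$, in which case we are done anyway. Second, $w>w_h$, which holds because the reported $w^{\sf high}_\star$ is a view whose high was obtained before view $w$'s output. Third, ${\sf ValidHigh}(w_h,v_h,\pi_h)$, which holds because the stored triple was recorded only after a ${\sf HasParent}$ check or through a validated certificate.

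The main obstacle is the degenerate value $w^{\sf high}_\star=0$, i.e., the initial triple $(0,[\,],\bot)$, for which ${\sf ValidHigh}$ and the condition $0<w_h$ in ${\sf ValidSkip}$ would fail. We must show that every honest party in a view $w\ge 2$ has $w^{\sf high}_\star\ge 1$. The plan is to observe that entering any view $w\ge2$ requires processing a valid certificate. That certificate's ${\sf High}$ is a triple with positive view, since both direct and indirect certificates enforce $0<w_\star$. The \newview handler then raises $w^{\sf high}_\star$ to at least that positive view. For view~1 itself, ${\sf HasParent}(1,\cdot)$ is always true, so every honest party takes the direct branch and no \emptyview is ever sent.

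With the $f+1$ \emptyview messages accepted, the receiving honest party aggregates them into $\Sigma$ and selects a triple with maximal $w_h$. The resulting $\cert^{\sf ind}(w,\cdot,\Sigma)$ satisfies ${\sf ValidCert}(w+1,\cdot)$ by construction, so the party broadcasts $\newview=(w+1,\cert)$ and honest parties enter view $w+1$. Iterating over $w$ shows that honest parties keep entering new views after GST. Timers only help here: they are needed so that ${\sf RunVPC}$ is invoked even when some proposal objects never arrive.
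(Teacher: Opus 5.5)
Your proof is correct and follows essentially the same route as the paper. The paper argues by contradiction on a last entered view $w_{\sf last}$ and makes the same split between a parented high yielding a direct \newview and $f+1$ \emptyview messages yielding an indirect certificate. Your extra checks fill in details the paper's proof leaves implicit: that honest \emptyview messages pass the receiver's checks (in particular that $w^{\sf high}_\star\ge 1$ in every view $w\ge 2$), and that view~1 never takes the skip branch.
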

\begin{proof}
For contradiction, suppose there is a last view $w_{\sf last}$ such that some honest party enters $w_{\sf last}$,
but no honest party ever enters any view $>w_{\sf last}$.
Once an honest party enters view $w_{\sf last}$, it relays a valid \newview message, so all honest parties enter $w_{\sf last}$
within bounded time after GST.
If any party advances before every view timer expires, then an honest
party enters a higher view, contradicting the choice of $w_{\sf last}$.
Otherwise, all timers eventually expire and every honest party starts
$\Pi^{\sf VPC}_{w_{\sf last}}$.

If any honest party obtains a verifiable high output $v^{\sf high}$ with ${\sf HasParent}(w_{\sf last},v^{\sf high})={\sf true}$,
it broadcasts a direct \newview for $w_{\sf last}+1$, contradiction.
Otherwise, every honest party broadcasts an \emptyview message for $w_{\sf last}$, and some honest party collects $f+1$ such messages,
forms an indirect certificate for $w_{\sf last}+1$, and broadcasts \newview, again contradiction.
\end{proof}

\begin{theorem}\label{thm:pcstr:correct}
\pcstr Protocol (\Cref{alg:pcstr-leaderless}) solves \pcstr (\Cref{def:pcstr}) under partial synchrony:
Agreement, Prefix, and Validity always hold, and Termination holds
after GST.
\end{theorem}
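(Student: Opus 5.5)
We prove the four properties separately. The safety properties are derived from the lemmas above and \pcver Proof Soundness, and Termination is derived from \Cref{lem:pcstr:enter}.

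\emph{Agreement.} An honest party outputs a high value only inside ${\sf Commit}(w,v,\pi)$ for some $w\ge2$ and a publicly accepted non-empty low $v$. This is because the $w=1$ branch only outputs lows, and ${\sf Commit}$ recurses only through parented values. By \Cref{lem:pcstr:low-commit}, any two such calls output the same view-1 high, namely the common ${\sf Root}$. Hence $v_i^{\sf high}=v_j^{\sf high}$. We must also check that the recursion inside ${\sf Commit}$ coincides with the definition of ${\sf Root}$ and terminates. This follows from \Cref{lem:pcstr:valid-parent} and \Cref{lem:pcstr:root-defined}: each step strictly decreases the view and carries an accepted high with a parent, so the recursion ends at view~1 with a publicly accepted view-1 high $(v_p,\pi_p)$.

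\emph{Prefix and Validity.} Every honest low is a publicly accepted view-1 low, since it is output only from a valid view-1 $\newcommit$. Every honest high is a publicly accepted view-1 high, by \Cref{lem:pcstr:root-defined}. \pcver Proof Soundness for $\Pi^{\sf VPC}_1$ therefore gives $v_i^{\sf low}\preceq v_j^{\sf high}$. It also gives $\mcp{\{v_h^{\sf in}\}_{h\in\cH}}\preceq v_i^{\sf low}$, because the view-1 inputs are exactly the \pcstr inputs. These two facts are Prefix and Validity.

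\emph{Termination.} The low output is the easy part. Every honest party runs $\Pi^{\sf VPC}_1$, which terminates, so it obtains a low and broadcasts $\newcommit$ for view~1; thus every honest party outputs a low. For the high output, \Cref{lem:pcstr:enter} shows that honest parties enter infinitely many views after GST. Because the ranking is cyclically shifted, some post-GST view $w^\star$ has an honest party $p_1$ in the first ranked position. All honest parties enter $w^\star$ within $\Delta$ of each other and relay $\newview$ messages. With a $2\Delta$ timer, every honest party therefore has $P_{i,w^\star}[p_1]\neq\bot$. Moreover, all honest parties hold the same object there, since $p_1$ broadcasts one $\newview$ per view. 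All honest inputs then agree on a non-$\hash(\bot)$ value at coordinate~1. By Validity of $\Pi^{\sf VPC}_{w^\star}$ (Termination holds asynchronously), every honest low in $w^\star$ contains that entry and is non-empty. An honest party then broadcasts $\newcommit$, every honest party calls ${\sf Commit}$ on a parented low, and following the parent chain outputs the view-1 high.

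\emph{Main obstacle.} The hard step is the termination argument in the previous paragraph. We must show that the honest $p_1$ actually enters $w^\star$, and that the relayed $\newview$ reaches everyone before the timers expire; entry times may be staggered, and a party can reach a view by an indirect jump. To handle this, we first let GST absorb the stabilization period. We then argue by the relay rule: the first honest entry into $w^\star$ causes all honest parties, including $p_1$, to enter within $\Delta$, and $p_1$'s rebroadcast reaches everyone within a further $\Delta$, before their $2\Delta$ timers expire. A second subtlety is that parties might skip $w^\star$ entirely. We handle it by noting that skipping past $w^\star$ requires the view to be formed, and by choosing the first post-GST view with an honest leader into which some honest party enters. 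Infinitely many such views exist because the shift is cyclic and honest parties keep advancing.
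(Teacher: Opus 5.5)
Your proof is correct and takes essentially the same route as the paper. Agreement comes from \Cref{lem:pcstr:low-commit}, Prefix and Validity from view-1 \pcver Proof Soundness, and Termination from \Cref{lem:pcstr:enter} together with a post-GST view whose honest first-ranked proposer forces, via \pcver Validity, a parented low whose ${\sf Commit}$ reaches view~1. Your treatment of staggered entries and skipped views adds detail that the paper's terse termination argument omits; for Validity, the paper pairs the low with the high certified by the same \qcthree rather than with an honest party's output high, which avoids relying on some high having already been output.
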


\begin{proof}
\emph{Validity.}
Each honest party outputs a publicly accepted view-1 low.
The same \qcthree proof also certifies a publicly accepted view-1 high.
By \pcver Proof Soundness,
\[
\mcp{\{v_h^{\sf in}\}_{h\in\cH}}\preceq v_i^{\sf low}.
\]

\emph{Prefix.}
Fix honest parties $i,j$.
The verifiable outputs
$(v_i^{\sf low},\pi_i^{\sf low})$ and
$(v_j^{\sf high},\pi_j^{\sf high})$ are both publicly accepted by the
view-1 \pcver predicates.
Proof Soundness therefore gives
$v_i^{\sf low}\preceq v_j^{\sf high}$.

\emph{Agreement.}
Consider any two honest parties $i,j$ that output $v_i^{\sf high}$ and $v_j^{\sf high}$.
Each high is produced by committing a publicly accepted non-empty low
from some view at least two and following its parent chain.
By \Cref{lem:pcstr:low-commit}, any two such calls output the same view-$1$ high value.
Hence $v_i^{\sf high}=v_j^{\sf high}$.

\emph{Termination.}
By \pcver Termination, every honest party eventually outputs its view-1
low pair.
It remains to show that every honest party eventually outputs $v_i^{\sf high}$.

By \Cref{lem:pcstr:enter}, after GST honest parties keep entering new views.
Within at most $f+1$ cyclic shifts, an honest party is first in the
ranking.
Fix such a post-GST view $w^\star$.
Its proposal object reaches every honest party before the proposal
deadline, so every honest input to $\Pi^{\sf VPC}_{w^\star}$ has the same
non-$\hash(\bot)$ first coordinate.
The honest-input maximum common prefix therefore has a parent.
By \pcver Validity, every publicly accepted low in this view has a parent.
The parties broadcast these lows, invoke ${\sf Commit}$, and by
\Cref{lem:pcstr:root-defined} reach a view-1 high.
\end{proof}

\begin{theorem}\label{thm:pcstr:leaderless}
\pcstr Protocol (\Cref{alg:pcstr-leaderless}) achieves Leaderless Termination (\Cref{def:leaderless})
against an adversary that can suspend up to one party per round while up
to $f-1$ other parties are Byzantine.
As in the synchronous-minus-one message-passing model, each proposal
phase is one synchronized logical round: a non-suspended honest proposer
reaches every correct party, while a suspended proposer reaches none.
Suspension pauses a correct party for that round; messages remain pending
on the reliable channels and are delivered when it next takes a step.
\end{theorem}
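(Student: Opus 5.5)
Safety (Agreement, Prefix, Validity) is already established by \Cref{thm:pcstr:correct} and does not depend on timing or suspensions, so the plan only needs to show that every honest party eventually outputs a high value. The low output follows from \pcver Termination of view~1: a suspended party's messages stay pending and are delivered later, so every honest party's view-1 vote is eventually counted. For the high output we first redo \Cref{lem:pcstr:enter} in the suspension model. A suspended honest party is only paused. Its pending \newview and \emptyview messages are delivered when it next steps. In any view, at least $n-f$ parties that are never Byzantine finish $\Pi^{\sf VPC}_w$ and broadcast either a direct \newview or an \emptyview. Since $n-f\ge f+1$, either a direct certificate or an indirect certificate for view $w+1$ is formed. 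Hence honest parties keep entering new views, and each view's proposal phase is one synchronized logical round as stipulated in the statement.

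Next, we pick the view $w^\star$ where both leading positions are correct. The ranking in view $w$ is ${\sf Shift}^{w-1}(\rank^{\sf SPC}_1)$. At most $f-1$ parties are Byzantine, so among any $n\ge 3f+1$ consecutive cyclic positions there exist two adjacent positions $p_1,p_2$ held by non-Byzantine parties. Note that it is adjacency in the cyclic order that we need, not merely that some honest party is first. We take a post-GST view $w^\star$ in which $(p_1,p_2)$ are the first two ranked parties. The adversary may suspend at most one party in that proposal round. If neither is suspended, every correct party receives $p_1$'s proposal, so all correct inputs share the non-$\hash(\bot)$ coordinate $1$. If $p_1$ is suspended, it reaches no correct party. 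By the round semantics, every correct party freezes its buffer with $\hash(\bot)$ at coordinate $1$, and $p_2$, which is not suspended, reaches everyone. So the correct inputs agree on $[\hash(\bot),\hash(P_{p_2})]$. If $p_2$ is the suspended one, coordinate $1$ already suffices. In every case the maximum common prefix of correct inputs has a parent.

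The remaining step is to run $\Pi^{\sf VPC}_{w^\star}$ despite suspensions, and this is where the proof instantiates \Cref{alg:pc} rather than treating \pcver as a black box. The subtlety is that \pcver Validity is stated over honest inputs, while a suspended correct party is still honest with a possibly stale input. The first case analysis shows that all correct parties, suspended or not, agree on the relevant coordinate, since a suspended party's own input buffer is frozen at the round boundary with the same placement. Validity then applies. Every \qcone contains at least $t$ correct votes (\Cref{lem:pc:qc1-threshold}), so the agreed coordinate is always appended, and every publicly accepted low in $w^\star$ is parented. Liveness of \Cref{alg:pc} follows because each of its three rounds waits only for $n-f$ votes. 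At most one suspended plus $f-1$ Byzantine parties are missing at a time, and pending messages are eventually delivered. Thus correct parties produce lows, broadcast \newcommit, call ${\sf Commit}$, and by \Cref{lem:pcstr:root-defined} reach and output a view-1 high.

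The main obstacle is making the suspension semantics interact correctly with the timer-based buffer freezing in view $w^\star$. In particular, a party that is suspended during the proposal round must still freeze consistently and must not later insert $p_1$'s late proposal. I would handle this by appealing directly to the synchronous-minus-one round convention stated in the theorem, under which the buffer is frozen at the round boundary before \pcver is invoked.
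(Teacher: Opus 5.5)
Your overall route matches the paper's: an adjacent pair of correct parties reaches the front of the cyclic ranking, you split on which one is suspended, you open up \Cref{alg:pc} and use the threshold-prefix rule, and you finish with ${\sf Commit}$. However, the step that explains why every accepted low in view $w^\star$ is parented is wrong as stated.

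You claim that all correct parties, \emph{including the suspended one}, agree on the relevant coordinate(s), so that \pcver Validity applies. Consider the case where $p_1$ is suspended. Then $p_1$ is itself correct, and every broadcast is locally processed by its sender, so $p_1$'s own buffer holds $P_{p_1,w^\star}[p_1]\neq\bot$. Its input therefore has $\hash(P_{p_1,w^\star})$ at coordinate~1, while every other correct input has $\hash(\bot)$ there. The maximum common prefix of all honest inputs is then empty, and black-box Validity gives nothing. The round convention in the statement only says whom a suspended proposer reaches. It does not force a suspended party's own frozen input to match the others. Under the reading where the paused party freezes at the boundary, its buffer may instead lack $p_2$'s still-pending proposal. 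So appealing to the convention does not close this.

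The fix is what the paper does: drop the suspended party from the agreement claim.
\begin{enumerate}
\item Let $A$ be the at least $n-(f-1)-1=n-f$ non-suspended correct parties in the proposal round of $w^\star$. Your case analysis shows that all inputs in $A$ share a prefix containing a parent.
\item Any \qcone of $n-f$ votes intersects $A$ in at least $n-2f\ge t$ parties (\Cref{lem:pc:qc1-threshold}).
\item Hence the threshold-prefix rule appends that prefix in every certified $x$, and therefore in every $\xmcp$ and every accepted low of view $w^\star$.
\end{enumerate}
This is exactly why \Cref{alg:pc} must be opened up rather than used as a black box. Your sentence ``every \qcone contains at least $t$ correct votes'' is the right tool, but the count must be taken over $A$, not over all correct parties. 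Equivalently, $n-f$ voters minus $f-1$ Byzantine minus one suspended leaves $n-2f\ge t$. Validity should not be invoked at all.

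The rest of your argument is fine: view entry under suspension, existence of the adjacent pair, quorum liveness with $n-f$ active parties, and the ${\sf Commit}$ chain via \Cref{lem:pcstr:root-defined}.
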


\begin{proof}
There are at least $n-f+1$ honest parties and at most $f-1$ Byzantine
parties, so the cyclic ranking contains two adjacent honest parties.
In every \pcver phase, at least $n-f$ honest parties are active; pending
messages let every correct party catch up when it next takes a step.
Thus the view-1 instance terminates and every correct party outputs its
low value, and the view-entry argument of \Cref{lem:pcstr:enter} continues
to apply.

As the ranking shifts, some view $w^\star$ places such a pair in the first
two positions.

During the proposal round of view $w^\star$, at most one party is
suspended.
Let $A$ be a set of $n-f$ non-suspended honest parties in that round.
If the first proposer is active, every input in $A$ contains its same
non-$\hash(\bot)$ proposal in coordinate one.
If the first proposer is suspended, every input in $A$ contains
$\hash(\bot)$ in coordinate one, while the second proposer is active and
has the same non-$\hash(\bot)$ proposal in coordinate two.
Thus the inputs in $A$ share a prefix containing a parent.

The mobile suspension of one party in each \pcver round leaves at least
$n-f$ active honest parties, enough to form every required quorum and
advance the \pcver instance.
Moreover, every \qcone of size $n-f$ intersects $A$ in at least
$n-2f\ge t$ parties.
The threshold-prefix rule therefore forces every certified $\x$, and
hence every accepted low in view $w^\star$, to contain the common parent
prefix.
Every correct party eventually receives a committed low from this view;
by \Cref{lem:pcstr:root-defined}, ${\sf Commit}$ follows its parent chain
to a view-1 high.
Hence every correct party eventually outputs.
\end{proof}

Now we analyze the performance of~\Cref{alg:pcstr-leaderless}. 

\begin{theorem}\label{thm:pcstr:complexity}
\pcstr Protocol (\Cref{alg:pcstr-leaderless}), under a synchronized
post-GST start, has latency $3\delta$ for $v_i^{\sf low}$ and
$v_i^{\sf high}$ within
\[
2(f+1)\Delta+(5f+7)\delta
\]
in the worst case.
Under synchrony and no failures, it outputs $v_i^{\sf high}$ within $7$ rounds ($7\delta$).

Under synchrony, it has worst-case message complexity $O(n^3)$ and
communication complexity $O((\hashsize+\sigsize)n^4)$,
where $\hashsize$ is the hash size and $\sigsize$ is the signature size.
Under synchrony, when the first party in the view-2 ranking is honest,
it has message complexity $O(n^2)$ and communication complexity $O((\hashsize+\sigsize)n^3)$.
\end{theorem}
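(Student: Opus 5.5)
The plan is to prove the four claims in turn: the low latency, the worst-case high latency, the failure-free high latency, and the message and communication complexity.

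\emph{Low latency.} Under a synchronized post-GST start, every honest party invokes $\Pi^{\sf VPC}_1$ at the same time. Each of the three voting phases of \Cref{alg:pc} completes once $n-f$ honest votes arrive, which takes one actual delay $\delta$. So each honest party holds a \qcthree, and therefore $v^{\sf low}_i$, after $3\delta$.

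\emph{Worst-case high latency.} I would charge time per view.
\begin{itemize}
\item View~1 costs $3\delta$ to obtain the high output. One further $\delta$ delivers the direct \newview message for view~2.
\item Each later view costs at most $2\Delta$ of proposal timer (or less, if all $n$ proposals arrive). It then costs $3\delta$ for $\Pi^{\sf VPC}_w$, and at most $2\delta$ to advance. Advancing is either one $\delta$ for a direct \newview, or one $\delta$ for the \emptyview messages plus one $\delta$ for the indirect \newview.
\end{itemize}
By the cyclic shift, within $f+1$ views after view~1 an honest party sits first in $\rank^{\sf SPC}_{w}$. As in the Termination part of \Cref{thm:pcstr:correct}, that view $w^\star$ produces a parented low at every honest party. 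One additional $\delta$ disseminates the \newcommit message. After that, ${\sf Commit}$ follows local parent pointers; the preimages are available by the dispersal assumption, so this adds no rounds.

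Summing up: view~1 contributes $3\delta+\delta$. The $f$ skipped views contribute at most $2\Delta+5\delta$ each. View $w^\star$ contributes $2\Delta+3\delta+\delta$. This gives $2(f+1)\Delta+(5f+8)\delta$. To reach the exact bound $2(f+1)\Delta+(5f+7)\delta$, I will tighten the accounting: the $\delta$ needed to enter view~2 overlaps the start of that view's timer window. The main obstacle is exactly this bookkeeping of which $\delta$ terms overlap with timer windows. I will settle it by tracking, for each view, the time at which the first honest party enters it. Relay of \newview then bounds the entry skew of the other honest parties by $\delta$, and this skew is absorbed by the $2\Delta$ timer.

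\emph{Failure-free high latency.} With no failures, all $n$ proposals for view~2 arrive within $\delta$, so no timer is waited on. The chain is: $3\delta$ for view~1, $\delta$ for \newview, and $3\delta$ for $\Pi^{\sf VPC}_2$. The low output of view~2 is non-empty by Validity, since the first proposer is honest. Each honest party computes this low locally and commits it immediately, giving $7\delta$.

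\emph{Complexity.} Each view consists of the following, each $O(n^2)$ messages:
\begin{itemize}
\item one all-to-all \newview relay, carrying certificates of size $O((\hashsize+\sigsize)n)$ with aggregated signatures and proof references;
\item one \pcver instance, using the communication-optimized variant on length-$n$ hash vectors;
\item the \emptyview and \newcommit broadcasts.
\end{itemize}
So each view costs $O(n^2)$ messages and $O((\hashsize+\sigsize)n^3)$ communication. After GST, at most $O(f)=O(n)$ views occur before some view commits a parented low. Multiplying gives $O(n^3)$ messages and $O((\hashsize+\sigsize)n^4)$ communication in the worst case.

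When the first party in the view-2 ranking is honest, view~2 already commits, so only a constant number of views occur. This yields $O(n^2)$ messages and $O((\hashsize+\sigsize)n^3)$ communication.
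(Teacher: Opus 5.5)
Your decomposition matches the paper's in every part except the last step of the worst-case latency, which has a real bookkeeping gap. The shared parts are:
\begin{itemize}
\item $3\delta+\delta$ for view~1 plus entry into view~2;
\item at most $f+1$ attempted views of $2\Delta+3\delta$ each;
\item $2\delta$ of \emptyview/indirect-\newview between consecutive attempts;
\item the failure-free $7\delta$ chain;
\item per-view $O(n^2)$ messages and $O((\hashsize+\sigsize)n^3)$ communication, multiplied by $O(n)$ views, or by a constant number of views in the good case.
\end{itemize}
The gap is the extra $\delta$ you charge in view $w^\star$ for disseminating \newcommit. It gives you $2(f+1)\Delta+(5f+8)\delta$. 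You then try to recover that $\delta$ by claiming that entry skew into a view is ``absorbed by the $2\Delta$ timer.'' That mechanism is wrong. Suppose the first honest party enters view $w$ at $t_0$ and another enters at $t_0+\delta$ via relay. The latter's timer fires at $t_0+\delta+2\Delta$, so measured from $t_0$ the skew costs a full $\delta$; it does not disappear. For view~2 your claim is true, but only for a different reason. Each honest party processes its own direct \newview locally, so it enters view~2 at its own view-1 high output, which is by $3\delta$ under a synchronized start. That is not the argument you give.

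The paper's fix is simpler: the \newcommit $\delta$ should never have been charged. In view $w^\star$, every honest input has the honest first proposer's hash at coordinate one. By \pcver Validity, every honest party's own $\Pi^{\sf VPC}_{w^\star}$ low is therefore already parented. The party broadcasts \newcommit, processes it locally, and ${\sf Commit}$ walks the parent chain with no further round. This is exactly the reasoning you already use in the failure-free case. With the last view costing $2\Delta+3\delta$, your sum becomes $3\delta+\delta+(f+1)(2\Delta+3\delta)+2f\delta=2(f+1)\Delta+(5f+7)\delta$, which is the stated bound.
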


\begin{proof}
\emph{Latency.}
The low value is the verifiable low output of $\Pi^{\sf VPC}_1$, which is obtained in $3$ rounds, hence $3\delta$.
Delivering the initial direct \newview takes at most $\delta$.
Let $k$ be the number of attempted views after view~1.
Each such view takes at most $2\Delta+3\delta$: $2\Delta$ for proposal
collection and $3\delta$ for \pcver.
Between two attempts, collecting \emptyview messages and delivering the
resulting indirect \newview take at most $2\delta$.
Consequently,
\[
T_{\sf high}
\le
3\delta+\delta+k(2\Delta+3\delta)+2(k-1)\delta.
\]

A parentless low can occur only while a Byzantine party is first in the
cyclic ranking.
Each failed view shifts that party away from the first position, so
$k\le f+1$.
Therefore
\[
T_{\sf high}
\le 2(f+1)\Delta+(5f+7)\delta.
\]
In the fault-free synchronous case, parties exchange view-$2$ proposal objects in one round and then run $\Pi^{\sf VPC}_2$,
which takes $3$ rounds. Thus within $4$ additional rounds, some non-empty value is committed in view~2 and ${\sf Commit}$
outputs the (view-$1$) high value. The total high latency is $3+4=7$ rounds, i.e., $7\delta$.

\emph{Message and communication complexity.}
Per view, each party broadcasts $O(1)$ messages of each outer-protocol
type, for a total of $O(n^2)$ messages.
Message payloads consist of hash vectors, \pcver proofs, and (possibly aggregate) signatures. When $L=n$, a \pcver proof has size
$O(\hashsize\cdot n+\sigsize\cdot n+n)$ (by the optimization of~\Cref{sec:pc:optimized}), so certificates and proposal objects have the same asymptotic size.
Hence auxiliary communication per view is
$O(n^2(\hashsize\cdot n+\sigsize\cdot n+n))
=O((\hashsize+\sigsize)n^3)$.

Each \pcver invocation has message complexity $O(n^2)$ and communication complexity $O((\hashsize+\sigsize)n^3)$ for $L=n$
(\Cref{thm:pc:optimized:complexity}). Adding the auxiliary communication preserves the asymptotics, yielding overall message complexity
$O(n^2)$ and communication complexity $O((\hashsize+\sigsize)n^3)$ per view.

Since in the worst case, there can be $O(n)$ views, we have the message complexity $O(n^3)$ and communication complexity $O((\hashsize+\sigsize)n^4)$.
Under synchrony, when the first party in the view-2 ranking is honest, the
protocol finishes in 2 views, therefore
it has message complexity $O(n^2)$ and communication complexity $O((\hashsize+\sigsize)n^3)$.
\end{proof}

  \section{Analysis of \smr Protocol}
\label{sec:smr:analysis}

\begin{lemma}[Post-GST proposal consistency]
\label{lem:smr:proposal-consistency}
For every sufficiently post-GST slot, honest slot-start times differ by at
most $\Delta$, and the $2\Delta$ proposal window establishes
honest-proposal consistency.
\end{lemma}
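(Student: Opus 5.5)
The proof has two parts: first bound the skew of honest slot starts, then show that a $2\Delta$ collection window yields honest-proposal consistency (\Cref{def:pc:honest-proposal-consistency}).

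\emph{Skew bound.} Honest parties start slot $s$ by invoking ${\sf NewSlot}(s)$, and they do so immediately upon receiving the agreed high output of $\Pi^{\sf ASPC}_{s-1}$ (together with the local low). Let $T$ be the first time any honest party $i$ enters slot $s$. At that moment $i$ holds a publicly verifiable high for slot $s-1$. That high is the root of some committed non-empty low in a view $w\ge 2$, and the corresponding valid \newcommit message is relayed by $i$ in \Cref{alg:pcstr-leaderless}. So the plan is to argue that within $\Delta$ after $T$ (after GST), every honest party receives a valid \newcommit that lets it run ${\sf Commit}$ and output the same high. This uses Agreement of \pcaccstr (\Cref{thm:pcstr:correct}) and the fact that ${\sf Commit}$ is a local computation once the preimages are available.

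One subtlety is the line that waits for the local low output. The view-1 low can arrive later, but it is also relayed via \newcommit for view~1. I would therefore assume that $i$ relays both the view-1 \newcommit and the high-inducing \newcommit before entering slot $s$, so every honest party has both within $\Delta$. If preimage fetching adds delay, I would absorb it into the effective GST and the dispersal assumption, which guarantees that certified preimages become available within the collection window. The phrase ``sufficiently post-GST'' covers slots whose predecessor finished after GST, so the first such slot is absorbed into stabilization.

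\emph{Proposal consistency.} Fix an honest proposer $p$, and suppose $p$ starts slot $s$ at time $T_p\in[T,T+\Delta]$ and broadcasts its unique signed object $P_{p,s}$. Every honest $j$ starts at some $T_j\le T+\Delta$ and freezes $B_{j,s}$ no earlier than $T_j+2\Delta$ unless all $n$ proposals arrive earlier. The message from $p$ arrives by $T_p+\Delta\le T+2\Delta\le T_j+2\Delta$, since $T_j\ge T$. Messages arriving at the deadline are processed before the timeout, and an honest $p$ signs only one object per slot. Hence, in the timer case, $B_{j,s}[p]=P_{p,s}$ for every honest $j$.

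The early-freeze case needs a separate argument: if $j$ freezes because all $n$ slots of its buffer are filled, then in particular $B_{j,s}[p]\ne\bot$, and since $p$ is honest this entry can only be $P_{p,s}$. All honest parties use the same ranking $\rank^{\sf MC}_s$, because the ranking update is computed from the agreed high and its committed proposal objects. Therefore coordinate $k$ with $\rank^{\sf MC}_s[k]=p$ holds $\hash(P_{p,s})$ at every honest input.

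\emph{Main obstacle.} The hard step is the skew bound. It requires that ``holding the agreed high'' is transferable within one $\Delta$, meaning the relayed certificates are self-contained and any preimage fetching completes within that bound. I would handle this by invoking the data-availability and effective-GST conventions stated in \Cref{sec:def:notation} and in the \pcstr data-availability paragraph.
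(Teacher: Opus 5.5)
Your proof is correct and follows the paper's argument essentially step for step. The $\Delta$ skew bound comes from relaying the view-1 and high-inducing \newcommit messages, combined with the dispersal assumption. Honest-proposal consistency then follows from $T_p+\Delta\le T+2\Delta\le T_j+2\Delta$ and each proposer inserting its own proposal locally; your early-freeze case and common-ranking remark only make explicit what the paper leaves implicit.

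The step you flag as an assumption, that the high-inducing view-$w$ \newcommit is actually relayed, is asserted just as directly in the paper. Note, however, that the per-view rule ``broadcast if not broadcast for $w$'' does not literally guarantee this relay when the party has already broadcast a parentless view-$w$ \newcommit (e.g., its own low).
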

\begin{proof}
An honest party starts slot $s$ only after obtaining the agreed high of
slot $s-1$.
The message that first causes an honest party to obtain this high is
relayed to every honest party and delivered within $\Delta$.
By the \pcstr protocol, view-1 lows are broadcast and relayed, and
\Cref{alg:smr} waits for the local low before processing the high.
The dispersal assumption makes all certified preimages available, so
local commit, evidence processing, and ranking update add no network delay.
Thus every honest party starts slot $s$ within $\Delta$ of the first.

Every honest proposer broadcasts immediately upon starting the slot.
Its proposal therefore reaches the earliest-starting honest party within
another $\Delta$ and is processed before the $2\Delta$ timeout.
All later-starting parties receive it no later than their own deadline.
Each honest proposer also inserts its own proposal locally, proving
honest-proposal consistency.
\end{proof}

We absorb the first consistent slot and its one-slot evidence cleanup into
GST, under the effective-GST convention of \Cref{sec:def:notation}.
Thus every slot considered below, and every low used as evidence in such a
slot, satisfies honest-proposal consistency.

\begin{lemma}[Honest-proposal integrity]
\label{lem:smr:proposal-integrity}
Suppose slot $s$ satisfies honest-proposal consistency and its
\pcaccstr instance satisfies Availability.
If a publicly accepted output reaches the coordinate of an honest
proposer $p$, then it contains $\hash(P_{p,s})$ at that coordinate.
\end{lemma}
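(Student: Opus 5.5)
The argument is a short combination of Availability with honest-proposal consistency. The output in question is a publicly accepted view-1 output of the slot-$s$ \pcaccstr instance, that is, a publicly accepted output of its view-1 \pcver instance $\Pi^{\sf VPC}_1$.

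Fix such an output $v$, whether low or high. Let $p$ be an honest proposer, and let $k$ be its coordinate under the slot ranking $\rank^{\sf MC}_{s}$, which is bound into $\id_s$ and is therefore common to all honest parties. Assume $|v|\ge k$.

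First, apply Availability (\Cref{def:pc:availability}). \Cref{lem:availability} establishes it for \Cref{alg:pc} and its \pcver variant, and by assumption it holds for the slot-$s$ instance. It yields an honest party $h$ whose slot input satisfies $v^{\sf in}_{h,s}[k]=v[k]$.

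Second, apply honest-proposal consistency (\Cref{def:pc:honest-proposal-consistency}). Since $p$ is honest, all honest inputs agree at coordinate $k$. Party $p$ inserts its own signed object $P_{p,s}$ into $B_{p,s}[p]$ at ${\sf NewSlot}(s)$, before the buffer is frozen. Hence $v^{\sf in}_{p,s}[k]=\hash(P_{p,s})$, and therefore $v^{\sf in}_{h,s}[k]=\hash(P_{p,s})$ as well. Combining the two steps gives $v[k]=\hash(P_{p,s})$.

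The main subtlety is making sure this chain really applies to \emph{publicly accepted} outputs of the \pcaccstr instance, including highs returned through the parent chain of later views. I would handle this in two parts. For the view-1 high returned by ${\sf Commit}$, note that it carries its original view-1 \qcthree proof (\Cref{thm:pcstr:accountable}), so Availability applies to it as a publicly verifiable view-1 output. I would also spell out why an honest party's input at coordinate $k$ cannot be $\hash(\bot)$ or a different object from $p$. It cannot be $\hash(\bot)$ because consistency is established after GST by \Cref{lem:smr:proposal-consistency}. It cannot be a different object because $p$ signs only one proposal per slot, honest parties accept only the first valid object from each proposer, and collision resistance then pins the hash to $P_{p,s}$.
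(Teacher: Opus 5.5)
Your proof is correct and follows the paper's argument. You note that $p$ places its own $P_{p,s}$ in its input, that honest-proposal consistency then puts $\hash(P_{p,s})$ in every honest input at $p$'s coordinate, and that Availability makes any publicly accepted output reaching that coordinate carry an honest input value there. The extra subtleties you raise are not needed, because the lemma assumes honest-proposal consistency outright. This covers the appeal to GST via \Cref{lem:smr:proposal-consistency} and the first-accepted-object argument. The paper invokes collision resistance only for a separate purpose, to identify the fetched preimage as $P_{p,s}$.
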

\begin{proof}
Proposer $p$ inserts its own signed proposal object $P_{p,s}$ before
broadcasting it.
Honest-proposal consistency therefore makes every honest input equal to
$\hash(P_{p,s})$ at $p$'s coordinate.
Availability makes the output value at that coordinate equal to a value
in an honest input, hence to $\hash(P_{p,s})$.
Collision resistance and canonical encoding identify the corresponding
proposal object as $P_{p,s}$.
\end{proof}

\begin{lemma}[Ranking agreement]\label{lem:smr:ranking}
For every slot $s$, all honest parties use the same
$\rank^{\sf MC}_s$ and the same instance identifier $\id_s$.
\end{lemma}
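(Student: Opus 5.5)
We argue by induction on the slot number $s$ that all honest parties hold the same $\rank^{\sf MC}_s$. Once this holds, equality of $\id_s=\hash(s,1,\rank^{\sf MC}_s)$ follows immediately, because $\id_s$ is a deterministic function of $s$ and the ranking.

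\emph{Base case.} Every honest party initializes $\rank^{\sf MC}_{i,1}:=\rank^{\sf MC}_1$, which is a public constant. Hence all honest parties agree on $\rank^{\sf MC}_1$ and on $\id_1$.

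\emph{Inductive step.} Assume all honest parties share $\rank^{\sf MC}_s$ and $\id_s$. Then they all invoke the same instance $\Pi^{\sf ASPC}_s$. Each honest party $i$ computes $\rank^{\sf MC}_{i,s+1}:={\sf UpdateRank}(s,v^{\sf high}_{i,s},\pi^{\sf high}_{i,s})$. We must show every input to this procedure is common to all honest parties.

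\begin{itemize}
\item \emph{High output.} By Agreement of \pcaccstr (\Cref{thm:pcstr:accountable}, together with \Cref{thm:pcstr:correct}), all honest parties output the same $v^{\sf high}_s$. This matters because ${\sf Blame}$ depends only on $|v|$, on the public ranking bound to $\id_s$, and on the predicate $\cF$ being accepted; it does not depend on the particular proof. Any two accepted proofs of the same $v$ therefore yield the same $b_s={\sf Blame}^{\sf SPC}_{\id_s}(v^{\sf high}_s,\cdot)$, even if the proofs $\pi^{\sf high}_{i,s}$ differ.
\item \emph{Proposal objects.} ${\sf CommittedLows}(s,v^{\sf high}_s)$ fetches the preimages of the entries of the agreed high. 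By Availability (\Cref{lem:availability}), collision resistance and canonical encoding, every honest party obtains the identical set of proposal objects.
\item \emph{Evidence filtering.} ${\sf ValidEvidence}$ checks the public predicate $\cF^{\sf low}_{\id_{s-1}}$, where $\id_{s-1}$ is common by the induction hypothesis. It also checks $v\prec v^{\sf high}_{i,s-1}$, which is common by Agreement in slot $s-1$. Hence $\mathcal L_{s-1}$ is identical at all honest parties, and so is $D_{s-1}$, since ${\sf Blame}$ is public and deterministic.
\end{itemize}

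${\sf Demote}$ is deterministic, so all honest parties compute the same $\rank^{\sf MC}_{s+1}$, and hence the same $\id_{s+1}$.

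The main subtlety is that ${\sf UpdateRank}$ formally takes the party-local proof $\pi^{\sf high}_{i,s}$ and the local $v^{\sf high}_{i,s-1}$. Agreement guarantees equal values but not equal proofs. I would resolve this by noting explicitly that ${\sf Blame}$ depends on its proof argument only through acceptance, which holds for every honest output by Proof Completeness. A second point to handle carefully is that fetching preimages must return the same objects at every party; this needs Availability plus the dispersal assumption stated for \Cref{alg:smr}.
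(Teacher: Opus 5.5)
Your proposal is correct and follows the same route as the paper: induction on slots, with ASPC Agreement giving a common high and Availability with collision resistance giving common proposal-object preimages, hence a common evidence set, common blames, and a common deterministic two-step update. Your extra remarks make explicit two details the paper leaves implicit: that ${\sf Blame}$ depends on its proof argument only through acceptance, and that ${\sf ValidEvidence}$ relies on the already-agreed slot-$(s-1)$ high and $\id_{s-1}$, so the induction really ranges over all earlier slots.
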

\begin{proof}
The initial ranking is public.
Assume inductively that the claim holds for slot $s$.
ASPC Agreement gives all honest parties the same high value and
Availability gives them the same canonical proposal-object preimages.
They therefore extract the same valid previous-slot low set, compute the
same blame results, and apply the deterministic two-step ranking update.
Thus they obtain the same $\rank^{\sf MC}_{s+1}$ and
$\id_{s+1}=\hash(s+1,1,\rank^{\sf MC}_{s+1})$.
\end{proof}

\begin{theorem}\label{thm:smr:1}
The \smr Protocol (\Cref{alg:smr}) solves \smr
(\Cref{def:smr}) under partial synchrony: Agreement always holds, and
Termination holds after GST.
\end{theorem}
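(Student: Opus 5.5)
We prove Agreement by induction on slots, and Termination by induction on slots after GST.

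\emph{Agreement.} The inductive claim for slot $s$ is twofold. First, all honest parties use the same $\rank^{\sf MC}_s$ and $\id_s$; this is \Cref{lem:smr:ranking}. Second, they all commit the same output vector $\mathbf{b}^{\sf out}_{\cdot,s}$.

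Fix slot $s$. By \pcaccstr Agreement (\Cref{thm:pcstr:accountable}, via \Cref{thm:pcstr:correct}), every honest party obtains the same $v^{\sf high}_s$. Before advancing, each party runs ${\sf Commit}(s,\rank^{\sf MC}_s,v^{\sf high}_s)$ with the common ranking. Availability (\Cref{lem:availability}) together with collision resistance and canonical encoding guarantees that every non-$\hash(\bot)$ coordinate fetches a unique, verifiable proposal object. Hence all honest parties commit the same payloads at the same positions $(s,k)$.

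Early commits from lows are consistent with this outcome. By the Prefix property, $v^{\sf low}_{i,s}\preceq v^{\sf high}_{j,s}$, so the coordinates committed from a low are a subset of those committed from the high, with identical entries. Commits are deduplicated by $(s,k)$, so the final committed vector for slot $s$ is exactly the image of $v^{\sf high}_s$ at every honest party.

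Agreement on the ranking for slot $s+1$ follows from \Cref{lem:smr:ranking}, which closes the induction. This part uses only safety properties of \pcaccstr, so it holds under asynchrony as well.

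\emph{Termination.} By induction on slots, every honest party eventually invokes ${\sf NewSlot}(s)$. Once it has, the proposal timer guarantees that ${\sf RunASPC}(s)$ is eventually invoked with some input. After GST, the \pcaccstr Termination of \Cref{thm:pcstr:correct} ensures every honest party eventually obtains both the low and the high. The "wait until the local low is available" step therefore does not block, because the low arrives within three rounds of view~1 (\Cref{thm:pc:1}).

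The party then commits, updates its ranking, and calls ${\sf NewSlot}(s+1)$. Slots before GST also eventually terminate, since \pcstr Termination only requires that GST eventually occurs.

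\emph{Main obstacle.} The delicate step is that \pcaccstr instances must be run with the same instance identifier and ranking at all honest parties. Otherwise their outputs would not be comparable, and a party could be stuck in an instance no one else joins. This is exactly what the slot-by-slot induction through \Cref{lem:smr:ranking} supplies.

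A second point to check is that fetching preimages in ${\sf Commit}$ and in ${\sf CommittedLows}$ eventually succeeds. I would argue this from the Availability lemma together with the dispersal assumption stated in \Cref{sec:pcstr-leaderless}.
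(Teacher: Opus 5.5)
Your proposal is correct and takes essentially the same approach as the paper. Agreement follows from common rankings via \Cref{lem:smr:ranking}, ASPC agreement on the high, and every low being a prefix of that high; Termination follows from the proposal deadline plus ASPC Termination, with induction over slots. Your extra discussion of the wait-for-low step, preimage fetching, and pre-GST slots only spells out what the paper leaves implicit, though the ``three rounds'' remark belongs to \Cref{thm:pc:complexity} rather than \Cref{thm:pc:1}, and only eventual arrival of the low is needed.
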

\begin{proof}
Fix a slot $s$.
By \Cref{lem:smr:ranking}, all honest parties invoke the same ASPC
instance with the same coordinate-to-proposer mapping.
Its high outputs agree, so
${\sf Commit}(s,\rank^{\sf MC}_s,v_s^{\sf high})$ verifies and commits the
same ordered proposal objects at every honest party.
Every early low is a prefix of the agreed high by VPC Proof Soundness, so
early commits cannot conflict with finalization.
Hence the finalized vectors $\mathbf b^{\sf out}_{i,s}$ agree.

After GST, the proposal deadline causes every correct party to invoke the
slot ASPC instance.
ASPC Termination gives every correct party its high output, after which it
finalizes slot $s$, computes the common next ranking, and starts slot
$s+1$.
Induction over slots proves Termination.
\end{proof}

\begin{theorem}\label{thm:smr:2}
The \smr Protocol (\Cref{alg:smr}) achieves $f$-Censorship Resistance
(\Cref{def:censor}).
\end{theorem}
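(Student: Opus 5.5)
We argue via a potential that counts Byzantine proposers ranked ahead of some honest proposer. Work only with post-GST slots; under the effective-GST convention of \Cref{sec:def:notation}, every such slot and every low used as evidence in it satisfies honest-proposal consistency (\Cref{lem:smr:proposal-consistency}). By \Cref{lem:smr:ranking}, all honest parties share $\rank^{\sf MC}_s$, so the quantities below are common. For slot $s$, let $\Phi_s$ be the set of Byzantine parties that, in $\rank^{\sf MC}_s$, precede at least one honest party. Clearly $|\Phi_s|\le f$. The claim follows from two facts: (i) every censored slot $s$ removes some element from $\Phi$, i.e.\ $|\Phi_{s+1}|\le|\Phi_s|-1$; and (ii) $\Phi$ never gains elements, i.e.\ $\Phi_{s+1}\subseteq\Phi_s$.

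For (i), let slot $s$ be censored. Some honest input $b^{\sf in}_{p,s}$ is then missing from the committed output $\mathbf b^{\sf out}_{j,s}$, which is read off the agreed high $v^{\sf high}_s$ through ${\sf Commit}$. By \Cref{lem:smr:proposal-integrity}, if $v^{\sf high}_s$ reached $p$'s coordinate, it would contain $\hash(P_{p,s})$ there, and ${\sf Commit}$ would commit $P_{p,s}.b=b^{\sf in}_{p,s}$. Hence $|v^{\sf high}_s|$ is smaller than $p$'s coordinate, so the high is truncated and $b_s={\sf Blame}^{\sf SPC}_{\id_s}(v^{\sf high}_s,\pi^{\sf high}_s)\neq\bot$. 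By \Cref{thm:pcstr:accountable} (Accountability under honest-proposal consistency), $b_s$ is Byzantine. Moreover, $b_s=\rank[|v^{\sf high}_s|+1]$ sits at a position no later than $p$'s coordinate. It cannot be $p$ itself, since $p$ is honest, so $b_s$ precedes $p$ and therefore $b_s\in\Phi_s$. The second step of the update moves $b_s$ to the absolute end, behind every honest party, so $b_s\notin\Phi_{s+1}$.

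For (ii), note that ${\sf Demote}$ only moves parties to the end while preserving relative order. A party not in $D_{s-1}\cup\{b_s\}$ keeps its relative order with everyone else not moved, and every moved party goes behind it. Suppose a Byzantine party $q\notin\Phi_s$, meaning $q$ lies behind all honest parties. We must show $q$ does not overtake an honest party. The danger is an honest party being demoted. Parties in $D_{s-1}$ are blamed by valid truncated previous-slot lows, and by Accountability (slot $s-1$ satisfies honest-proposal consistency via the cleanup convention) they are Byzantine. Likewise $b_s$ is Byzantine. So no honest party is ever moved, every honest party stays ahead of $q$, and $q\notin\Phi_{s+1}$.

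Combining (i) and (ii), $|\Phi|$ is nonincreasing, starts at most $f$, and drops by at least one per censored post-GST slot, so at most $f$ slots are censored. The main obstacle is in (ii): verifying carefully that demotion driven by evidence from slot $s-1$ is always justified. This needs ${\sf ValidEvidence}$ to accept only lows certified under $\id_{s-1}$ whose instance satisfied honest-proposal consistency, which is exactly what the effective-GST cleanup absorbs. I would handle the first post-GST slot boundary explicitly to confirm that stale pre-GST evidence cannot demote an honest party.
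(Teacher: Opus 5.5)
Your proof is correct and follows essentially the same argument as the paper: a censored slot forces (via honest-proposal integrity) a truncated agreed high whose blamed proposer $b_s$ is Byzantine and ranked ahead of the censored honest proposer. The second update step then moves $b_s$ behind every honest party, and since evidence-driven demotions only ever move Byzantine parties, it never returns, so each censored slot is charged to a distinct Byzantine proposer. Your potential $\Phi_s$ is simply explicit bookkeeping for the paper's charging argument, and you correctly identify the effective-GST convention as the way to rule out stale evidence demoting an honest proposer.
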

\begin{proof}
Consider a censored post-GST slot $s$.
Some honest proposal $P_{p,s}$ is absent from the agreed high.
By \Cref{lem:smr:proposal-integrity}, the high cannot reach $p$'s
coordinate with another value; it must truncate earlier.
ASPC Accountability therefore returns
\[
b_s:={\sf Blame}^{\sf SPC}_{\id_s}
      (v_s^{\sf high},\pi_s^{\sf high})
\]
for a Byzantine proposer $b_s$.

The second ranking-update step moves $b_s$ to the absolute end, after all
honest proposers.
Subsequent valid low evidence blames only Byzantine proposers and cannot
move an honest proposer behind $b_s$.
Once $b_s$ is behind all honest proposers, a truncation at its coordinate
cannot omit an honest proposal, so it cannot be charged for another
censored slot.
Every censored slot is therefore charged to a distinct Byzantine
proposer.
There are at most $f$ such proposers, proving the claim.
\end{proof}

\begin{theorem}\label{thm:smr:leaderless}
The \smr Protocol (\Cref{alg:smr}) achieves Leaderless Termination
(\Cref{def:leaderless}) against one suspended party per round and up to
$f-1$ Byzantine parties.
\end{theorem}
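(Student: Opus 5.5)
The proof reduces Leaderless Termination of \smr to Leaderless Termination of \pcaccstr, applied slot by slot, by induction on the slot number. Fix a slot $s$ that all correct parties eventually enter. Two things must be shown. First, every correct party eventually invokes $\Pi^{\sf ASPC}_s$. Second, every invocation eventually returns both a low and the agreed high. Once the agreed high is returned, \Cref{lem:smr:ranking} gives every correct party the same $\rank^{\sf MC}_{s+1}$. The party then calls ${\sf NewSlot}(s+1)$, which closes the induction. Agreement is unaffected by suspensions, so it follows from \Cref{thm:smr:1}.

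\emph{Step 1: invocation.} The local invocation ${\sf RunASPC}(s)$ is triggered by the local timer $\tau_{\sf prop}$ or by receiving all $n$ proposals. Neither trigger depends on any single proposer. A suspended proposer simply leaves $\hash(\bot)$ at its coordinate in inputs that froze before its message arrived. A suspended correct party is only paused: its pending messages are delivered when it next takes a step, and its timer eventually fires. So every correct party eventually invokes the instance. This needs no honest-proposal consistency.

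\emph{Step 2: termination of $\Pi^{\sf ASPC}_s$.} I apply \Cref{thm:pcstr:leaderless} to the instance, in the same adversary model: one suspended party per round and at most $f-1$ Byzantine parties.
\begin{itemize}
\item The view-1 \pcver needs only $n-f$ active honest parties per phase. There are at least $n-f+1$ honest parties and at most one is suspended, so view~1 terminates and every correct party obtains its low.
\item The later views use the cyclic ${\sf Shift}$ ranking, which is independent of $\rank^{\sf MC}_s$. So the argument that some view places two adjacent honest parties first, and commits a parented low, carries over verbatim and yields the agreed view-1 high.
\end{itemize}

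The step I expect to be delicate is that \Cref{thm:pcstr:leaderless} never uses properties of the view-1 input vector. In particular, it does not need honest-proposal consistency, which suspension in the proposal round may break. Termination must therefore not rely on \Cref{lem:smr:proposal-consistency}; I will check explicitly that the \pcstr termination proof only requires the view-1 inputs to be well-formed length-$n$ hash vectors.

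\emph{Step 3: advancing to slot $s+1$.} Two local waits must be shown to terminate, and I expect these to be the main obstacle.
\begin{itemize}
\item The wait for the local low. It terminates because view-1 lows are broadcast and relayed via $\newcommit$.
\item Fetching the preimages of committed hashes inside ${\sf Commit}$ and ${\sf CommittedLows}$. This relies on Availability (\Cref{lem:availability}): each committed hash is held by an honest party. Together with the dispersal assumption and delivery of pending messages on reliable channels, every preimage is eventually retrievable even when the holder is temporarily suspended.
\end{itemize}
The ranking update is then deterministic and identical at all correct parties, so every correct party starts slot $s+1$. Induction over $s$ completes the proof.
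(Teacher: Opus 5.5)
Your proposal is correct and follows the paper's proof essentially step for step: induction over slots, invocation of the slot instance via the proposal timer with pending messages delivered to suspended parties, termination of $\Pi^{\sf ASPC}_s$ via \Cref{thm:pcstr:leaderless} (which indeed uses no honest-proposal consistency), preimage retrieval through the dispersal layer and Availability, and a common next ranking via \Cref{lem:smr:ranking}. One small correction is that the later-view rankings are ${\sf Shift}^{w-1}(\rank^{\sf MC}_s)$ and so are not independent of $\rank^{\sf MC}_s$, but this is harmless: any cyclic order with at most $f-1$ Byzantine parties contains two adjacent honest parties, and all correct parties agree on $\rank^{\sf MC}_s$.
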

\begin{proof}
Fix a slot $s$.
Its proposal collection is a synchronized logical round.
Every non-suspended honest proposer broadcasts to all correct parties,
while reliable messages missed by a suspended correct party remain pending
until it resumes.
Consequently every correct party eventually freezes its buffer and invokes
the same slot ASPC instance with the common ranking.

By \Cref{thm:pcstr:leaderless}, the ASPC instance terminates despite one
mobile suspension per round and produces an agreed high.
The dispersal layer supplies every certified proposal-object preimage, so
commit and evidence extraction terminate.
By \Cref{lem:smr:ranking}, all correct parties compute the same next
ranking.
These are local finite steps, after which each correct party invokes
${\sf NewSlot}(s+1)$.
Induction over slots proves Leaderless Termination.
\end{proof}

\begin{theorem}\label{thm:smr:complexity}
The \smr Protocol has amortized commit latency
\[
\tau_{\sf prop}+3\delta.
\]
Thus it has amortized latency $2\Delta+3\delta$ after GST,
$\Delta+3\delta$ under synchronized slot starts, and $4\delta$ in a
failure-free execution with synchronized starts.

In the good case---under synchrony and with an honest first proposer in
ASPC's view-2 ranking---it has $O(n^2)$ messages and
\[
O((\hashsize+\sigsize)n^3+cn^2)
\]
communication per slot.
In the worst case it has $O(n^3)$ messages and
\[
O((\hashsize+\sigsize)n^4+cn^2)
\]
communication per slot, where $c$ is the proposal-payload size.
\end{theorem}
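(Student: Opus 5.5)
We split the proof into a latency part and a complexity part. For latency, we fix a post-GST, non-censored slot $s$ and bound the time from its start until every honest party has committed all honest inputs. The plan is to show that, in all but finitely many non-censored slots (which vanish in the $\limsup$ average), this happens as soon as the view-1 low of $\Pi^{\sf ASPC}_s$ is output. Proposal collection takes at most $\tau_{\sf prop}$: the $2\Delta$ window of \Cref{lem:smr:proposal-consistency} after GST, $\Delta$ under synchronized starts, and $\delta$ in the failure-free case, since all $n$ proposals arrive and ${\sf RunASPC}$ fires early. The view-1 \pcver instance then outputs its low within $3\delta$ (\Cref{thm:pc:complexity}, \Cref{thm:pcstr:complexity}). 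Each such low is immediately passed to ${\sf Commit}$. It remains to argue that this low already covers every honest coordinate, i.e.\ that $v^{\sf low}_{i,s}=v^{\sf high}_s$ at every honest party $i$.

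The key step is an amortized charging argument on \emph{short lows}. Call slot $s$ \emph{slow} if it is non-censored but some honest $i$ has $v^{\sf low}_{i,s}\prec v^{\sf high}_s$. In that case $i$ sets $e_{i,s}$ and attaches it to $P_{i,s+1}$. By \Cref{lem:smr:proposal-integrity} and honest-proposal consistency, we then argue that the slot-$(s+1)$ agreed high contains $P_{i,s+1}$. This holds unless slot $s+1$ is itself censored, and censored slots number at most $f$ by \Cref{thm:smr:2}. So ${\sf CommittedLows}$ extracts this evidence, and by \Cref{thm:pc:accountable} its blame $b={\sf Blame}^{\sf SPC}_{\id_s}(v^{\sf low}_{i,s},\pi^{\sf low}_{i,s})$ is a Byzantine proposer. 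The ranking update then demotes $b$ behind every honest proposer; this holds whether $b$ lands in $D_s$ or equals $b_{s+1}$.

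We then show that each Byzantine proposer can be charged at most once. Once $b$ is behind all honest proposers, any low truncated at $b$'s coordinate already contains all honest proposals. Such a truncation may still make the low shorter than the high, but it does not delay the commit of honest inputs. Hence we should define "slow" relative to honest inputs only: a slot is slow if some honest low omits an honest proposal, and every such slot charges a distinct Byzantine proposer. Later demotions move only Byzantine parties and never move an honest party behind a demoted one, so the relative order of honest and demoted parties is preserved. This bounds the number of slow slots by $f$ plus $O(f)$ boundary slots whose evidence is lost to censorship. Since this count is finite, the $\limsup$ average equals $\tau_{\sf prop}+3\delta$.

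The main obstacle is this charging step: making sure that evidence in slot $s+1$ is not lost, and that both demotion steps keep each charged Byzantine proposer behind all honest proposers permanently. For the lost-evidence case, I would first try to absorb it into the $f$ censored slots, charging each lost piece of evidence to the adjacent censored slot. The complexity bounds follow directly from \Cref{thm:pcstr:complexity}. We add one all-to-all proposal round per slot, costing $O(n^2)$ messages and $O(cn^2+(\hashsize+\sigsize)n^3)$ communication, since the evidence proofs have size $O((\hashsize+\sigsize)n)$. The good case uses two views and the worst case $O(n)$ views.
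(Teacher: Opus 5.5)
Your proposal follows essentially the same route as the paper's proof. You bound the latency of every non-slow, non-censored slot by proposal collection plus the three view-1 \pcver rounds. You charge each slow slot (defined relative to honest proposals) either to the first effective demotion of the Byzantine proposer blamed by the carried low evidence, or to the censored carrier slot that lost that evidence. You then add the proposal-round cost to \Cref{thm:pcstr:complexity}.

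There is one counting slip. Evidence from slot $s$ only changes the ranking used in slot $s+2$, so the carrier slot $s+1$ can also be slow because of the same Byzantine proposer. Each successful blame therefore accounts for up to two slow slots, as the paper notes, not ``at most once'' or ``a distinct Byzantine proposer'' per slot. This only changes the constant in your $O(f)$ bound and leaves the amortized conclusion intact.
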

\begin{proof}
\emph{Amortized latency.}
Call a non-censored slot \emph{slow} if some honest proposal is absent
from an honest party's low and is therefore committed only with the high.
By \Cref{lem:smr:proposal-integrity}, that low is truncated.
Its APC proof blames a Byzantine proposer, and the honest party carries
the evidence in its next proposal.

If the next slot is non-censored, its agreed high includes every honest
proposal object and therefore commits this evidence.
The first ranking-update step demotes the blamed proposer.
The source slot and its carrier can both be slow before this update takes
effect, so each successful Byzantine blame accounts for at most two slow
non-censored slots.
If the carrier slot is censored, charge the lost evidence to that censored
slot instead.
By \Cref{thm:smr:2}, there are at most $f$ censored slots, and at most
$f$ first effective demotions of Byzantine proposers.
Hence only $O(f)$ non-censored slots are slow.

Every remaining non-censored slot commits all honest proposals when ASPC
outputs its view-1 low, after proposal collection plus three VPC rounds.
Finite exceptional slots vanish in the amortized limit, giving
$\tau_{\sf prop}+3\delta$.
Substituting the three proposal-collection cases yields the stated bounds.

\emph{Complexity.}
Proposal dissemination costs $O(n^2)$ messages and $O(cn^2)$ payload
communication.
Each proposal carries at most one optimized low proof of size
$O((\hashsize+\sigsize)n)$, contributing
$O((\hashsize+\sigsize)n^3)$ communication.
The ASPC instance costs $O(n^2)$ messages and
$O((\hashsize+\sigsize)n^3)$ communication in the good case, and
$O(n^3)$ messages and
$O((\hashsize+\sigsize)n^4)$ communication in the worst case
(\Cref{thm:pcstr:complexity}).
Adding these terms proves the claim.
\end{proof}

\subsection{Improving Slot Latency}
\label{sec:smr:slot-latency}

Sequential slots advance only after ASPC outputs the agreed high.
Combining the proposal-collection window with
\Cref{thm:pcstr:complexity} gives the conservative slot-latency bound
\[
\tau_{\sf prop}
+2(f+1)\Delta+(5f+7)\delta.
\]
In a failure-free synchronized execution, proposal collection takes
$\delta$ and ASPC takes $7\delta$, giving slot latency $8\delta$.

\paragraph{Demoting parties that delay internal views.}
The $f$-dependent term can also be reduced on an amortized basis by
extending the outer accountability mechanism to later ASPC views.
Suppose a stabilized view $w>1$ satisfies honest-proposal consistency and
its VPC instance uses the concrete accountable construction.
A publicly verifiable truncated low $(v,\pi)$ then blames
\[
\rank^{\sf SPC}_{w}[|v|+1],
\]
which is Byzantine by APC Accountability; in particular, a zero-length
low blames the first party in that view's cyclic ranking.
Honest parties could carry such internal-view evidence into subsequent
slots and apply its demotion only after an agreed high includes it, using
the same canonical ordering and deduplication as for view-1 low evidence.
Because every internal ranking is a deterministic shift of the common
outer ranking, all honest parties would apply the same update.

With this extension, each Byzantine party can occupy a latency-critical
position only until its first committed blame.
Once the relevant Byzantine parties have been moved behind the honest
parties, view~2 starts with honest parties and the corresponding future
slots avoid the $f$ extra views.
This optimization does not improve the worst-case latency of the slots
that first expose those parties, and it is not part of
Algorithm~\ref{alg:smr} or the bounds proved above.

As in prior pipelined BFT protocols
\cite{arun2024shoal++,tonkikh2025raptr}, independent SMR instances can be
staggered to hide this sequential high-output delay; this optimization is
orthogonal to Algorithm~\ref{alg:smr}.

  \section{Communication-Optimized \pc Protocol}
\label{sec:pc:optimized}

This section gives a compact implementation of the threshold-based
\pc protocol from \Cref{alg:pc}.
It preserves the three-round structure and also provides the
verifiability and accountability properties of
\Cref{sec:pcver,sec:accountability}.

\subsection{Cryptographic Tools and Notation}
\label{sec:pc:optimized:notations}

Let
\[
q:=n-f,
\qquad
t:=\left\lfloor q/2\right\rfloor+1.
\]
We instantiate a batchable $t$-out-of-$n$ threshold signature with
threshold BLS~\cite{boldyreva2003threshold}, established
by a fixed DKG for the committee.
Ordinary vote signatures use registered BLS keys with proof of possession,
so same-message multisignatures are rogue-key safe.
Party $i$ can produce a publicly verifiable share
$\mathsf{ShareSign}_i(m)$; any $t$ valid shares on the same message combine
into a constant-size threshold signature $\mathsf{TSig}(m)$.
Threshold signatures on distinct messages can themselves be batched using
standard BLS aggregation~\cite{boneh2003aggregate} into one constant-size,
rogue-key-safe aggregate.
For repeated messages, signatures are first combined by message and the
resulting distinct-message signatures are batched.
The same indexed batching convention applies to verifiable threshold
shares in stopping proofs.
Verification binds every signature/share to a registered committee index
and its complete message.
All messages bind the protocol instance, phase, and coordinate.
For complexity, party identifiers and logical lengths are machine words;
their explicit $O(n)$ metadata is included below.

The threshold-signature assumption avoids carrying one signer bitmap per
coordinate.
Without it, the construction remains correct using ordinary aggregate
signatures, but a QC may require $O(nL)$ signer metadata.

\subsection{Protocol Description}
\label{sec:pc:optimized:protocol}

\paragraph{First-round vote.}
Party $i$ signs every coordinate of its input:
\begin{equation}\label{eq:pc:vote1}
\voteone_i=
\left(
v_i^{\sf in},
\{\mathsf{ShareSign}_i(\id,1,v_i^{\sf in}[1]),\ldots,
  \mathsf{ShareSign}_i(\id,L,v_i^{\sf in}[L])\}
\right).
\end{equation}
Its size is $O(\const L+\sigsize L)$.

\paragraph{First-round certificate.}
After receiving \voteone messages from a set $S$ of $q$ distinct parties,
a party computes $\x={\sf QC1Certify}(S)$ as in \Cref{alg:pc}.
For every $k\le|\x|$, it combines $t$ shares on
$(\id,k,\x[k])$ into a threshold signature $\tau_k$.
The signatures $\tau_1,\ldots,\tau_{|\x|}$ are batched into
$\mathcal{T}_{\x}$.

If $|\x|<L$, let $k:=|\x|+1$ and define
\[
\begin{aligned}
V_k &:= \{(i,v_i^{\sf in}[k]):i\in S\},\\
\Sigma_k^{\sf stop}
    &:= \mathsf{Aggregate}
        \{\mathsf{ShareSign}_i(\id,k,v_i^{\sf in}[k]):i\in S\},\\
N_k &:= (V_k,\Sigma_k^{\sf stop}).
\end{aligned}
\]
The aggregate in $N_k$ batch-verifies all listed shares.
They prove that no value at the cut coordinate has $t$ supporters.
If $|\x|=L$, set $N_{|\x|+1}:=\bot$.
Thus
\begin{equation}\label{eq:pc:qc1}
\qcone=(\x,\mathcal{T}_{\x},N_{|\x|+1}).
\end{equation}

To verify \qcone, a party checks the batched threshold signatures for all
coordinates of $\x$.
If $\x$ is short, it also checks that $N_{|\x|+1}$ contains valid shares
from $q$ distinct parties and that no value occurs $t$ times.
For public verification, support signatures for different coordinates
need not use the same $q$-party set as the stopping proof; the analysis
below relies only on threshold support for included coordinates and a
valid $q$-party no-support proof at the first excluded coordinate.
The vector $\x$ costs $O(\const L)$, its batched threshold signature costs
$O(\sigsize)$, and the optional stopping proof lists $q=O(n)$ indexed
values plus one aggregate signature, costing
$O(\const n+n+\sigsize)$.
Hence
\[
|\qcone|=O(\const(L+n)+n+\sigsize).
\]

\paragraph{Second-round vote and certificate.}
As in the basic protocol, each party signs every prefix of its certified
$\x$ and broadcasts
\begin{equation}\label{eq:pc:vote2}
\votetwo_i=
(\x,\sig_i(\prefix(\id,\x)),\qcone).
\end{equation}
Here
\[
\sig_i(\prefix(\id,\x))
:=
\{\sig_i(\id,k,\x[1:k]):0\le k\le|\x|\}
\]
is an ordered bundle of independently verifiable, domain-separated prefix
signatures, including the empty prefix for $k=0$.
From $q$ valid \votetwo messages, let
\[
\xmcp:=\mcp{\{\x\in\qctwo\}}.
\]
Let $S_2$ be the $q$ distinct \votetwo senders.
The compact certificate contains their aggregate prefix signature
$\Sigma_2$ on $\xmcp$ and a boundary proof $B_2$ showing that the common
prefix cannot extend one more coordinate.
Every vote carried in $B_2$ must be a valid authenticated \votetwo from a
sender in $S_2$.
If some $\x$ equals $\xmcp$, $B_2$ contains that vote and its valid
stopping \qcone.
Otherwise, $B_2$ contains two such votes whose certified $\x$ values both
extend $\xmcp$ and differ at coordinate $|\xmcp|+1$.
For a full-length $\xmcp$, set $B_2:=\bot$.
Hence
\begin{equation}\label{eq:pc:qc2}
\qctwo=(\xmcp,S_2,\Sigma_2,B_2).
\end{equation}
A verifier checks that $|S_2|=q$, that $\Sigma_2$ contains a valid prefix
signature by every member of $S_2$, and that the sender-bound proof $B_2$
proves maximality of $\xmcp$.
Each second-round vote contains one length-$L$ vector, an $O(\sigsize L)$
prefix-signature bundle, and one \qcone, for total size
$O(\const(L+n)+n+\sigsize L)$.
The remaining QC2 fields---$\xmcp$, $S_2$, and $\Sigma_2$---cost
$O(\const L+n+\sigsize)$, while $B_2$ contains at most two complete
second-round votes.
Therefore
\[
|\qctwo|=O(\const(L+n)+n+\sigsize L).
\]

\paragraph{Third-round vote and output certificate.}
Parties broadcast
\begin{equation}\label{eq:pc:vote3}
\votethree_i=(\xmcp,\sig_i(\id,\xmcp),\qctwo).
\end{equation}
The $\xmcp$ values in any valid third-round quorum are mutually
consistent.
Let $\xmcpmcp$ and $\xmcpmce$ be the shortest and longest such values.
Let $S_3$ be the $q$ distinct \votethree senders.
The output proof retains the complete authenticated votes
$W_{\min}$ and $W_{\max}$ from members of $S_3$ that certify the shortest
and longest values, together with a batched aggregate of all third-round
signatures:
\begin{equation}\label{eq:pc:qc3}
\qcthree=
(\xmcpmcp,W_{\min},
 \xmcpmce,W_{\max},S_3,\{\ell_i\}_{i\in S_3},\Sigma_3).
\end{equation}
Here $\ell_i$ is the logical length of party $i$'s signed $\xmcp$.
The verifier checks that $|S_3|=q$, reconstructs every signed value as the
length-$\ell_i$ prefix of $\xmcpmce$, verifies $\Sigma_3$, checks that
$\xmcpmcp$ and $\xmcpmce$ are the minimum- and maximum-length values, and
verifies that $W_{\min}$ and $W_{\max}$ are valid votes from the
corresponding members of $S_3$ with valid boundary \qctwo certificates.
Each third-round vote contains $\xmcp$, one signature, and one \qctwo, so
its size is $O(\const(L+n)+n+\sigsize L)$.
The two endpoint vectors, signer set, logical lengths, and aggregate
signature outside the boundary votes cost only
$O(\const L+n+\sigsize)$.
Since \qcthree retains only the two complete boundary votes
$W_{\min},W_{\max}$,
\[
|\qcthree|=O(\const(L+n)+n+\sigsize L).
\]
The protocol outputs
\[
v^{\sf low}:=\xmcpmcp,
\qquad
v^{\sf high}:=\xmcpmce,
\qquad
\pi^{\sf low}=\pi^{\sf high}:=\qcthree.
\]

\subsection{Analysis}
\label{sec:pc:optimized:analysis}

\begin{lemma}\label{lem:pc:optimized:qc1}
Every valid optimized \qcone certifies an $\x$ such that:
\begin{enumerate}
    \item every coordinate $k\le|\x|$ has support from at least $t$
    distinct \voteone senders; and
    \item if $|\x|<L$, no value at coordinate $|\x|+1$ has $t$
    supporters in the certified stopping quorum.
\end{enumerate}
\end{lemma}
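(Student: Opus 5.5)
The proof reads both claims directly off the verification procedure for an optimized \qcone, $(\x,\mathcal{T}_{\x},N_{|\x|+1})$, using unforgeability of the threshold-BLS and aggregate signatures. Verification binds every share and threshold signature to a registered committee index and to the full domain-separated message $(\id,k,\cdot)$. So a valid signature on $(\id,k,z)$ can only come from a committee member who actually signed that coordinate--value pair in this instance. The only messages honest parties share-sign with this tag are the per-coordinate shares inside their own \voteone, as in \eqref{eq:pc:vote1}. Hence every such share is part of a \voteone, and a Byzantine party's share still counts as its one \voteone value at that coordinate.

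\emph{Claim 1.} Fix $k\le|\x|$. Verification checks, inside the batch $\mathcal{T}_{\x}$, a valid threshold signature $\tau_k$ on $(\id,k,\x[k])$. I will invoke the robustness/unforgeability of $t$-out-of-$n$ threshold BLS under the fixed DKG. It says that no adversary controlling at most $f<t$ key shares can produce a valid $\mathsf{TSig}(m)$ unless at least $t-f$ honest parties share-signed $m$. This slightly weaker form is the wrinkle in this step: threshold signatures do not reveal which parties signed. I would therefore phrase ``support'' as the existence of $t$ distinct committee members whose shares on $(\id,k,\x[k])$ were combined. Concretely, I would take the standard property that a valid threshold signature implies $t$ distinct parties' shares on that message were used; the adversary supplies at most $f$ of them, and the rest are honest \voteone shares. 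Batching with BLS aggregation does not weaken this, since the batch is rogue-key safe and verifies each message-bound threshold signature.

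\emph{Claim 2.} Suppose $|\x|<L$. Then verification requires $N_{|\x|+1}=(V_k,\Sigma_k^{\sf stop})$ with $k=|\x|+1$, which must list $q$ distinct indexed pairs $(i,v_i)$. The aggregate $\Sigma_k^{\sf stop}$ must batch-verify a share by each listed $i$ on $(\id,k,v_i)$. Aggregate unforgeability, using registered keys with proof of possession, guarantees that each listed party actually signed $(\id,k,v_i)$, so $V_k$ faithfully records $q$ distinct \voteone values at coordinate $k$. The verifier explicitly checks that no value occurs $t$ times in $V_k$. Therefore no value at coordinate $|\x|+1$ has $t$ supporters in the certified stopping quorum. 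By \Cref{lem:pc:qc1-threshold}, such a supported value would anyway be unique.

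The main obstacle I expect is the precise cryptographic reduction in Claim 1: formally justifying that a threshold signature implies $t$ distinct signers, not merely that $t-f$ honest parties signed. The rest is routine unpacking of the verifier's checks.
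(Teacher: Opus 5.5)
Your proposal is correct and follows the paper's own proof. The paper likewise gets the first property from unforgeability of the threshold signatures in $\mathcal{T}_{\x}$, and the second from verifying the $q$ distinct shares in $N_{|\x|+1}$ and counting their signed values. The wrinkle you flag in Claim~1 is harmless: unforgeability guarantees at least $t-f$ honest share-signers, and counting the at most $f$ corrupted key holders as supporters gives exactly the $t$-supporter reading the paper relies on.
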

\begin{proof}
The first property follows from unforgeability of the threshold signatures
in $\mathcal{T}_{\x}$.
The second follows by verifying all $q$ distinct shares in
$N_{|\x|+1}$ and counting their signed values.
\end{proof}

\begin{lemma}\label{lem:pc:optimized:validity}
Let $v_{\cH}=\mcp{\{v_h^{\sf in}\}_{h\in\cH}}$.
Then $v_{\cH}\preceq\x$ for every valid optimized \qcone.
\end{lemma}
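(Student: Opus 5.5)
The plan mirrors the proof of \Cref{lem:pc:2}, but obtains the threshold-support facts from the certificate content through \Cref{lem:pc:optimized:qc1} rather than from an explicit list of $q$ votes. Fix a valid optimized \qcone certifying $\x$, and set $\ell:=|v_{\cH}|$. We want $\x[k]=v_{\cH}[k]$ for every $k\le\min(\ell,|\x|)$, and also $|\x|\ge\ell$.

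\emph{Agreement on the covered coordinates.} Take $k\le\min(\ell,|\x|)$. By part~1 of \Cref{lem:pc:optimized:qc1}, the value $\x[k]$ carries valid shares on $(\id,k,\x[k])$ from $t$ distinct parties. By \Cref{lem:pc:qc1-threshold} and $n\ge 3f+1$, we have $t>f$. Hence at least one of these signers is honest. An honest party produces a coordinate-$k$ share only on its own input value $v_h^{\sf in}[k]$. For $k\le\ell$, every honest input equals $v_{\cH}[k]$ at coordinate $k$. Therefore $\x[k]=v_{\cH}[k]$.

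\emph{Length.} Suppose for contradiction that $|\x|<\ell$, and let $k:=|\x|+1\le\ell$. Part~2 of \Cref{lem:pc:optimized:qc1} supplies a stopping quorum of $q$ distinct valid shares at coordinate $k$ in which no value has $t$ supporters. By \Cref{lem:pc:qc1-threshold}, any $q$ distinct signers include at least $n-2f\ge t$ honest parties. Each honest signer's share at coordinate $k$ is on $v_{\cH}[k]$, because domain separation binds the share to $(\id,k,\cdot)$ and unforgeability prevents a fake honest share. So $v_{\cH}[k]$ has at least $t$ supporters in the stopping quorum, which is a contradiction. It follows that $|\x|\ge\ell$, and combined with the previous paragraph, $v_{\cH}\preceq\x$.

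\emph{Main obstacle.} The delicate step is the length argument. The verifier of the optimized \qcone does not check that the support sets for the included coordinates and the stopping set are the same $q$-party set. Consequently, the contradiction cannot rely on a single shared quorum, as it does in \Cref{lem:pc:2}. It must come purely from the stopping proof's own $q$ distinct signers, together with the facts that shares are bound to $(\id,k,\text{value})$ and that the counting in $N_k$ is over distinct registered indices. I will make this explicit, so that a Byzantine party cannot inflate or deflate the counts for the value $v_{\cH}[k]$.
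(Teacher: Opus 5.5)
Your proof is correct and rests on the same two facts as the paper's proof. First, a threshold signature needs $t>f$ shares and so has an honest supporter, which means no value other than $v_{\cH}[k]$ can be certified at a coordinate $k\le|v_{\cH}|$; second, any $q$-party stopping proof contains $n-2f\ge t$ honest shares on $v_{\cH}[k]$, so it cannot cut there. The paper phrases this as an induction on $k$, while you split it into a per-coordinate agreement step and a separate length step, which is only a difference in presentation.
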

\begin{proof}
Proceed by induction on $k\le|v_{\cH}|$.
All honest parties sign the same value $v_{\cH}[k]$ at coordinate $k$.
Any stopping quorum of size $q$ contains at least
$n-2f\ge t$ honest parties, so it cannot certify no support at $k$.
Moreover, no different value can obtain a threshold signature because
$t>f$.
Thus, after appending $v_{\cH}[1:k-1]$, every valid $\x$ also appends
$v_{\cH}[k]$.
\end{proof}

\begin{lemma}[Optimized availability]
\label{lem:pc:optimized:availability}
Every coordinate of a publicly verifiable optimized low or high output
appears at the same coordinate in an honest party's input.
\end{lemma}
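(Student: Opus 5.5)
The plan is to trace any coordinate of a publicly verifiable optimized output back through the three certificate layers, mirroring the proof of \Cref{lem:availability}. The difference is that each step now relies on the compact verification conditions of \Cref{sec:pc:optimized:protocol} rather than on full nested QCs.

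Fix a publicly accepted output $v$ (low or high) with proof $\qcthree$, and a coordinate $k\le|v|$.

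\emph{Third round.} The verifier checks that $\xmcpmcp$ and $\xmcpmce$ are endpoint values carried by $W_{\min}$ and $W_{\max}$. Each of these is a valid authenticated \votethree from a member of $S_3$, and its attached \qctwo is verified. So $v$ equals the $\xmcp$ certified by some valid optimized \qctwo.

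\emph{Second round.} That \qctwo carries an aggregate prefix signature $\Sigma_2$ on $\xmcp$ from $q$ distinct senders in $S_2$. At least $q-f\ge 1$ of these senders are honest, since $n\ge 3f+1$ gives $n-2f\ge f+1$. An honest sender signs $\sig_i(\id,j,\x[1:j])$ only for prefixes of the $\x$ it certified from its own valid \qcone. By unforgeability, and because the prefix signatures are domain-separated by $(\id,j)$, $\xmcp$ is a prefix of some $\x$ that an honest party computed from a valid optimized \qcone. Hence $v[k]=\x[k]$ for that $\x$.

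\emph{First round.} By \Cref{lem:pc:optimized:qc1}(1), coordinate $k$ of $\x$ has a threshold signature $\tau_k$ on $(\id,k,\x[k])$ from $t$ distinct share signers. Since $t=\lfloor(n-f)/2\rfloor+1>f$, at least one signer is honest. An honest party issues $\mathsf{ShareSign}_i(\id,k,\cdot)$ only on its own input element $v_i^{\sf in}[k]$. Therefore $v_i^{\sf in}[k]=\x[k]=v[k]$, which is the claim.

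The main obstacle is the second step. In the optimized certificate, the verifier no longer sees the $q$ full \votetwo messages. It sees only an aggregate over $S_2$ and a boundary proof, so I cannot simply read $\x$ values off the certificate. Two points need care. First, I must argue that an honest member of $S_2$ exists whose signed prefix is exactly $\xmcp$, and that the corresponding \qcone is valid even though it is not included in the certificate; this follows because that honest party verified it before voting. Second, this argument only applies to verified outputs. A Byzantine signer's prefix signature alone carries no such guarantee, which is why the argument must go through an honest signer. An alternative that avoids reasoning about honest behavior would use the boundary vote in $B_2$, whose \qcone is included. However, that vote covers only the case where some $\x$ equals $\xmcp$, so I would prefer the honest-signer route.
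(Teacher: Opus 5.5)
Your proof is correct and is essentially the paper's argument. The paper traces the same three links: $W_{\min}$ and $W_{\max}$ carry valid \qctwo certificates for the low and high outputs; an honest member of the $q$ prefix signers behind $\Sigma_2$ signs only prefixes of an $\x$ it derived from a valid \qcone; and each coordinate of that $\x$ has a threshold signature with $t>f$ shares, hence an honest input supporter. You traverse these links top-down, while the paper states them bottom-up.

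One small correction to your closing aside, which does not affect the main argument. In the two-vote case, $B_2$ also carries complete \votetwo messages, each with its \qcone, whose $\x$ values extend $\xmcp$. So the boundary-vote route fails only when $\xmcp$ is full length and $B_2=\bot$.
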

\begin{proof}
Every coordinate of a valid $\x$ has a threshold signature with
$t>f$ shares, at least one of which was produced by an honest party with
that input value.
A valid \qctwo has prefix signatures from $q$ parties, including an honest
party that signs only prefixes of an $\x$ obtained from a valid \qcone.
Thus every coordinate of its $\xmcp$ traces to a threshold-supported
coordinate of an honest party's certified $\x$.
Finally, $W_{\min}$ and $W_{\max}$ in a valid \qcthree contain valid
\qctwo certificates for the low and high values, respectively.
\end{proof}

\begin{lemma}[Optimized accountable truncation]
\label{lem:pc:optimized:accountable}
In an execution satisfying honest-proposal consistency, if a publicly
verifiable optimized low or high output $v$ has $|v|<L$, then
$\rank_{\id}[|v|+1]$ is Byzantine.
\end{lemma}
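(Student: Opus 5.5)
The plan mirrors the proof of \Cref{thm:pc:accountable}, now relying on the optimized certificate structure. Let $k:=|v|+1\le L$ and suppose, for contradiction, that $p:=\rank_{\id}[k]$ is honest. By honest-proposal consistency, every honest input has the same value $z$ at coordinate $k$. The aim is to show that every certified object in the chain leading to $v$ has length at least $k$, which contradicts $|v|=k-1$.

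\textbf{Step 1 (first round).} I will show that every valid optimized \qcone whose $\x$ satisfies $|\x|\ge k-1$ in fact has $|\x|\ge k$ and $\x[k]=z$. If $|\x|=k-1$, then by \Cref{lem:pc:optimized:qc1}(2) the stopping proof $N_k$ contains valid shares from $q$ distinct parties on coordinate $k$ with no value occurring $t$ times. Those $q$ parties include at least $n-2f\ge t$ honest parties (\Cref{lem:pc:qc1-threshold}), and each of them signed only its true input value $z$, so $z$ has $t$ supporters, which is a contradiction. If $|\x|\ge k$, then $\x[k]$ carries a threshold signature with $t>f$ shares, so some honest party signed $\x[k]$, and therefore $\x[k]=z$. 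This is the same argument as in \Cref{lem:pc:optimized:validity}, applied to a single coordinate.

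\textbf{Step 2 (second round).} Next I trace $v$ back to a \qctwo. In the optimized \qcthree, the endpoint votes $W_{\min}$ and $W_{\max}$ carry valid boundary \qctwo certificates whose $\xmcp$ equals the low or high value, respectively. Hence $v=\xmcp$ for some valid \qctwo with boundary proof $B_2$, and $|\xmcp|=k-1<L$ means $B_2\neq\bot$. There are two cases.

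In the first case, $B_2$ contains a vote whose $\x$ equals $\xmcp$ together with its valid stopping \qcone. That \qcone certifies an $\x$ of length exactly $k-1$, which contradicts Step 1.

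In the second case, $B_2$ contains two authenticated \votetwo messages whose certified $\x$ values both extend $\xmcp$, so both have length at least $k-1$, and differ at coordinate $k$. By Step 1, each of them has coordinate $k$ equal to $z$, a contradiction.

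\textbf{Main obstacle.} The delicate point is ensuring that the values appearing in $B_2$ are genuinely \qcone-certified. Specifically, the $\x$ values must come with valid \qcone certificates (included in each \votetwo) rather than be arbitrary prefixes signed by Byzantine members of $S_2$. I will rely on the verifier's requirement that each vote in $B_2$ be a complete valid \votetwo, including its nested \qcone, so that Step 1 applies to these $\x$ values. I also need soundness of the \qcthree check that $W_{\min}$ and $W_{\max}$ certify exactly the reported low and high values. With this in place, neither boundary case is possible, so $p$ must be Byzantine.
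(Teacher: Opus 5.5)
Your proof is correct and follows the same route as the paper's. Under honest-proposal consistency, no valid stopping proof at coordinate $k$ can exist, because its $q$ signers include $n-2f\ge t$ honest parties who all signed $z$. Since $t>f$, no value other than $z$ can be threshold-certified at $k$. Hence the boundary proof $B_2$ inside the \qctwo carried by $W_{\min}$ or $W_{\max}$ can be neither a stopping \qcone nor a pair of conflicting certified values. Your Step~1 simply packages these two observations into a single per-\qcone claim before the case split on $B_2$.
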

\begin{proof}
Let $k=|v|+1$ and suppose $p_k=\rank_{\id}[k]$ is honest.
All honest parties sign the same value $z$ at coordinate $k$.
Hence no valid stopping proof can show that a $q$-party quorum has no
threshold-supported value at $k$.
Moreover, any threshold-supported value at $k$ has an honest supporter
because $t>f$, so two distinct threshold-supported values would contradict
honest-proposal consistency.

The authenticated boundary vote $W_{\min}$ or $W_{\max}$ for $v$ contains
a valid \qctwo with a boundary proof at $k$.
That proof must be either a stopping proof or two conflicting
threshold-certified values.
Both cases are impossible by the preceding argument, a contradiction.
\end{proof}

\paragraph{Public verification.}
Let ${\sf ValidQC3}^{\sf opt}(\id,\qcthree)$ check all optimized QC1/QC2
conditions above, the $q$ distinct third-round signers and their recorded
lengths, the aggregate $\Sigma_3$, and the authenticated boundary votes
$W_{\min},W_{\max}$.
Define
\[
\begin{aligned}
\mathcal F_{\id}^{\sf low}(v,\qcthree)
&\Longleftrightarrow
{\sf ValidQC3}^{\sf opt}(\id,\qcthree)
\wedge v=\qcthree.\xmcpmcp,\\
\mathcal F_{\id}^{\sf high}(v,\qcthree)
&\Longleftrightarrow
{\sf ValidQC3}^{\sf opt}(\id,\qcthree)
\wedge v=\qcthree.\xmcpmce.
\end{aligned}
\]

\begin{theorem}\label{thm:pc:optimized-correct}
The optimized protocol solves \pc and \pcver.
For executions satisfying honest-proposal consistency, it also solves
\pcacc.
\end{theorem}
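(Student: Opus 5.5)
The plan is to show that every publicly verifiable optimized \qcthree carries enough authenticated structure for the proofs of \Cref{thm:pc:1}, \Cref{thm:pc:verifiable}, and \Cref{thm:pc:accountable} to go through unchanged. Termination is routine. Every honest party eventually receives $q$ honest votes in each round, so it can build the compact certificates. The needed $t$ shares on each coordinate of $\x$ exist, since $\x$ was computed by ${\sf QC1Certify}$ on those very votes. The stopping proof consists of the $q$ shares already received. The boundary proof $B_2$ exists because $\xmcp$ is a maximum common prefix: either some vote equals it, or two votes diverge at coordinate $|\xmcp|+1$. Proof Completeness follows because honest outputs are read off certificates built exactly as the verifier expects. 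Validity (and the validity half of Proof Soundness) comes from \Cref{lem:pc:optimized:validity}. That lemma gives $v_{\cH}\preceq\x$ for every valid \qcone. A valid \qctwo carries prefix signatures from $q$ senders, including at least one honest sender. That honest sender signs only prefixes of its own $\x$, so $v_{\cH}\preceq\xmcp$ provided the boundary proof prevents $\xmcp$ from being shorter than the honest common prefix. The low is the shortest such $\xmcp$, which inherits this bound.

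Prefix and the first half of Proof Soundness follow the pattern of \Cref{lem:pc:1}. First show that any two valid \qctwo values $\xmcp,\xmcp'$ are consistent. Their signer sets $S_2,S_2'$ intersect in an honest party $h$. Since $h$ signed prefixes of a single $\x_h$, namely its unique \votetwo, both $\xmcp$ and $\xmcp'$ are prefixes of $\x_h$. Next, for two valid \qcthree's, the sets $S_3$ and $S_3'$ share an honest signer whose single $\xmcp$ is signed in both. From the verifier's checks, the low of the first certificate is at most that value and the high of the second is at least it. The values signed in one certificate are reconstructed as prefixes of its $\xmcpmce$, so all of them lie on one chain. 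Hence $v^{\sf low}\preceq v^{\sf high}$ across any pair of accepted proofs.

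The step I expect to be the main obstacle is the Validity link at round two. In the basic protocol, $\xmcp$ is recomputed from the full quorum. In the compact form, one must argue that the boundary proof $B_2$ forces $\xmcp$ to be genuinely maximal, so that a Byzantine aggregator cannot present an artificially short certified prefix that cuts below $v_{\cH}$. I would handle this by cases on $B_2$. If $B_2$ shows two sender-bound votes whose $\x$ values diverge at coordinate $|\xmcp|+1$ with $|\xmcp|<|v_{\cH}|$, both values are threshold-certified at that coordinate. But by \Cref{lem:pc:optimized:validity} both $\x$ values extend $v_{\cH}$, so they agree there, which is a contradiction. If $B_2$ instead exhibits a vote with $\x=\xmcp$ and a stopping \qcone, then \Cref{lem:pc:optimized:validity} again forces $|\x|\ge|v_{\cH}|$. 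Finally, Accountability under honest-proposal consistency is exactly \Cref{lem:pc:optimized:accountable}, and Availability is \Cref{lem:pc:optimized:availability}. Combining these with the public predicates defined above gives \pc, \pcver, and \pcacc.
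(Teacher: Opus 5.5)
Your proposal is correct and follows essentially the same route as the paper: Termination from finiteness of the certification steps, Prefix and Proof Soundness via an honest signer common to the two $S_2$ (resp.\ $S_3$) sets, and Availability and Accountability by citing \Cref{lem:pc:optimized:availability,lem:pc:optimized:accountable}. The one place you go further is Validity, where the paper only cites \Cref{lem:pc:optimized:validity}; your case analysis on the boundary proof $B_2$, which carries that lemma from $\x$ up to $\xmcp$ and, through the retained vote $W_{\min}$, to the low output, makes that step explicit.
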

\begin{proof}
Termination follows because every honest party eventually receives $q$
votes per round and all certification procedures are finite.
Validity follows from \Cref{lem:pc:optimized:validity}.

The QC2 consistency and QC3 Prefix-property arguments are identical to
\Cref{lem:pc:1,thm:pc:1}: two quorums intersect in an honest signer that
sends only one vote per phase.
In a valid \qctwo, $\Sigma_2$ proves that $q$ parties signed the claimed
common prefix, while the sender-bound proof $B_2$ proves that it cannot be
extended.
The corresponding $S_3$, length, and boundary-vote checks establish the
shortest and longest values in a valid \qcthree.
This proves \pc.

Every honest output certificate passes ${\sf ValidQC3}^{\sf opt}$, giving
VPC Proof Completeness.
For Proof Soundness, consider any accepted low certificate \qcthree and
high certificate $\qcthree'$ for the same instance.
Their signer sets intersect in an honest party's unique \votethree value
$\xmcp$.
The verified minimum and maximum lengths imply
\[
\qcthree.\xmcpmcp
\preceq\xmcp
\preceq\qcthree'.\xmcpmce.
\]
Validity of every accepted low follows from
\Cref{lem:pc:optimized:validity}.
Thus the predicates implement \pcver.

Availability follows from
\Cref{lem:pc:optimized:availability}, and accountability follows from
\Cref{lem:pc:optimized:accountable}.
\end{proof}

\begin{theorem}\label{thm:pc:optimized:complexity}
The optimized protocol has round complexity $3$, message complexity
$O(n^2)$, and communication complexity
\[
O(\const n^2L+\const n^3+\sigsize n^2L).
\]
\end{theorem}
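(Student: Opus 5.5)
The count goes directly over the three phases of the optimized protocol described in \Cref{sec:pc:optimized:protocol}. Round complexity and message complexity are immediate. Each party sends exactly one \voteone, one \votetwo, and one \votethree, each as an all-to-all broadcast. Every phase waits only for $q=n-f$ messages, so the protocol finishes in $3$ rounds with $3n(n-1)=O(n^2)$ messages, exactly as in \Cref{thm:pc:complexity}. The real content is the communication bound. For that, I bound the size of the message each party broadcasts in each phase and multiply by $n^2$.

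\textbf{Per-phase message sizes.} I would take them in order.

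\emph{Phase 1.} By \eqref{eq:pc:vote1}, a \voteone consists of $L$ elements and $L$ signature shares, so its size is $O(\const L+\sigsize L)$.

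\emph{Phase 2.} A \votetwo \eqref{eq:pc:vote2} contains:
\begin{itemize}
\item the vector $\x$, of size $O(\const L)$;
\item the prefix-signature bundle, of size $O(\sigsize L)$;
\item one optimized \qcone.
\end{itemize}
The \qcone has size $O(\const(L+n)+n+\sigsize)$. This comes from the batched threshold signature $\mathcal T_{\x}$ being constant size, while the stopping proof lists $q$ indexed values and carries one aggregate. Altogether, a \votetwo has size $O(\const(L+n)+n+\sigsize L)$.

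\emph{Phase 3.} A \votethree \eqref{eq:pc:vote3} contains $\xmcp$, one signature, and one \qctwo. The main thing to verify is that \qctwo does not blow up recursively. Its fields $\xmcp,S_2,\Sigma_2$ cost $O(\const L+n+\sigsize)$. The boundary proof $B_2$ contains at most two complete \votetwo messages, each of the size just computed, so it adds only a constant factor. Hence $|\qctwo|=O(\const(L+n)+n+\sigsize L)$, and a \votethree has the same bound.

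\textbf{Summing up.} Every message has size $O(\const(L+n)+n+\sigsize L)$. With $O(n^2)$ messages in total, this gives
\[
O\bigl(n^2(\const L+\const n+n+\sigsize L)\bigr)=O(\const n^2L+\const n^3+\sigsize n^2L).
\]
The $n^3$ term from the identifier and length metadata is absorbed into $\const n^3$, since $\const\ge 1$ machine word. The output certificate \qcthree is computed locally and is not broadcast, so it adds nothing. Even if it were forwarded, its size equals that of \qctwo.

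\textbf{Main obstacle.} The step needing care is showing that nesting does not compound sizes, unlike in \Cref{thm:pc:complexity}. Concretely, $B_2$ (and $W_{\min},W_{\max}$ in \qcthree) embed only a constant number of lower-level votes. In turn, each embedded \qcone is constant-size apart from the $O(\const n+n)$ stopping list, thanks to threshold signatures replacing per-coordinate signer bitmaps. I would state this explicitly as a claim before multiplying by $n^2$.
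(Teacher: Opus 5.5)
Your proposal is correct and follows essentially the same route as the paper. Like the paper, you bound $|\voteone|$, then $|\qcone|$ and $|\votetwo|$, then observe that \qctwo embeds only two boundary votes, so $|\votethree|$ obeys the same $O(\const(L+n)+n+\sigsize L)$ bound; you then multiply by the $O(n^2)$ deliveries per round and absorb the $O(n^3)$ identifier and length metadata into $O(\const n^3)$ via the machine-word convention. One small wording slip in your final paragraph: an embedded \qcone is not constant-size apart from its stopping list, since it also carries the $O(\const L)$ vector \x, but your explicit bound for $|\qcone|$ already accounts for this.
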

\begin{proof}
The protocol has three all-to-all voting rounds.
Each party sends one message to every party per round, giving
$O(n^2)$ messages in total.

In round one,
\[
|\voteone|
=O(\const L+\sigsize L),
\]
because the vote contains $L$ values and $L$ threshold-signature shares.
Thus round-one communication is
\[
O(\const n^2L+\sigsize n^2L).
\]

By the size calculation above,
\[
|\qcone|=O(\const(L+n)+n+\sigsize).
\]
A \votetwo additionally contains the vector $\x$ and its
$O(\sigsize L)$ prefix-signature bundle, so
\[
|\votetwo|
=O(\const(L+n)+n+\sigsize L).
\]
Hence round two costs
\[
O(\const n^2L+\const n^3+n^3+\sigsize n^2L).
\]

Similarly, a \votethree contains $\xmcp$, one signature, and a \qctwo.
Since
\[
|\qctwo|=O(\const(L+n)+n+\sigsize L),
\]
round three has the same asymptotic cost as round two.
The two boundary votes in \qctwo and \qcthree contribute only constant
factors, and batching keeps aggregate signatures constant-size.

Under the machine-word convention, the $O(n^3)$ identifier and length
metadata is absorbed by $O(\const n^3)$.
Summing the three rounds gives
\[
O(\const n^2L+\const n^3+\sigsize n^2L).
\]
\end{proof}

  \section{Discussion}
\label{sec:discussion}

In this section, we discuss several related topics and extensions suggested by our results.

\subsection{2-Round \pc under $n\geq 5f+1$}
\label{sec:discussion:2-round}

In~\Cref{sec:pc:5f+1} we present a 2-round \pc protocol under the stronger
resilience assumption $n\ge 5f+1$.
From each first-round quorum it derives the coordinate-wise prefix
supported by at least $n-2f$ votes.
Two conflicting certified coordinates would have support sets
intersecting in at least $n-4f\ge f+1$ parties and hence in an honest
party, which is impossible.
The resulting certified prefixes are therefore mutually consistent, so a
single exchange of them suffices to derive both outputs.
The same threshold rule also makes the construction accountable
(\Cref{thm:pc:5f+1:accountable}).

Note that there remains a gap between our upper and lower bounds for 2-round \pc. On the one hand, our lower bound (\Cref{thm:lower-bound}) shows that 2-round \pc is impossible when $n\leq 4f$. On the other hand, our 2-round construction requires the stronger condition $n\geq 5f+1$. 
Closing this gap is an interesting open problem.

\subsection{Relation to Graded Consensus}
\label{sec:discussion:graded}

To show the relation between \pc and graded consensus~\cite{connected-cons-2024}, we consider \pc protocols that additionally satisfy Consistency and Availability (\Cref{def:pc:availability}).
Our \pc construction satisfies both properties.

\begin{definition}[\pccon]\label{def:pccon}
    \pccon satisfies all conditions of \pc (\Cref{def:pc}), and in addition ensures: 
    \begin{itemize}
        \item Consistency: $v_{i}^{\sf high}\sim v_{j}^{\sf high}$ for any $i,j\in \cH$.
    \end{itemize}
\end{definition}

\begin{theorem}\label{thm:pccon:1}
    \pc Protocol (\Cref{alg:pc}) solves \pccon. 
\end{theorem}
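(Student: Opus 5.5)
Since \Cref{thm:pc:1} already shows that \Cref{alg:pc} satisfies Prefix, Termination and Validity, it remains only to establish Consistency: $v_i^{\sf high}\sim v_j^{\sf high}$ for all honest $i,j$. The plan is to show that \emph{all} certified $\xmcp$ values, across every valid \qctwo that can ever be formed, lie on a single chain. The high outputs are then minimum common extensions of subsets of this chain, and so they are themselves elements of the chain.

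The first step invokes \Cref{lem:pc:1}: for any two valid certificates $\qctwo,\qctwo'$ we have $\qctwo.\xmcp\sim\qctwo'.\xmcp$. Hence the set $X$ of all $\xmcp$ values certified by valid \qctwo (over the whole execution, including those produced by Byzantine parties, since every \votethree must carry a valid \qctwo) is pairwise prefix-consistent. Because $\preceq$ restricted to a pairwise-consistent set of finite vectors is a total order, $X$ is a chain.

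The second step handles any valid \qcthree. Its entries are a subset $S\subseteq X$, which is a finite chain. Therefore $\mce{S}$ exists and equals the longest element of $S$, and likewise $\mcp{S}$ is the shortest element. For honest $i,j$ with certificates $\qcthree,\qcthree'$, we get $v_i^{\sf high}=\max S\in X$ and $v_j^{\sf high}=\max S'\in X$. Two elements of the chain $X$ are consistent, so $v_i^{\sf high}\sim v_j^{\sf high}$.

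I do not expect a real obstacle here. The only point requiring care is that \Cref{lem:pc:1} must be applied globally, to \qctwo certificates that sit inside \emph{different} \qcthree quorums, and not only within a single quorum. This is fine because the lemma's quorum-intersection argument is stated for arbitrary pairs of valid \qctwo: the $n-f$ senders of each intersect in at least one honest party, and that party broadcasts a single \votetwo. The argument also yields the slightly stronger fact that every honest high output is literally a certified $\xmcp$ value, which is the form that is useful when relating \pc to graded consensus.
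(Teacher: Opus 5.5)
Your proof is correct and matches the paper's argument. Both use \Cref{lem:pc:1} to get global pairwise consistency of all certified $\xmcp$ values, note that each high output (the minimum common extension of a \qcthree's entries) is itself one of those certified values, and conclude $v_i^{\sf high}\sim v_j^{\sf high}$. Your remark that Prefix, Termination and Validity come from \Cref{thm:pc:1} is exactly what the paper leaves implicit.
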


\begin{proof}
    By~\Cref{lem:pc:1}, $\qctwo.\xmcp\sim\qctwo'.\xmcp$ for any $\qctwo, \qctwo'$. 
    Then, by definition, $\qcthree.\xmcpmce=\mce{\{\xmcp: (\xmcp, *, *) \in \qcthree\}}=\qctwo.\xmcp$ for some \qctwo, and $\qcthree'.\xmcpmce=\qctwo'.\xmcp$ for some $\qctwo'$.
    Therefore, $\qcthree.\xmcpmce\sim \qcthree'.\xmcpmce$ for any $\qcthree,\qcthree'$. 

    According to the protocol, for any $i,j\in \cH$, $v_{i}^{\sf high}=\qcthree.\xmcpmce$ and $v_{j}^{\sf high}=\qcthree'.\xmcpmce$ for some $\qcthree,\qcthree'$. 
    Therefore $v_{i}^{\sf high}\sim v_{j}^{\sf high}$ for any $i,j\in \cH$.
\end{proof}

The communication-optimized implementation of
\Cref{sec:pc:optimized} preserves the same certified $\xmcp$ values and
selects the same shortest and longest endpoints.
Hence the preceding argument also establishes Consistency for that
implementation; it satisfies Availability by
\Cref{lem:pc:optimized:availability}.

Graded consensus is a well-studied primitive, defined for the multi-valued case in~\cite{connected-cons-2024} as follows:

\begin{definition}[Graded Consensus]
    In a graded consensus protocol, each party has an input $v$, and parties output pairs $(x,g)$ where $g\in\{0,1,2\}$ and $x=\bot$ if and only if $g=0$. A graded consensus protocol has the following properties:
    \begin{itemize}
        \item Agreement. If two honest parties output pairs $(x,g),(x',g')$, then $|g-g'|\leq 1$ and if $x\neq x'$, either $x=\bot$ or $x'=\bot$. 
        \item Validity. The output $x$ must be the input of at least one honest party or $\bot$. Moreover, if all honest parties have the same input $v$, then all honest parties output $(v,2)$.
        \item Termination. All honest parties eventually output.
    \end{itemize}
\end{definition}

\subsubsection{Reduction from \pc with Consistency and Availability to Graded Consensus.}
\label{subsec:gc:pc-to-gc}

Given a protocol $\Pi_{\sf PC}$ that solves \pccon and satisfies
Availability (\Cref{def:pc:availability}), we can construct a protocol
$\Pi_{\sf GC}$ that solves graded consensus as follows.
To execute a graded consensus instance with input $v_{i}^{\sf GC}$, 
each party $i$ runs $\Pi_{\sf PC}$ on a length-1 vector input $[v_{i}^{\sf GC}]$. 
When $\Pi_{\sf PC}$ outputs $(v_{i}^{\sf low}, v_{i}^{\sf high})$, 
party $i$ determines its output for $\Pi_{\sf GC}$ as:
\begin{itemize}
    \item Output $(\bot, 0)$ if $v_{i}^{\sf low}= v_{i}^{\sf high} = []$ (empty vector);
    \item Output $(x, 1)$ if $v_{i}^{\sf low}= []$ and $v_{i}^{\sf high} = [x]$;
    \item Output $(x, 2)$ if $v_{i}^{\sf low}= v_{i}^{\sf high} = [x]$.
\end{itemize}

Agreement and Termination follow directly from \pccon.
Availability ensures that every non-$\bot$ output $x$ is the input of an
honest party, while \pc Validity ensures grade~2 when all honest inputs
are equal.
Thus the construction satisfies Graded Consensus Validity.
Our basic and communication-optimized \pc constructions satisfy
Availability
(\Cref{lem:availability,lem:pc:optimized:availability}).
The reduction does not introduce any additional latency or communication
overhead.
Therefore, our \pc protocol, which satisfies Consistency and Availability,
yields a graded consensus protocol with
$O(n^2)$ message complexity, 
$O(\sigsize n^2 + n^3)$ communication complexity (for constant size input), 
and a latency of three rounds (\Cref{thm:pc:optimized:complexity}).
This improves the round count relative to the seven-round optimally
resilient unauthenticated construction of~\cite{connected-cons-2024},
although that construction additionally satisfies Binding.


\subsubsection{Reduction from Graded Consensus to \pc with Consistency and Availability.}
\label{subsec:gc:gc-to-pc}
To execute a \pccon instance with a length-$L$ input vector $v_{i}^{\sf in}$, 
each party $i$ runs $L$ independent instances of \emph{graded consensus} in parallel. 
In the $k$-th instance, party $i$ inputs the element $v_{i}^{\sf in}[k]$. 
Once all graded consensus instances terminate, let 
\[
g_i = (g_i[1], \ldots, g_i[L])
\quad \text{and} \quad 
v_{i}^{\sf GC} = (v_{i}^{\sf GC}[1], \ldots, v_{i}^{\sf GC}[L])
\]
denote, respectively, the vectors of grades and decided values produced by these instances. 
Set $g_i[L+1]:=0$.
Party $i$ then produces its $\Pi_{\sf PC}$ outputs as follows:
\begin{itemize}
    \item 
    $v_{i}^{\sf low}$ is the \emph{longest prefix} of $v_{i}^{\sf GC}$ for which all corresponding grades are $2$; that is,
    \[
    v_{i}^{\sf low} = [v_{i}^{\sf GC}[1], \ldots, v_{i}^{\sf GC}[k]]
    \quad \text{where } g_i[j] = 2, \forall j\leq k, \; g_i[k+1] < 2.
    \]
    \item 
    $v_{i}^{\sf high}$ is the \emph{longest prefix} of $v_{i}^{\sf GC}$ for which all corresponding grades are at least $1$; that is,
    \[
    v_{i}^{\sf high} = [v_{i}^{\sf GC}[1], \ldots, v_{i}^{\sf GC}[k']]
    \quad \text{where } g_i[j] \geq 1, \forall j\leq k', \; g_i[k'+1] < 1.
    \]
\end{itemize}

Termination follows from termination of the $L$ parallel instances.
For every coordinate in the common prefix of the honest input vectors, all
honest parties input the same value; Graded Consensus Validity makes every
honest party output that value with grade~2, proving \pc Validity.
If an honest party's low contains coordinate $j$, its grade at every
coordinate through $j$ is~2.
Graded Consensus Agreement implies that every other honest party has the
same values with grades at least~1, so its high contains that low.
This proves the Prefix property.
Finally, whenever two honest highs both contain a coordinate, their
non-$\bot$ values agree; hence the high vectors are prefix-consistent.
Graded Consensus Validity also ensures that each non-$\bot$ output
coordinate is the input of an honest party in the corresponding instance,
which proves Availability.
Thus the construction solves \pc with Consistency and Availability.

This reduction introduces no additional latency.
Hence, the lower bound from \Cref{sec:lower-bound} also applies to Graded Consensus and proves that the Graded Consensus algorithm from \Cref{subsec:gc:pc-to-gc} is latency-optimal for $n \le 4f$.

\subsection{Relation to Order Fairness}
\label{sec:discussion:order-fairness}

Our notion of censorship resistance provides bounded inclusion rather than
formal order fairness~\cite{kelkar2020order}.
Transactions assigned to distinct non-censored slots are included in slot
order, but our definition does not constrain transaction-to-slot
assignment or ordering within a slot.
With an explicit admission rule, short slot times can bound practical
inclusion delays and ordering inversions, complementing dedicated fairness
mechanisms~\cite{kelkar2020order,kelkar2023themis,cachin2022quick}.
Recent BFT SMR systems have pushed slot times toward a single message
delay~\cite{doidge2024moonshot}, and production blockchains deploy similar
techniques to drive slot times below 50ms~\cite{aptos-50-ms}.

  \section{2-Round \pc under $n\geq 5f+1$}\label{sec:pc:5f+1}

The protocol is presented in \Cref{alg:pc:5f+1} and closely mirrors the
structure of the 3-round \pc protocol (\Cref{alg:pc}).
From a first-round quorum, it certifies the coordinate-wise threshold
prefix supported by at least $n-2f$ votes.
When $n\ge5f+1$, two conflicting threshold values would require support
sets whose intersection contains an honest party, contradicting that an
honest party sends only one first-round vote.
Thus all certified prefixes are mutually consistent, and parties can
derive both outputs directly from a single second-round quorum.

The validation and instance-binding conventions of \Cref{sec:pc:simple}
apply: certificates contain messages from distinct senders, nested
certificates and claimed values are recursively checked, and signatures
bind the phase and instance identifier.
The instance identifier also commits to the proposer ranking when the
protocol is used as \pcacc.
As shown below, the existing \qctwo is a proof for both outputs, and the
coordinate-wise threshold rule makes the protocol accountable without
additional rounds or messages.

\begin{algorithm}[t!]
\caption{2-Round \pc Protocol under $n\geq 5f+1$}
\label{alg:pc:5f+1}
\begin{algorithmic}[1]
\Function{\sf QC1Certify}{\qcone}
    \State $\x := \tpref{n-2f}{\{v_i:v_i\in\qcone\}}$
        \label{line:5f+1:x}
    \Statex\Comment{Append a coordinate exactly when one value has support from at least $n-2f$ votes.}
    \State \Return \x
\EndFunction

\Statex 

\Function{\sf QC2Certify}{\qctwo}
    \State $\xmcp:=\mcp{\{\x \in \qctwo\}}$
    \Statex\Comment{\xmcp is the maximum common prefix of all \x votes in \qctwo.}
    \State $\xmce:=\mce{\{\x \in \qctwo\}}$
    \Statex\Comment{\xmce is the minimum common extension of all \x votes in \qctwo.}
    \State \Return $(\xmcp, \xmce)$
\EndFunction

\Statex 

\Upon{input $v_{i}^{\sf in}$}
    \State broadcast $\voteone:=(v_{i}^{\sf in}, \sig_i(v_{i}^{\sf in}))$
\EndUpon

\Statex 

\Upon{first obtaining $n-f$ valid \voteone from distinct parties}
    \State let \qcone be the set of $n-f$ \voteone messages
    \State $\x:={\sf QC1Certify}(\qcone)$
    \State broadcast $\votetwo:=(\x, \sig_i(\x), \qcone)$
\EndUpon

\Statex 

\Upon{first obtaining $n-f$ valid \votetwo from distinct parties}
    \State let \qctwo be the set of $n-f$ \votetwo messages
    \State $(\xmcp, \xmce):={\sf QC2Certify}(\qctwo)$
    \State {\bf output} $v_{i}^{\sf low}:=\xmcp$ and
        $v_{i}^{\sf high}:=\xmce$, retaining \qctwo as the proof of both
\EndUpon

\end{algorithmic}
\end{algorithm}

\subsection{Analysis}\label{sec:pc:5f+1:analysis}

For brevity, we use the notation $\qcone.\x:={\sf QC1Certify}(\qcone)$ and $(\qctwo.\xmcp, \qctwo.\xmce):={\sf QC2Certify}(\qctwo)$.

\begin{lemma}[Certified-prefix consistency]\label{lem:pc:5f+1:1}
For any valid first-round certificates $\qcone,\qcone'$,
\[
\qcone.\x\sim\qcone'.\x.
\]
\end{lemma}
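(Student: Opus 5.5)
We argue coordinate by coordinate, using the same quorum-intersection counting that \Cref{sec:discussion:2-round} sketches. Let $\x=\qcone.\x$ and $\x'=\qcone'.\x$, and assume without loss of generality that $|\x|\le|\x'|$. Since both are threshold prefixes, it suffices to show $\x[k]=\x'[k]$ for every $k\le|\x|$; this gives $\x\preceq\x'$ and hence $\x\sim\x'$. We suppose instead that some $k\le|\x|$ has $\x[k]=z\neq z'=\x'[k]$ and derive a contradiction.

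By the definition of $\tpref{n-2f}{\cdot}$ (line~\ref{line:5f+1:x} of \Cref{alg:pc:5f+1}), there is a set $S\subseteq\qcone$ of at least $n-2f$ distinct \voteone senders whose signed vectors have value $z$ at coordinate $k$. Likewise, there is a set $S'\subseteq\qcone'$ of at least $n-2f$ distinct senders with value $z'$ at coordinate $k$. Recursive validation guarantees that each \voteone is correctly signed, so these really are distinct senders. Since $S,S'\subseteq\Pi$ and $|\Pi|=n$,
\[
|S\cap S'|\ \ge\ 2(n-2f)-n\ =\ n-4f\ \ge\ f+1,
\]
where the last step uses $n\ge 5f+1$. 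Hence $S\cap S'$ contains at least one honest party $h$.

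Party $h$ broadcasts exactly one \voteone, namely its signed input $v_h^{\sf in}$, and verifiers count at most one \voteone per sender. The vote attributed to $h$ in $\qcone$ and the one in $\qcone'$ must therefore be the same vector, and unforgeability rules out any other signed vector carrying $h$'s signature. That single vector cannot have both $z$ and $z'$ at coordinate $k$, which contradicts $z\neq z'$.

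The argument is essentially a counting exercise. The only point needing care is that the two certificates may come from different quorums, so their supporting sets are not both contained in one set of size $n-f$. This is why the intersection bound is taken over the whole party set $\Pi$ of size $n$, and why the resilience condition $n\ge 5f+1$ rather than $n\ge 3f+1$ is needed.
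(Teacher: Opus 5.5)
Your proof is correct and follows essentially the same argument as the paper's. A conflicting coordinate would yield two support sets of size at least $n-2f$ whose intersection has size at least $n-4f\ge f+1$, so it contains an honest party that would have signed two different first-round votes; your explicit reduction to coordinatewise agreement up to the shorter length and your appeal to unforgeability only spell out what the paper leaves implicit.
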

\begin{proof}
Suppose toward contradiction that the two threshold prefixes first
conflict at coordinate $k$, with values $z\ne z'$.
Let $S$ and $S'$ be the senders supporting $z$ and $z'$ at coordinate
$k$ in $\qcone$ and $\qcone'$, respectively.
Then
\[
|S\cap S'|
\ge 2(n-2f)-n
=n-4f
\ge f+1.
\]
The intersection contains an honest party, which would have signed two
different values at the same coordinate in the same instance.
This contradicts the one-vote-per-phase rule.
\end{proof}

\begin{lemma}[Certified validity]\label{lem:pc:5f+1:2}
For every valid \qcone,
\[
\mcp{\{v_h^{\sf in}\}_{h\in\cH}}\preceq\qcone.\x.
\]
\end{lemma}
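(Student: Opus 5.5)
The plan is to mirror the proof of \Cref{lem:pc:2}, replacing the threshold $t=\lfloor(n-f)/2\rfloor+1$ by $n-2f$. Let $v_{\cH}=\mcp{\{v_h^{\sf in}\}_{h\in\cH}}$ and fix any valid \qcone. We show by induction on $k\le|v_{\cH}|$ that the threshold prefix computed in line~\ref{line:5f+1:x} has length at least $k$ and agrees with $v_{\cH}$ on coordinate $k$.

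The key counting step is the following. A valid \qcone contains $n-f$ \voteone messages from distinct senders, and at most $f$ of these senders are Byzantine. Hence at least $n-2f$ of the messages come from honest parties. Each honest sender's vote is its actual input, because messages are signed and validated. Therefore every honest vote carries $v_{\cH}[k]$ at coordinate $k$, and the value $v_{\cH}[k]$ has at least $n-2f$ supporters in \qcone.

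Uniqueness of the supported value at each coordinate also needs a check, since $\tpref{t}{\cdot}$ was defined only for $t>|S|/2$. Here $|S|=n-f$, and $n\ge5f+1$ gives $2(n-2f)=2n-4f>n-f$. So $n-2f>|S|/2$, at most one value can reach the threshold, and the threshold prefix is well defined.

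Given this, at every coordinate $k\le|v_{\cH}|$ the threshold rule appends exactly $v_{\cH}[k]$, and the prefix computation cannot stop before $|v_{\cH}|$. This yields $v_{\cH}\preceq\qcone.\x$.

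The only step needing care is the threshold well-definedness just described; everything else is routine. For the verifiable variant, the same argument applies to any publicly valid \qcone, since validation recomputes $\x$ from the enclosed signed votes.
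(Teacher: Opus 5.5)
Your proof is correct and is essentially the paper's argument: a valid \qcone contains at least $n-2f$ honest votes, all of which carry $v_{\cH}[k]$ at each coordinate $k\le|v_{\cH}|$, so the threshold rule appends every such coordinate. Your extra check that $n-2f>(n-f)/2$, which makes the threshold prefix well defined, is a detail the paper leaves implicit.
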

\begin{proof}
Let $v_{\cH}:=\mcp{\{v_h^{\sf in}\}_{h\in\cH}}$.
A \qcone contains $n-f$ distinct votes, at least $n-2f$ of which are
honest.
At every coordinate of $v_{\cH}$, these honest votes contain the same
value.
That value therefore has threshold support $n-2f$, so
${\sf QC1Certify}$ appends every coordinate of $v_{\cH}$.
\end{proof}

\begin{theorem}\label{thm:pc:5f+1:1}
Algorithm~\ref{alg:pc:5f+1} solves \pc (\Cref{def:pc}) under $n\ge 5f+1$.
\end{theorem}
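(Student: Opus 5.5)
The plan is to mirror the proof of \Cref{thm:pc:1}, with \Cref{lem:pc:5f+1:1} taking over the role of \Cref{lem:pc:1}. The outputs are now read directly from a single \qctwo rather than from a \qcthree. I would first record that \Cref{lem:pc:5f+1:1} makes every pair of certified $\x$ values consistent. A valid \qctwo contains only $\x$ values that are recursively checked to equal ${\sf QC1Certify}$ of a valid \qcone. Hence the $\x$ values inside any \qctwo are totally ordered by $\preceq$. As a consequence, $\mcp{\cdot}$ is just the shortest such value, and $\mce{\cdot}$ exists and is the longest one, so it is never $\bot$.

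\emph{Prefix.} Fix honest $i,j$ with certificates $\qctwo$ and $\qctwo'$. I will not use quorum intersection here; it suffices to apply \Cref{lem:pc:5f+1:1} across certificates. Pick any $\x\in\qctwo$ and $\x'\in\qctwo'$. Then
\[
v_i^{\sf low}=\qctwo.\xmcp\preceq \x\sim \x'\preceq \qctwo'.\xmce=v_j^{\sf high}.
\]
This chain alone is not quite enough, so I argue directly. Let $a=\qctwo.\xmcp$, which is itself some certified $\x$ (the shortest in \qctwo), and let $b=\qctwo'.\xmce$, which is the longest certified $\x$ in $\qctwo'$. Since $a\sim b$ by the lemma, either $a\preceq b$ as desired, or $b\prec a$. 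In the latter case, take $\x'\in\qctwo'$; then $\x'\preceq b\prec a$. To rule this out I would use quorum intersection: $\qctwo$ and $\qctwo'$ share an honest sender with a single $\x^*$ satisfying $a\preceq \x^*\preceq b$. This contradicts $b\prec a$, so $a\preceq b$. This honest-intersection step is the only subtle point. The same $n-f$ quorum size gives an intersection of size at least $n-2f>f$.

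\emph{Validity.} By \Cref{lem:pc:5f+1:2}, every certified $\x$ extends $v_{\cH}$, and hence so does their mcp $v_i^{\sf low}$.

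\emph{Termination.} All $n-f$ honest votes are eventually delivered in each of the two rounds. ${\sf QC1Certify}$ is a finite scan, where uniqueness at each coordinate holds because $2(n-2f)>n-f$ when $n\ge5f+1$. ${\sf QC2Certify}$ is well defined by the consistency remark above. So every honest party outputs.

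I expect the main obstacle to be justifying that the mce is well defined and that Prefix holds across different second-round certificates. Both reduce to \Cref{lem:pc:5f+1:1} together with one honest member of $\qctwo\cap\qctwo'$.
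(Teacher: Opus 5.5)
Your proof is correct and essentially matches the paper's. As in the paper, an honest sender shared by the two second-round certificates contributes a single value $x^*$ with $\qctwo.\xmcp\preceq x^*\preceq\qctwo'.\xmce$, \Cref{lem:pc:5f+1:1} guarantees that the mce exists, and Validity follows from \Cref{lem:pc:5f+1:2}. The initial chain and the case split on $b\prec a$ are unnecessary, since $a\preceq x^*\preceq b$ already gives $a\preceq b$ directly.
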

\begin{proof}
\emph{Prefix.}
Any two second-round certificates $\qctwo,\qctwo'$ contain $n-f$
distinct senders each, so their sender sets intersect in at least $n-2f$
parties and hence in an honest party.
Let $\x$ be that party's unique \votetwo value.
Then
\[
\qctwo.\xmcp\preceq\x\preceq\qctwo'.\xmce.
\]
The right-hand MCE exists because
\Cref{lem:pc:5f+1:1} makes all valid \votetwo values pairwise consistent.
This proves the Prefix property for every pair of honest outputs.

\emph{Termination.}
The $n-f$ honest parties suffice to form both certificates.
Reliable channels eventually deliver their votes, and both certification
functions terminate on finite inputs.
Thus every honest party outputs.

\emph{Validity.}
By \Cref{lem:pc:5f+1:2}, every valid \votetwo value extends
$v_{\cH}:=\mcp{\{v_h^{\sf in}\}_{h\in\cH}}$.
Their maximum common prefix therefore also extends $v_{\cH}$, proving
Validity.
\end{proof}

\paragraph{Public verification.}
Let ${\sf ValidQC2}(\id,\qctwo)$ recursively check that \qctwo contains
$n-f$ valid \votetwo messages from distinct senders for instance $\id$,
that every nested \qcone contains $n-f$ valid \voteone messages from
distinct senders, and that each claimed $\x$ equals
${\sf QC1Certify}(\qcone)$.
Define
\[
\begin{aligned}
\mathcal F_{\id}^{\sf low}(v,\qctwo)
&\Longleftrightarrow
{\sf ValidQC2}(\id,\qctwo)
\wedge (v,*)={\sf QC2Certify}(\qctwo),\\
\mathcal F_{\id}^{\sf high}(v,\qctwo)
&\Longleftrightarrow
{\sf ValidQC2}(\id,\qctwo)
\wedge (*,v)={\sf QC2Certify}(\qctwo).
\end{aligned}
\]

\begin{theorem}\label{thm:pc:5f+1:accountable}
Under $n\ge5f+1$, \Cref{alg:pc:5f+1}, augmented with its existing
\qctwo output proof, solves \pcacc.
It adds no rounds, messages, or communication.
\end{theorem}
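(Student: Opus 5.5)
The plan has three parts: verifiability, then accountability, then a cost check. The structure follows \Cref{thm:pc:verifiable,thm:pc:accountable}.

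\emph{Verifiability.} Set $\pi^{\sf low}=\pi^{\sf high}=\qctwo$ and use the predicates $\mathcal F_{\id}^{\sf low},\mathcal F_{\id}^{\sf high}$ defined above.

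Proof Completeness is immediate. Every honest output is computed as ${\sf QC2Certify}$ of a valid \qctwo that the party itself assembled, so ${\sf ValidQC2}$ accepts it.

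For Proof Soundness, take any accepted low proof \qctwo and high proof $\qctwo'$ for the same $\id$. Recursive validation ensures every nested \qcone is genuine, so \Cref{lem:pc:5f+1:1} applies to all $\x$ values appearing in either certificate. The Prefix argument of \Cref{thm:pc:5f+1:1} then goes through unchanged:
\begin{itemize}
\item the sender sets intersect in an honest party with a unique \votetwo value $\x$;
\item hence $\qctwo.\xmcp\preceq\x\preceq\qctwo'.\xmce$, where the $\sf mce$ exists by \Cref{lem:pc:5f+1:1}.
\end{itemize}
The Validity half of Soundness follows from \Cref{lem:pc:5f+1:2}: every accepted $\x$ extends $v_{\cH}$, and therefore so does their $\sf mcp$.

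\emph{Accountability.} Fix an execution with honest-proposal consistency. Suppose an accepted $v$ has $|v|<L$, set $k=|v|+1$, and assume toward contradiction that $p_k=\rank_{\id}[k]$ is honest. All honest inputs share one value $z$ at coordinate $k$. Each valid \qcone contains at least $n-2f$ honest votes, so $z$ meets the threshold $n-2f$ at coordinate $k$.

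The threshold value is unique. A competing value would also need $n-2f$ supporters among $n-f$ votes, and $2(n-2f)>n-f$ because $n>3f$. Hence every valid $\x$ that reaches coordinate $k-1$ also appends $z$ at coordinate $k$.

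By \Cref{lem:pc:5f+1:2}, honest-input coordinates up to $v_{\cH}$ are always included. Coordinates $1..k-1$ of every $\x$ in the certificate agree with $v$, since the $\x$ values are pairwise consistent and all extend $v$. We then split into two cases.

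\textbf{Case $v$ is the low ($\xmcp$).} Every $\x$ in \qctwo extends $v$ and therefore has length at least $k-1$. So every $\x$ contains $z$ at coordinate $k$, and their $\sf mcp$ has length at least $k$, a contradiction.

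\textbf{Case $v$ is the high ($\xmce$).} Some $\x$ in \qctwo equals $v$, since the $\x$ values are totally ordered by consistency. That $\x$ reached coordinate $k-1$, so it must have appended $z$. Its length is then at least $k$, again a contradiction.

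The main obstacle is this truncation argument. The crux is to confirm that the threshold rule forces extension at an honest coordinate even when earlier coordinates were certified by Byzantine-heavy supporter sets. This holds because the threshold test is coordinate-wise and each \qcone always contains at least $n-2f$ honest voters.

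\emph{Cost.} ${\sf Blame}_{\id}$ only reads the existing \qctwo and the ranking bound to $\id$. The construction therefore adds no rounds, messages, or communication.
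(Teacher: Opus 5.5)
Your proposal is correct and follows essentially the paper's proof. You obtain Proof Completeness and Soundness from recursive validation plus the honest-intersection argument of \Cref{thm:pc:5f+1:1}, and Accountability from the fact that any certified $x$ reaching coordinate $|v|$ must append the honest proposer's common value $z$ at coordinate $|v|+1$, because its nested $QC_1$ holds at least $n-2f$ honest votes. (In the low case you argue over all $x$ extending $v$ rather than using that the low is itself the shortest certified $x$, but the idea is the same.) One small slip: you say every $x$ in the certificate ``extends $v$'', which holds only for the low output; for the high output the $x$ values are prefixes of $v$. Your case split does not rely on that sentence, so the argument is unaffected.
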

\begin{proof}
The protocol solves \pc by \Cref{thm:pc:5f+1:1}.
Every honest output is computed from a valid \qctwo, proving VPC Proof
Completeness.
Recursive validation lets the arguments for Prefix and Validity above
apply to any two publicly accepted certificates, proving Proof Soundness.
Thus the construction solves \pcver.

It remains to prove Accountability.
Consider an execution satisfying honest-proposal consistency and a
publicly accepted truncated low or high output $v$.
Let $k=|v|+1$ and suppose for contradiction that
$p_k=\rank_{\id}[k]$ is honest.
Every honest first-round vote then contains the same value $z$ at
coordinate $k$.

By \Cref{lem:pc:5f+1:1}, the certified $\x$ values in a valid \qctwo are
totally ordered by the prefix relation.
Consequently, its low and high outputs are respectively the shortest and
longest certified values, so either output is an actual certified $\x$.
Let \qcone be the nested certificate for the certified value equal to
$v$.
It contains at least $n-2f$ honest votes, all carrying $z$ at coordinate
$k$.
Since $v$ reaches coordinate $k-1$, the threshold-prefix rule must append
$z$ at coordinate $k$, contradicting $|v|=k-1$.
Therefore the first excluded proposer is Byzantine, exactly as required
by \Cref{def:pcacc}.

The proof is the already assembled \qctwo, so no protocol communication
is added.
\end{proof}

\begin{theorem}\label{thm:pc:5f+1:complexity}
\Cref{alg:pc:5f+1} has round complexity $2$, message complexity
$O(n^2)$, and communication complexity
$O((\const L+\sigsize)n^3)$.
\end{theorem}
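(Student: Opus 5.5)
The plan mirrors the complexity analysis of \Cref{thm:pc:complexity}: inspect the algorithm structure for rounds and messages, then bound message sizes bottom-up along the nesting of certificates.

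\emph{Rounds and messages.} \Cref{alg:pc:5f+1} has exactly two all-to-all voting phases. In the first, each party broadcasts one \voteone; in the second, upon forming \qcone, it broadcasts one \votetwo. After collecting \qctwo, a party outputs locally with no further communication. This gives round complexity $2$. Each honest party broadcasts at most one message per phase, so each phase costs $n\cdot n$ point-to-point messages, for $O(n^2)$ messages in total.

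\emph{Communication.} I will bound sizes bottom-up.
\begin{itemize}
\item A \voteone carries a length-$L$ vector and one signature, so it has size $O(\const L+\sigsize)$.
\item A \qcone is a set of $n-f=O(n)$ such votes, so it has size $O((\const L+\sigsize)n)$.
\item A \votetwo consists of the certified vector $\x$ of length at most $L$, one signature, and one \qcone, so it has size $O((\const L+\sigsize)n)$.
\end{itemize}
The first phase therefore costs $O(n^2)\cdot O(\const L+\sigsize)=O((\const L+\sigsize)n^2)$. The second phase costs $O(n^2)\cdot O((\const L+\sigsize)n)=O((\const L+\sigsize)n^3)$. Summing the two phases yields $O((\const L+\sigsize)n^3)$.

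The output proof \qctwo is retained locally and not transmitted, so it adds nothing, even though its size is $O((\const L+\sigsize)n^2)$. The only point needing care is this accounting: the proof is not sent in the protocol. If it were sent in an outer protocol, that cost belongs to the outer protocol, as in \Cref{thm:pc:complexity}. No step here is a genuine obstacle; the argument is direct bookkeeping.
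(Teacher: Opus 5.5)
Your proof is correct and matches the paper's argument step for step. You count two all-to-all rounds for $O(n^2)$ messages, observe that the second-round \votetwo carrying a \qcone of size $O((\const L+\sigsize)n)$ dominates to give $O((\const L+\sigsize)n^3)$, and note that the locally assembled \qctwo of size $O((\const L+\sigsize)n^2)$ is never transmitted.
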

\begin{proof}
There are two all-to-all voting rounds, hence $O(n^2)$ messages.
A \voteone has size $O(\const L+\sigsize)$.
A \qcone, and therefore a \votetwo carrying one, has size
$O((\const L+\sigsize)n)$.
Multiplying the second-round message size by its $O(n^2)$ deliveries
gives $O((\const L+\sigsize)n^3)$ total communication; the first round is
smaller.
The locally assembled \qctwo proof has size
$O((\const L+\sigsize)n^2)$ and requires no additional transmission.
\end{proof}

\fi

\end{document}

\endinput